\newcommand{\Ind}{\mathds{1}}
\newcommand{\by}{\mathbf{y}}
\newcommand{\bY}{\mathbf{Y}}
\newcommand{\bx}{\mathbf{x}}
\newcommand{\bX}{\mathbf{X}}
\newcommand{\bZ}{\mathbf{Z}}
\newcommand{\bh}{\mathbf{h}}
\newcommand{\bH}{\mathbf{H}}
\newcommand{\bA}{\mathbf{A}}
\newcommand{\bN}{\mathbf{N}}
\newcommand{\ba}{\mathbf{a}}
\newcommand{\bu}{\mathbf{u}}
\newcommand{\bv}{\mathbf{v}}
\newcommand{\bt}{\mathbf{t}}
\newcommand{\be}{\mathbf{e}}
\newcommand{\bI}{\mathbf{I}}
\newcommand{\bp}{\mathbf{p}}
\newcommand{\bq}{\mathbf{q}}
\newcommand{\bS}{\mathbf{S}}
\newcommand{\ZZ}{\mathbb{Z}}
\newcommand{\RR}{\mathbb{R}}
\newcommand{\Var}{\mathrm{Var}}
\newcommand{\Cov}{\mathrm{Cov}}
\newcommand{\rank}{\mathop{\mathrm{rank}}}
\newcommand{\Vol}{\mathrm{Vol}}
\newcommand{\Expt}{\mathbb{E}}
\newcommand{\eps}{\varepsilon}
\newcommand{\Unif}{\mathrm{Unif}}
\newcommand{\poly}{\mathrm{poly}}
\newcommand{\m}{\mathcal}
\newcommand{\T}{\top}
\newcommand{\0}{\bm{0}}
\DeclareMathOperator*{\argmin}{\arg\!\min}
\newtheorem{theorem}{Theorem}
\newtheorem{proposition}{Proposition}
\newtheorem{definition}{Definition}
\newtheorem{lemma}{Lemma}
\newenvironment{proof}[1][Proof]{\noindent\textbf{#1.} }{\ \rule{0.5em}{0.5em}}
\newcommand{\bSigma}{{\bf{\Sigma}}}
\newcommand{\bSigmaTruc}{\bm{\Sigma}_{\mathrm{Ball}}}
\newcommand{\hbSigma}{\widehat{\bSigma}}
\newcommand{\hbSigmaX}{\hbSigma_{\bX}}
\newcommand{\SphereK}{\mathbb{S}^{k-1}}
\newcommand{\VolK}{\m{V}_k}
\newcommand{\Id}{\bm{I}}
\newcommand{\bXTruc}{\bX_{\mathrm{Ball}}}
\newcommand{\Spike}{\nu}
\newcommand{\SpikeVec}{\bu}
\newcommand{\SpikeVecEst}{\widehat{\bu}}
\newcommand{\PBall}{p_{\mathrm{Ball}}}
\newcommand{\Kpicked}{\m{K}}
\newcommand{\Kball}{\m{K}_{\mathrm{Ball}}}
\newcommand{\Kgood}{\m{K}_{\mathrm{Good}}}
\newcommand{\Kbad}{\m{K}_{\mathrm{Bad}}}
\newcommand{\epsEst}{\eps_{\mathrm{CovEst}}}
\newcommand{\epsPick}{\eps_{\mathrm{Pick}}}
\newcommand{\erf}{\mathrm{erf}}
\newcommand{\Cube}{\m{Q}_{\Delta}}
\newcommand{\ConstDelta}{C_{*}}
\begin{document}

\twocolumn[

\aistatstitle{Spiked Covariance Estimation from Modulo-Reduced Measurements}

\aistatsauthor{ 
	Elad Romanov 
	\And 
	Or Ordentlich 
	}

\aistatsaddress{
	Hebrew University of Jerusalem
	\And 
	Hebrew University of Jerusalem
	} ]

\begin{abstract}
	Consider the {rank-1} spiked model: ${\bX=\sqrt{\Spike}\xi \SpikeVec + \bZ}$, where $\Spike$ is the spike intensity, $\SpikeVec\in\SphereK$ is an unknown direction and ${\xi\sim \m{N}(0,1),\bZ\sim \m{N}(\0,\Id)}$. 
	Motivated by recent advances in analog-to-digital conversion, we study the problem of recovering $\bu\in \SphereK$ from $n$ i.i.d. \emph{modulo-reduced} measurements $\bY=[\bX]\mod \Delta$, focusing on the high-dimensional regime ($k\gg 1$). We {develop and analyze} an algorithm that, for most directions $\bu$ and $\Spike=\poly(k)$, {estimates} $\bu$ {to high accuracy} using $n=\poly(k)$ measurements, provided that $\Delta\gtrsim \sqrt{\log k}$.  Up to constants, our algorithm {accurately estimates $\bu$} at the smallest possible $\Delta$ that allows (in an information-theoretic sense) to recover $\bX$ from $\bY$.
	 {A key step in our analysis involves estimating the probability that a line segment of length $\approx\sqrt{\nu}$ in a random direction $\bu$ passes near a point in the lattice $\Delta \mathbb{Z}^k$.} 
	 {Numerical experiments show that the developed algorithm performs well even in a non-asymptotic setting.}
\end{abstract}

\section{Introduction}\label{sec:intro}

We consider the problem of estimating a spiked covariance matrix from Gaussian modulo-folded measurements. Let $\SpikeVec\in\SphereK$ be an unknown direction, and $\Spike>0$ be the signal-to-noise (SNR) ratio. Consider the spiked covariance matrix
\begin{equation}\label{eq:Sigma}
	\bSigma = \Spike \SpikeVec\SpikeVec^\T + \Id \,,
\end{equation}
and denote $\bX\sim \m{N}(\0,\bSigma)$. Equivalently, one may write
\begin{equation}\label{eq:X}
	\bX = \sqrt{\Spike}\xi \bu + \bZ\,,
\end{equation}
where $\xi,\bZ$ {have} $\m{N}(0,1)$ entries; the one dimensional component $\sqrt{\Spike}\xi\bu$ is often thought of as the ``signal'', whereas $\bZ$ is thought of as {``noise''}. In this paper, we consider the problem of estimating $\bu$ from $n$ independent and modulo-reduced measurements of $\bX$. Let $\Delta>0$ be the \emph{dynamic range}, and for ${X\in \RR}$, denote the modulo operation by
\begin{eqnarray}
	Y = [X]\bmod \Delta \in \left[-\frac12 \Delta,\frac12\Delta\right)\,,
\end{eqnarray}
so that $Y$ is the unique number in the half-open interval such that $X-Y\in \Delta \ZZ$. For a vector ${\bX\in\RR^k}$, $\bY = [\bX]\bmod \Delta$ is defined by modulo-reducing each coordinate separately. 
In the setup we consider, one is given $n$ independent copies of $\bY$, denoted by
$\by_1,\ldots,\by_n$, and wishes to estimate the unknown direction $\SpikeVec\in\SphereK$. Throughout, we denote by $\bx_1,\ldots,\bx_n$ independent copies of $\bX$, such that $\by_i=[\bx_i]\bmod \Delta$. 
See Figure~\ref{fig:Figure1} for a graphical illustration, in $k=2$ dimensions. 

\begin{figure}[h]
	\centering
	\includegraphics[width=0.5\textwidth,height=0.2\textheight,keepaspectratio]{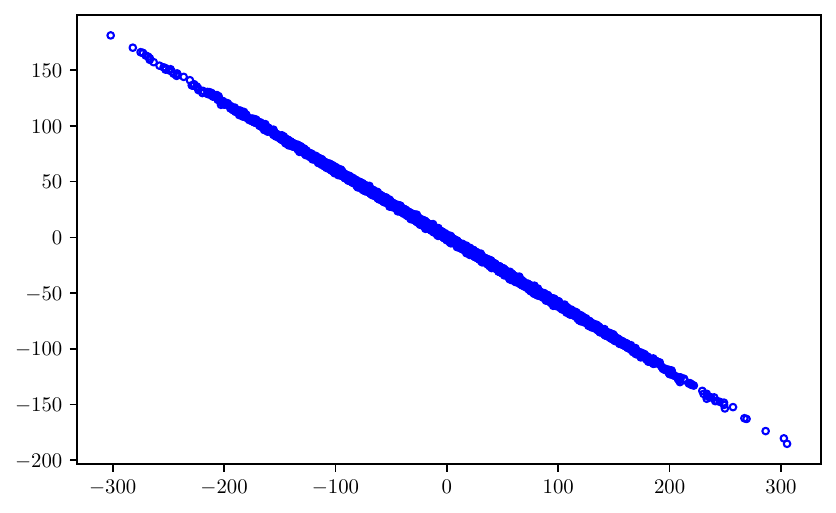}
	\includegraphics[width=0.5\textwidth,height=0.2\textheight,keepaspectratio]{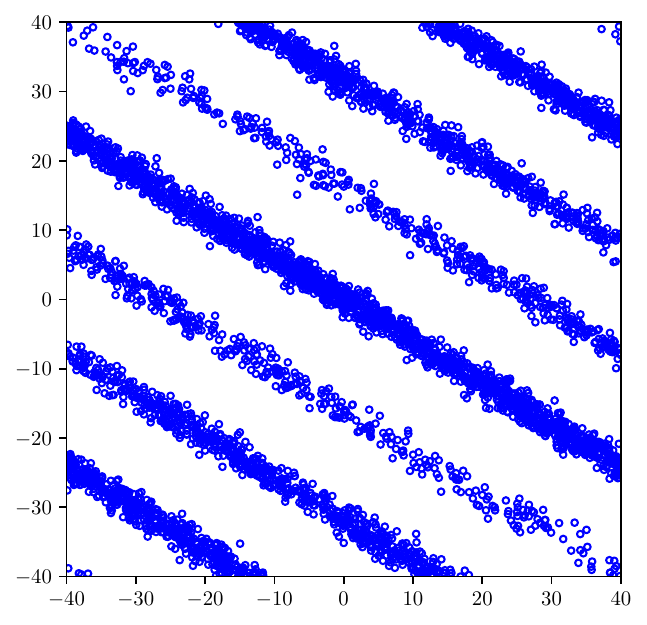}
	\caption{A typical problem instance in $k=2$ dimensions. Top: A point cloud, corresponding to $n=5000$ i.i.d. samples from $\bX\sim \m{N}(\0,\Spike\SpikeVec\SpikeVec^\T + \Id)$, for $\Spike=10^4$ and some fixed $\SpikeVec\in\SphereK$. Bottom: The modulo-reduced point cloud, with $\Delta=80$.}
	\label{fig:Figure1}
\end{figure}

\paragraph{Motivation.}
 Our motivation for considering this problem is driven by recent developments in signal processing. Analog-to-digital converters (ADCs), devices that convert analog (continuous) signals into digital (discrete, e.g. bits) signals, are an essential component in virtually all modern communication devices. 
 From a mathematical perspective, ADCs are set out to solve {\it essentially} the following problem: Given a vector-valued random variable $\bX\in \RR^k$, find a quantizer (binning scheme), $\phi:\RR^k\to \RR^k$, with a finite range, $|\mathrm{range}(\phi)|\le 2^B$ ($B$ being the allowable representation length in bits), so as to minimize the quantization error: $\Expt \|\bX-\phi(\bX)\|^2$. 
 Quantization is an extensively studied problem, and its fundamentals limits (in information theory: ``rate-distortion theory''), under setups of varying generality, are largely understood; see, e.g., \cite{cover2012elements,gersho2012vector}. 
 
Optimal vector quantizers (that achieve the fundamental limits) typically involve rather complicated constructions, that depend intricately on the \emph{exact} statistics of $\bX$. Since ADCs are implemented in mixed analog-digital circuits, sophisticated vector quantizers are prohibitive and the design is often 
restricted to architectures of a scalar product form: ${\phi(\bX)=(\phi_1(X_1),\ldots,\phi_k(X_k))}$ where ${\phi_1,\ldots,\phi_k:\RR\to\RR}$. Perhaps the simplest -- and most popular -- architecture in practice is a uniform scalar quantizer, determined by its dynamic range $\Delta>0$ and bit-rate $b=B/k$.  The quantizer divides  the interval $[-\Delta/2,\Delta/2]$ into $2^{b}$ intervals of equal size $\Delta 2^{-b}$, so that the scalar quantizer $\phi_1=\ldots=\phi_k=\phi$ maps $X\in \RR$ to its closest interval center. Note that for this scheme 
to attain quantization error that vanishes as the quantization rate $b$ increases, 
the dynamic range must be $\Delta \gtrsim \max_{1\le i \le k}\sqrt{\Var(X_i)}$.\footnote{If $X$ falls inside $[-\Delta/2,\Delta/2]$, then $|X-\phi(X)|\le \Delta 2^{-b}$. For the expected error to be of the same order, the probability of a {\it saturation}, namely $|X|>\Delta/2$, has to be small.}

Although simple to implement, the uniform quantizer can be pronouncedly sub-optimal for vector-valued signals, as it cannot leverage the cross-coordinate correlations that often occur in real-world applications. An important use-case in digital communications is Massive MIMO, where typically the number of users is much smaller than the number of receive antennas; this  results in signals $\bX$ that have strong cross-entry correlations.\footnote{A multi-user MIMO channel is modeled by ${\bX=\bH \bS + \bZ}$, where $k$ the number of receive antennas, $m$ is the number of transmitting users, each equipped with a single antenna, and $\bH=[\bh_1|\cdots|\bh_m]\in \RR^{k\times m}$ is the channel matrix ($\bh_i\in\RR^{k}$ represents the channel gains from transmitter $i$ to the receiver). The  vector  $\bS=[S_1,\ldots,S_m]\in \RR^{m}$ represents the transmissions of the $m$ users, and $\bZ\in \RR^k$ is white noise. The ``signal'' part, $\bH\bS$, lives inside an $m$-dimensional subspace in $\RR^k$, where typically $m \ll k$. One example, among many, for the extreme case of rank $m=1$ MIMO (corresponding to model (\ref{eq:X}) exactly), is in low-earth orbit (LEO) communications, where a phased array receiver 
is used to track a rapidly moving satellite.  } 
Thus, a quantization scheme that can exploit these statistical inter-dependencies, while retaining the simplicity of the uniform quantizer, is highly desired. 

A recently proposed architecture, ``modulo-ADCs'' \cite{ordentlich2018modulo}, attempts to address these issues. Their idea is rather simple: do not \emph{truncate} $\bX$ onto $[-\Delta/2,\Delta/2]^k$, as the uniform quantizer does;  
instead, apply \textbf{modulo-reduction} $\bY=[\bX]\bmod \Delta \in [-\Delta/2,\Delta/2]^k$ and then quantize as before. 
For this idea to work, one clearly need some means of ``unwrapping'' $\bX$ from $\bY$ (with high probability).
When the coordinates of $\bX$ are independent and unimodal, with the mode at $0$ (for example, a centered Gaussian), it is easy to see that the best estimator for $\bX$ from $\bY$ (in the sense of error probability) is just $\widehat{\bX}=\bY$. Thus, a coordinate $X_i$ cannot be recovered once it saturates the ADC dynamic range, $|X_i|\ge \Delta/2$; so
to consistently undo the modulo, one needs $\Delta \gtrsim \max_{1\le i \le k}\sqrt{\Var(X_i)}$, and the scheme has no advantage over the standard uniform quantizer. It turns out, however, that when $\bX$ has strong correlations, it is often possible to consistently unwrap at \emph{substantially} smaller values of $\Delta$; see next section.

\subsection{Related work}

We start with very brief background on the spiked model, Eq. (\ref{eq:X}). In the high-dimensional statistics ($k\approx n$, $k,n\to\infty$) literature, the spiked model was popularized by \cite{johnstone2001distribution}, who studied the largest eigenvalue of the sample covariance matrix $\hbSigma=\frac1n \sum_{i=1}^n \bx_i\bx_i^\T$. Subsequent advances in random matrix theory \cite{baik2005phase,paul2007asymptotics} {characterized} the behavior of PCA (namely, the relation between the principal components of $\bSigma$ and its empirical counterpart $\hbSigma$) rather precisely. Since then, a vast literature on the spiked model (and variations thereof) has emerged -- which we make no pretense to survey here; as an entry point, geared towards statisticians, see \cite[Chapter 8]{wainwright2019high}. We cite the following minimax lower bound for the spike estimation problem (without modulo-reduction) \cite[Example 15.19]{wainwright2019high}:
\begin{equation}
	\label{eq:minimax-rate}
	\left\| \SpikeVec-\SpikeVecEst \right\| \gtrsim  \left(\frac{\sqrt{1+\Spike}}{\Spike} \sqrt{\frac{k}{n}}\right) \wedge 1 \,.
\end{equation}
In the regime $k=O(n)$, 
this rate is attained, up to prefactors, 
by PCA ($\SpikeVecEst$ {taken to be} the largest eigenvector of $\hbSigma$), see \cite[Corollary 8.7]{wainwright2019high}.

Moving on,
there has recently been a great deal of activity in the signal processing community around recovery from modulo-reduced measurements~\cite{bhandari2017unlimited,ordentlich2018modulo,bkr18isit,gbk19,romanov2019above,bk19,bhandari2020unlimited,bkp21,weiss2021blind}.
Most relevant to this paper is a line of works dealing with recovery from modulo-reduced measurements, and motivated by the modulo-ADC architecture described before \cite{oe17,ordentlich2018modulo}. The setting is this: the source is \emph{Gaussian} $\bX\sim \m{N}(\0,\bSigma)$, with $\bSigma$ some covariance matrix (not necessarily spiked); one observes modulo-reduced measurements $\bY=[\bX]\mod \Delta$, and wishes to recover $\bX$ itself (with high probability).
How large should $\Delta$ be so that consistent recovery is possible, in an information-theoretic sense? 
When it \emph{is} possible, how could one do so \emph{practically}? (Assuming $\bSigma$ is known? And when it is \emph{not}?) 
The answers, it turns out, depend rather intricately on the \emph{diophantine} properties of the matrix $\bSigma$. 

Let us start with the fundamental limits. A simple observation 
is that when $\Delta\gtrsim \sqrt{\log k} \cdot \max_{1\le \ell \le k}\sqrt{\bSigma_{\ell,\ell}}$ ($\bSigma_{\ell,\ell}$ being the variance of the $\ell$-th coordinate), one has $\bX=\bY$ with high probability, so that consistent recovery is straightforward. It is easy to see that when $\bX$ is white, in other words $\bSigma\propto \Id$, this requirement is in fact tight. For general $\bSigma$, one may readily show that the maximum a posteriori probability (MAP) estimator for $\bX$ given $\bY$ is 
\begin{equation}\label{eq:MAP}
	\widehat{\bX}_{\mathrm{MAP}}(\bY=\by) = \argmin_{\bx\,:\,\bx-\by\in\Delta \ZZ^k} \bx^\T \bSigma^{-1}\bx\,,
\end{equation}
that is, one needs to minimize a quadratic form over the coset of $\by$. Searching over the coset directly {(and consequently, computing $\widehat{\bX}_{\mathrm{MAP}}$ exactly)} is not computationally tractable, in all but the simplest cases; nonetheless, since $\widehat{\bX}_{\mathrm{MAP}}$ is optimal in the sense of error probability, its performance characterizes the information-theoretic limits of the problem. The latter has a rather elegant geometric interpretation. Let $\m{L}=\Delta\bSigma^{-1/2}\ZZ^k\subseteq \RR^k$ be the lattice generated by $\Delta\bSigma^{-1/2}\in \RR^{k\times k}$, and let $V_0\subseteq \RR^k$ be the Voronoi cell of $\0\in \m{L}$. Then \cite{romanov2021blind} the success probabililty of (\ref{eq:MAP}) is the Gaussian measure of $V_0$: 
\[
p_{\mathrm{MAP}}=\Pr\left( \widehat{\bX}_{\mathrm{MAP}}  =\bX\right) = \Pr_{\bZ\sim \m{N}(\0,\Id)} (\bZ\in V_0) \,.
\]
As a corollary, it is not hard to show that\footnote{$|\bSigma|$ denotes the determinant of $\bSigma$.} $\Delta \lesssim |\bSigma|^{1/2k}$ implies that $p_{\mathrm{MAP}}=o(1)$; see also Proposition~\ref{prop:lower-bound}. When the lattice $\m{L}$ is a \emph{uniformly random lattice}, sampled, up to normalization, from the Haar measure over $\faktor{\mathrm{SL}_k(\RR)}{\mathrm{SL}_k(\ZZ)}$ (also called Haar-Siegel measure), one can show that $V_0$ is with high probability ``sufficiently ball-like'', so that $\Delta\gtrsim |\bSigma|^{1/2k}$ is also a sufficient condition. Random lattices have played a prominent role in the lattice coding literature \cite{zamir2014lattice}, which is closely related to the present line of work. An important point is that ``natural'' random matrix ensembles, such as the spiked ensemble (\ref{eq:Sigma}) with $\bu\sim \Unif(\SphereK)$, \textbf{do not} 
correspond to the Siegel-Haar measure on the space of lattices.
In \cite{domanovitz2017outage} the authors demonstrate that certain orthogonally-invariant ensembles, that arise in channel coding theory, indeed allow for consistent recovery with $\Delta$ not much larger than $|\bSigma|^{1/2k}$. In particular, for the spiked ensemble (\ref{eq:Sigma}), they show that with high probability over $\SpikeVec\sim \Unif(\SphereK)$, the error probability is small whenever $\Delta\gtrsim C(k)|\bSigma|^{1/2k}=C(k)(1+\Spike)^{1/2k}$, where $C(k)$ grows exponentially fast in $k$, but does not depend on the SNR $\Spike$.

As for practical recovery algorithms, let us start by assuming $\bSigma$ is \textbf{known}. {As already mentioned, } computing the MAP estimator directly is intractable; instead \cite{oe17} proposed to use a sub-optimal estimator, the so-called {\it Integer Forcing} (IF) decoder. The idea is to find an invertible integer matrix $\bA=\left[ \ba_1,\ldots,\ba_k \right]$ so as to minimize the maximal variance:
{
\begin{equation}\label{eq:IF}
	\bA_{\mathrm{IF}}=\argmin_{\substack{\bA\in\ZZ^{k\times k},\\\rank(\bA)=k}} \max_{1\le \ell \le k} {\ba_\ell ^\T \bSigma \ba_\ell}\,.
\end{equation}
}
The {corresponding} maximal deviation, $m_k(\bSigma)=\max_{1\le \ell \le k} \sqrt{\ba_\ell ^\T \bSigma \ba_\ell}$, is called the $k$th successive minimum of the lattice $\bSigma^{1/2}\ZZ^k$. 
Since $[\bA\bY] \bmod \Delta = [\bA \bX] \bmod \Delta$, one can reliably recover $\bX$ from $\bY$, using $\widehat{\bX}_{\mathrm{IF}} = \bA_{\mathrm{IF}}^{-1}\left([\bA_{\mathrm{IF}}\bY]\mod \Delta \right)$, whenever $\Delta\gtrsim m_k(\bSigma)\sqrt{\log k}$.
 {Of course,}
to compute the IF decoder, Eq. (\ref{eq:IF}), one clearly needs to know $\bSigma$.\footnote{An important caveat is that Eq.(\ref{eq:IF}) is actually a computationally hard problem, and may only be solved exactly for very small $k$. In practice, one usually solves this {\it approximately}, using a lattice reduction algorithm, like the Lenstra-Lenstra-Lov{\'a}sz (LLL) algorithm \cite{lenstra1982factoring}. Observe that if one a priori restricts the minimization to $\bA\in\mathrm{SL}_k(\ZZ)$, then Eq. (\ref{eq:IF}) is equivalent to finding a shortest basis for the lattice $\bSigma^{1/2}\ZZ^k$.} In some applications, for example in wireless communications (where $\bSigma$ depends on the channel matrix, which rapidly changes over time) \cite{tse2005fundamentals}, this is \emph{not} a reasonable assumption. In \cite{romanov2021blind}, the authors propose a \textbf{blind} unwrapping algorithm, that does not know $\bSigma$ beforehand, in a setting where one needs to simultaneously unwrap many i.i.d. signals $\by_1,\ldots,\by_n$. A natural step towards that end is to estimate $\bSigma$ from the (modulo-reduced) data, from which the integer-forcing decoder (\ref{eq:IF}) could be computed. Alas, directly computing the maximum likelihood estimator (MLE) of $\bSigma$ from modulo-reduced measurements is not computationally feasible.
 Instead, they propose an algorithm which iteratively alternates between 1) A covariance estimation step, where a certain ``proxy'' of $\bSigma$ is estimated; 2) An integer forcing decoder, computed from that proxy; the idea is to gradually ``whiten'' the entire dataset, effectively computing the IF decoder ``in small steps''. They prove a result of the following form:
%
 when the error of the informed IF decoder (Eq. (\ref{eq:IF})) is {\it small enough}, then the error of the adaptive algorithm is {\it essentially comparable} to it, up to dimension-dependent prefactors.
However, their algorithm is only suited to rather modest $k$, as seen both in the analysis (the prefactors are exponential in $k$) and the numerical experiments. The problem lies with their covariance estimation procedure, whose performance breaks down rapidly as the dimension increases.
This is the starting point for the present paper.

\paragraph{Our contributions.} We propose a computationally tractable algorithm to estimate the spike ${\SpikeVec\in \SphereK}$ from modulo-reduced measurements, under the spiked covariance model Eq. (\ref{eq:Sigma}) and in high dimension $k$. We show that for \emph{most} directions $\SpikeVec$ (formally: with high probability over ${\bu\sim \Unif(\SphereK)}$), estimation is possible with $n=\poly(k)$ samples, under essentially the smallest $\Delta$ (up to constants) that allows for consistent unwrapping. 
Thus, in this setting, we \emph{provably} overcome the curse of dimensionality suffered by the algorithm of \cite{romanov2021blind}.
While we do not directly tackle the unwrapping problem, note that in applications where the SNR $\Spike$ is approximately known, the algorithm readily yields an estimate for $\bSigma$, from whence one could compute the IF decoder (\ref{eq:IF}).  
Our numerical experiments below show that this method attains an unwrapping error probability that is not far from that of the informed IF decoder.

\paragraph{Notation.} For sequences $a_k,b_k$ we use the following standard notation: $a_k\vee b_k = \max\{a_k,b_k\}$, $a_k\wedge b_k=\min\{a_k,b_k\}$. By $a_k\lesssim b_k$ we mean that $a_k\le Cb_k$ for some \emph{universal} constant $C>0$; we write $a_k\approx b_k$ whenever both $a_k\lesssim b_k$ and $b_k\lesssim a_k$. We also use \mbox{big-O} notation; if $M$ is a parameter, we use $a_k=O_M(b_k)$ to signify that the constants might depend on $M$. For a vector $\bv\in \RR^k$, $\|\bv\|$ denotes its $\ell_2$ (Euclidean) norm.

\section{Proposed method}
 
 
 \paragraph{An observation.}
Our algorithm is based on the following observation: the eigen-structure of the covariance matrix of $\bX$ is preserved when truncated onto a ball. 
Set $R>0$ a  truncation radius. For ${\bX\sim \m{N}(\0,\bSigma)}$, let $\bXTruc$ be its spherically-truncated version:
 for ${\m{S}\subseteq \RR^k}$,
 \begin{align*}
\Pr(\bXTruc\in \m{S}) 
&= \Pr(\bX\in\m{S}|\bX\in\m{B}(\0,R)) \\
&=\frac{\Pr(\bX\in \m{S}\cap \m{B}(\0,R))}{\Pr(\bX\in \m{B}(\0,R))} \,.
 \end{align*}
Observe that $\Expt[\bXTruc]=\0$, since $\bXTruc$ is symmetric. Denote the covariance by $\bSigmaTruc=\Expt[\bXTruc\bXTruc^\T]$.
\begin{proposition}\label{obs:trunc-eigs}
	Let $\bSigma$ be any covariance matrix, with (orthonormal) eigenvectors ${\bu_1,\ldots,\bu_k\in\SphereK}$ and corresponding eigenvalues ${\lambda_1\ge \ldots \ge \lambda_k \ge 0}$. Then 
	\begin{enumerate}
		\item \label{item:obs1} The basis $\bu_1,\ldots,\bu_k$ diagonalizes $\bSigmaTruc$. Denote $g_1,\ldots,g_k\overset{i.i.d.}{\sim} \m{N}(0,1)$; the respective eigenvalues are:
		\begin{equation}\label{eq:trunc-eigs}
			\mu_i = \Expt \left[ \lambda_i g_i^2 \,\Big| \, \sum_{i=1}^k \lambda_i g_i^2 \le R^2\right]\,.
		\end{equation}
		\item \label{item:obs2} The ordering is preserved: ${\mu_1\ge \ldots \ge \mu_k}$.
	\end{enumerate}
\end{proposition}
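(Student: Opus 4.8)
The plan is to work in the eigenbasis of $\bSigma$ and exploit the coordinate-wise sign symmetry of the truncation event for part~\ref{item:obs1}, and a two-variable likelihood-ratio comparison for part~\ref{item:obs2}. Let $\bU=[\bu_1|\cdots|\bu_k]$ be the orthogonal matrix of eigenvectors and $\bLambda=\diag(\lambda_1,\ldots,\lambda_k)$, so that $\bX\overset{d}{=}\bU\bLambda^{1/2}\bG$ with $\bG=(g_1,\ldots,g_k)$ having i.i.d.\ $\m{N}(0,1)$ coordinates. Orthogonality of $\bU$ gives $\|\bX\|^2=\sum_{i=1}^k\lambda_i g_i^2$, so the event $\{\bX\in\m{B}(\0,R)\}$ matches the event $\m{E}:=\{\sum_{i=1}^k\lambda_i g_i^2\le R^2\}$ for $\bG$; hence $\bXTruc\overset{d}{=}\bU\bLambda^{1/2}\bG'$, where $\bG'$ is $\bG$ conditioned on $\m{E}$, and therefore $\bSigmaTruc=\bU\,\bLambda^{1/2}\bM\,\bLambda^{1/2}\bU^\T$ with $\bM=\Expt[\bG'\bG'^\T]$, i.e.\ $M_{ij}=\Expt[g_ig_j\mid\m{E}]$.

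For part~\ref{item:obs1}, observe that $\m{E}$ is invariant under the sign flip $g_i\mapsto-g_i$ with all other coordinates held fixed, and so is the law of $\bG$, while $g_ig_j\mapsto-g_ig_j$ for $j\ne i$; hence $M_{ij}=0$ for $i\ne j$, so $\bM$ is diagonal. Then $\bSigmaTruc=\bU\,D\,\bU^\T$ with $D=\bLambda^{1/2}\bM\bLambda^{1/2}$ diagonal, which is exactly the claim that $\bu_1,\ldots,\bu_k$ diagonalize $\bSigmaTruc$, with eigenvalues $D_{ii}=\lambda_i\,\Expt[g_i^2\mid\m{E}]=\mu_i$, matching~(\ref{eq:trunc-eigs}). (If some $\lambda_i=0$ then $\mu_i=0$ and there is nothing to check; otherwise one may assume $\bSigma\succ 0$.)

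For part~\ref{item:obs2}, fix $i<j$, so $a:=\lambda_i\ge\lambda_j=:b$, and write $\mu_i-\mu_j=\Expt[(\lambda_ig_i^2-\lambda_jg_j^2)\Indic{\m{E}}]/\Pr(\m{E})$. Conditioning on all coordinates $g_\ell$ with $\ell\notin\{i,j\}$, it suffices to establish the two-variable inequality
\begin{equation*}
	\Expt\!\left[(ag^2-bh^2)\,\Indic{ag^2+bh^2\le t}\right]\ge 0 \qquad\text{for all } a\ge b\ge 0,\ t\ge 0,
\end{equation*}
with $g,h$ i.i.d.\ $\m{N}(0,1)$, applied with $t=R^2-\sum_{\ell\notin\{i,j\}}\lambda_\ell g_\ell^2$ (the indicator vanishes when $t<0$). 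I would prove this by passing to $U=ag^2$, $V=bh^2$, whose joint density on the positive quadrant is $f(u,v)\propto (uv)^{-1/2}\exp(-u/(2a)-v/(2b))$: the region $\{u+v\le t\}$ is invariant under the swap $(u,v)\mapsto(v,u)$, so symmetrizing reduces the integral to $\iint_{u>v,\,u+v\le t}(u-v)\,[f(u,v)-f(v,u)]\,du\,dv$, and on $\{u>v\}$ one has $f(u,v)\ge f(v,u)$ precisely because $1/b\ge1/a$, making the integrand nonnegative. (The cases $b=0$ and $a=b$ are immediate.) Taking expectation over the remaining coordinates gives $\mu_i\ge\mu_j$, hence $\mu_1\ge\cdots\ge\mu_k$.

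The genuinely substantive step is the last monotonicity lemma: part~\ref{item:obs1} is a one-line symmetry argument, whereas the ordering in part~\ref{item:obs2} uses the specific log-concave, likelihood-ratio-ordered structure of the (scaled) chi-square densities, and the bookkeeping of conditioning on the remaining coordinates together with the degenerate cases ($\lambda_j=0$, or $t<0$) is where some care is needed.
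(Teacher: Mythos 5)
Your proof of Item~\ref{item:obs1} is correct and is essentially the paper's own argument: decompose $\bX$ in the eigenbasis, note that the conditioning event $\{\sum_\ell \lambda_\ell g_\ell^2\le R^2\}$ is invariant under the single-coordinate sign flip $g_i\mapsto -g_i$, and conclude that the off-diagonal terms $\Expt[g_ig_j\mid\m{E}]$ vanish. Where you genuinely diverge is Item~\ref{item:obs2}: the paper does not prove it at all, instead citing Proposition~3.3 of Palombi et al., whereas you supply a self-contained argument. Your argument checks out. The reduction to the two-variable inequality by conditioning on $(g_\ell)_{\ell\notin\{i,j\}}$ is sound (with the indicator vanishing when the residual budget $t$ is negative), and the symmetrization is valid: for $U=ag^2$, $V=bh^2$ the joint density is $f(u,v)=C\,(uv)^{-1/2}\exp(-u/(2a)-v/(2b))$, the region $\{u+v\le t\}$ is swap-invariant, and on $\{u>v\}$ one has $f(u,v)\ge f(v,u)$ because the difference of exponents equals $(u-v)\bigl(\tfrac{1}{2b}-\tfrac{1}{2a}\bigr)\ge 0$ for $a\ge b>0$; the degenerate cases $a=b$ and $b=0$ are handled. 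What your approach buys is a short, elementary, fully self-contained proof of the eigenvalue ordering that avoids an external dependency; what the paper's citation buys is access to the stronger structural results of Palombi et al.\ (e.g.\ strict monotonicity of $\lambda\mapsto\mu$ and invertibility of the map $\bm{\lambda}\mapsto\bm{\mu}$), which the paper also uses elsewhere (in the proof of Lemma~\ref{lem:spectral-gap}) and which your rearrangement argument does not by itself deliver.
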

Proposition~\ref{obs:trunc-eigs} is not new by any means. It has appeared before in \cite{palombi2012numerical}, which considered covariance estimation from spherically-truncated Gaussian measurements (see also discussion later in this section). The proof of Item~\ref{item:obs1} is rather trivial; for completeness, we provide a short proof  (Appendix~\ref{sec:proof-obs:trunc-eigs}). Item \ref{item:obs2} is considerably less so; we refer to \cite[Proposition 3.3]{palombi2012numerical} for the details.


\paragraph{The algorithm.}
We draw inspiration from Figure~\ref{fig:Figure1}. For ``most'' directions $\SpikeVec\in\SphereK$, 
the points $\{\by_i\}_{i=1}^n$ are arranged, essentially, in parallel and separated stripes. The central stripe (that crosses the origin) consists of points that have not undergone folding, $\by_i=\bx_i$. Picking only points 
inside a \emph{small enough} ball ${\m{Y}_{\mathrm{Ball}}=\{\by_i\}_{i=1}^n \cap \m{B}(\0,R)}$, we therefore obtain, approximately, an i.i.d. sample from $\bXTruc$. By Proposition~\ref{obs:trunc-eigs}, the leading eigenvector of $\bSigmaTruc$ is $\SpikeVec$, and therefore PCA with $\m{Y}_{\mathrm{Ball}}$ should yield a consistent estimator (as $n\to\infty$).
See Figure~\ref{fig:Figure2} for a graphical illustration; and Algorithm~\ref{alg:Alg} for a formal description.

\begin{algorithm}
	\SetAlgoLined
	\caption{The proposed algorithm}
	\label{alg:Alg}
	\KwIn{Samples: $\{\by_i\}_{i=1}^n$; Radius: $R>0$.
	}
	
	Pick samples inside ball: $\m{K}=\left\{ i\in [n]\,:\quad \by_i\in \m{B}(\0,R) \right\}$. \\
	Form sample covariance:
	$
	\hbSigma = \sum_{i\in \m{K}} \by_i\by_i^\T \,.
	$		
	\Return{$\SpikeVecEst$}, principal eigenvector of $\hbSigma$.
\end{algorithm}

\begin{figure}[h]
	\vspace{.3in}
	\centering
	\includegraphics[width=\textwidth,height=0.2\textheight,keepaspectratio]{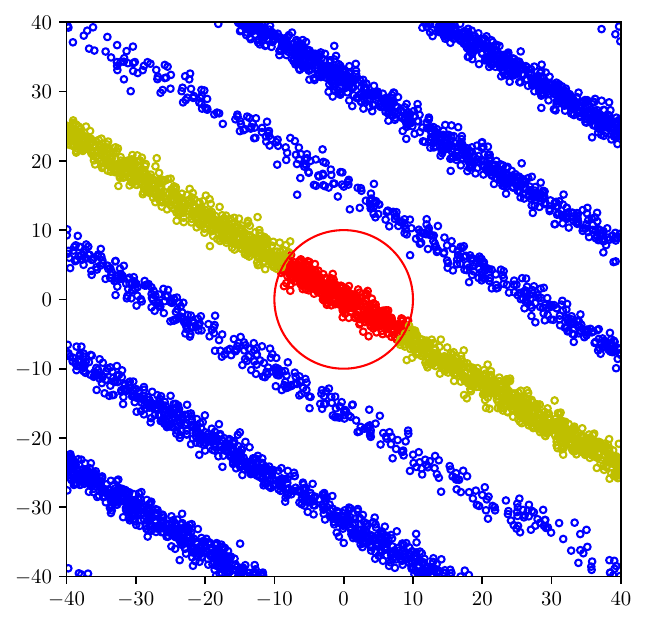}
	\vspace{.1in}
	\caption{An illustration of Algorithm~\ref{alg:Alg}. Red: points inside a small ball; Yellow: the central stripe (${\bx_i=\by_i}$).}
	\label{fig:Figure2}
\end{figure}

\paragraph{The truncation radius.} 
{In spite of its seeming simplicity, the behavior of Algorithm~\ref{alg:Alg} depends \emph{drastically} on the truncation radius $R$}. Its choice should {balance} between two opposing effects. On the one hand, the algorithm uses effectively ${\Expt|\Kpicked|=n\cdot \PBall}$ measurements for estimation (where ${\PBall:=\Pr(\bX\in \m{B}(\0,R)})$, so $R$ cannot be too small; on the other hand, we need to take only (or mostly) points from the central stripe, so $R$ cannot be too large. Let us start with an observation: when $R/\sqrt{k}<1$, $\PBall$ is exponentially small in $k$, so the algorithm requires $n\gtrsim 1/\PBall=\exp(\Omega(k))$ measurements. Consequently, to (potentially) overcome the curse of dimensionality one must set ${R/\sqrt{k}>1}$. In that case, $\PBall \approx 1\wedge(\sqrt{R/\Spike})$ (Lemma~\ref{lem:Pball}), so that for a large spike, $\Spike=\omega(R)$, $n\gtrsim \sqrt{\Spike/R}$; therefore, we shall henceforth restrict our attention to $\Spike=\poly(k)$. As we have said, $R$ cannot be chosen too large, and in general there is a rather delicate {\bf tradeoff} between the parameters $\Delta,R,\Spike$ and the direction $\SpikeVec\in\SphereK$ itself. Our main result, Theorem~\ref{thm:main}, says, roughly, the following: there \emph{is} a choice $R=\Theta(\sqrt{k})$ such that for \emph{most} directions, if $\Delta\gtrsim \sqrt{\log k}$ and $\Spike=\poly(k)$, then Algorithm~\ref{alg:Alg} estimates $\SpikeVec$ with small error from only $n=\poly(k)$ measurements.

\paragraph{On estimating $\Spike$.}
In this paper, we restrict our attention to estimating only the direction $\SpikeVec\in\SphereK$ (and not $\Spike$). We mention two potential strategies for estimating $\Spike$, not pursued here further due to space constraints:
\begin{itemize}
	\item 
	\cite{palombi2012numerical} studies covariance estimation from spherically-truncated Gaussian measurements. Relying on Proposition~\ref{obs:trunc-eigs}, they prove that the mapping $\bm{\lambda}\mapsto \bm{\mu}$ between the true and truncated eigenvalues is invertible, and propose a fixed point iteration to recover $\bm{\lambda}$ from $\bm{\mu}$ (given \emph{exactly}, without noise). Our proposed algorithm computes an estimate of $\bSigmaTruc$; computing error bounds for the method of \cite{palombi2012numerical}, applied to $\hbSigma$, is potentially challenging, especially in the regime where $\Spike\gg R^2=\Theta(k)$, where the mapping $\bm{\lambda}\mapsto \bm{\mu}$ is necessarily badly conditioned.
	\item 
	Since only one eigenvalue of $\bSigma$ is unknown, a more natural approach is to invert (numerically) the mapping $\Spike\mapsto \PBall(\Spike)=\Pr(\bX\in \m{B}(\0,R))$, which is strictly decreasing. 
\end{itemize}

\subsection{Main results}


The following is our main result. It shows that for most directions $\SpikeVec\in \SphereK$, the error attained by Algorithm~\ref{alg:Alg} can be made quite small, with $n$  reasonably controlled, and assuming only $\Delta \gtrsim \sqrt{\log k}$. To make the {presentation lighter}, we focus exclusively on the regime where the spike is not small, $\Spike=\Omega(1)$ (which is also practically more interesting). Below, $\SpikeVecEst$ denotes the largest eigenvector of $\hbSigma$, with the sign ambiguity resolved by assuming that $\langle \SpikeVec,\SpikeVecEst\rangle \ge 0$.


\begin{theorem}\label{thm:main}
	Fix a constant $M>12$, and set $R=\Theta(\sqrt{k})$ as in (\ref{eq:R-def}). There is a universal constant $\ConstDelta$ and a set $\m{U}_M\subseteq \SphereK$ with
	\[
	\Pr_{\SpikeVec\sim \Unif(\SphereK)} (\SpikeVec\in\m{U}_{M}) = 1-O_M(k^{-10}),
	\] 
	such that whenever $\Delta \ge \ConstDelta (M\sqrt{\log k}\vee \Spike^{\frac{1}{2(k-1)}})$ and $\SpikeVec\in \m{U}_M$, the following error bounds hold (depending on the magnitude of $\Spike$), with probability $1-O(k^{-10})$: 
	
	\begin{enumerate}
		\item Assume that $1\le \Spike \le k$ and 
		$n\gtrsim \log k$.
		Then
		\begin{equation}\label{eq:thm-small}
		\|\SpikeVec-\SpikeVecEst\| \lesssim \frac{1}{\sqrt{\Spike}} \left( \frac{k}{n} \vee \sqrt{\frac{k}{n} } \right) + \Spike^{-1} k^{-M^2+12} \,.
		\end{equation}
		\item Assume that $k \le \Spike \lesssim k^{2M^2-21}$.
		Then
		\begin{equation}\label{eq:thm-large}
		\|\SpikeVec-\SpikeVecEst\| \lesssim \frac{\sqrt{\Spike}}{n} + \sqrt{ \frac{1}{n}\sqrt{\frac{\Spike}{k}} } + \Spike^{1/2}k^{-M^2+10.5} \,.
		\end{equation}
	\end{enumerate}
\end{theorem}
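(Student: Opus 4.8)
The overall strategy is to analyze the idealized version of Algorithm~\ref{alg:Alg} where we condition on the event that all picked samples belong to the central stripe, so that $\{\by_i\}_{i\in\Kpicked}$ is exactly an i.i.d.\ sample from $\bXTruc$, and then control the deviations from this idealization. Concretely, I would decompose the estimation error into three sources: (i) the \emph{population signal strength}, i.e.\ the spectral gap $\mu_1-\mu_2$ of $\bSigmaTruc$, which via Proposition~\ref{obs:trunc-eigs} and the formula (\ref{eq:trunc-eigs}) should behave like $\approx \sqrt{\Spike/R}\wedge\Spike$ when $R=\Theta(\sqrt{k})$, together with a lower bound on $\mu_1$ itself; (ii) the \emph{sampling error} $\|\hbSigma_{\mathrm{ideal}} - \bSigmaTruc\|$, where $\hbSigma_{\mathrm{ideal}}=\frac{1}{|\Kpicked|}\sum_{i\in\Kpicked}\by_i\by_i^\T$ built from the truncated-Gaussian samples, controlled by a matrix concentration / truncated-covariance estimation bound with effective sample size $|\Kpicked|\approx n\PBall$ (here Lemma~\ref{lem:Pball} gives $\PBall\approx 1\wedge\sqrt{R/\Spike}$); and (iii) the \emph{contamination error}, the contribution to $\hbSigma$ coming from picked points that actually did fold, $\by_i\ne\bx_i$. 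Once (i)--(iii) are in hand, a Davis--Kahan $\sin\Theta$ argument converts the operator-norm perturbation of $\hbSigma$ around $\bSigmaTruc$ into the bound $\|\SpikeVec-\SpikeVecEst\|\lesssim \|\hbSigma-\bSigmaTruc\|/(\mu_1-\mu_2)$, and splitting into the two SNR regimes $\Spike\le k$ versus $\Spike\ge k$ produces the two displayed estimates (\ref{eq:thm-small}) and (\ref{eq:thm-large}) — in particular the $\frac{k}{n}\vee\sqrt{k/n}$ versus $\frac{\sqrt\Spike}{n}+\sqrt{\frac1n\sqrt{\Spike/k}}$ dichotomy should fall out of whether the effective sample size is $\Theta(n)$ or $\Theta(n\sqrt{R/\Spike})$, and of the subexponential (rather than subgaussian) tails of $\by_i\by_i^\T$ in the signal direction.

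The genuinely hard part, and the reason for the ``most directions'' hypothesis $\SpikeVec\in\m{U}_M$ and the polynomial-in-$k$ loss terms $\Spike^{\pm1/2}k^{-M^2+O(1)}$, is bounding the contamination in (iii). A folded point $\by_i$ lands in $\m{B}(\0,R)$ only if the true point $\bx_i = \sqrt{\Spike}\xi_i\SpikeVec+\bZ_i$ lies within distance $\approx R=\Theta(\sqrt k)$ of a nonzero lattice point of $\Delta\ZZ^k$ after projecting appropriately; since the signal part traces out a line segment of length $\approx\sqrt{\Spike}$ in direction $\SpikeVec$ as $\xi_i$ varies, the probability of such a bad pick is governed by the number of lattice points of $\Delta\ZZ^k$ within distance $O(\sqrt k)$ of a random length-$\sqrt\Spike$ segment through the origin. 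This is exactly the lattice-point-counting problem advertised in the abstract: I would show that for $\Delta\gtrsim M\sqrt{\log k}$ and for $\SpikeVec$ outside an exceptional set of measure $O_M(k^{-10})$ (the set $\SphereK\setminus\m{U}_M$, defined by a Diophantine/"the segment avoids the lattice" condition), this count is at most $k^{-\Omega(M^2)}$ per sample, so that the total contamination mass is $\lesssim n\cdot k^{-\Omega(M^2)}$ and its operator norm contributes the claimed $k^{-M^2+O(1)}$ terms after normalizing by $|\Kpicked|$ and scaling by $\Spike^{\pm1/2}$. Establishing the measure bound on the exceptional set — essentially a union bound over relevant lattice points weighted by the solid angle each subtends at a uniform random direction, combined with a Gaussian tail estimate on how far $\bZ_i$ can stray — is where the constant $M>12$ and the exponent bookkeeping get delicate, and I expect this to be the main obstacle.

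Before the contamination estimate can even be run, I need the ``geometry is preserved'' input: Proposition~\ref{obs:trunc-eigs} tells us $\SpikeVec$ is the top eigenvector of $\bSigmaTruc$ and the ordering of truncated eigenvalues is preserved, but I also need quantitative control of $\mu_1$ and of the gap $\mu_1-\mu_2$. For this I would analyze (\ref{eq:trunc-eigs}) directly: with $R^2=\Theta(k)$, the conditioning event $\sum_i\lambda_i g_i^2\le R^2$ with $\lambda_1=1+\Spike\gg 1$ and $\lambda_2=\dots=\lambda_k=1$ forces $g_1^2\lesssim R^2/\Spike$, which pins $\mu_1=\lambda_1\Expt[g_1^2\mid\cdot]\approx R^2/\Spike$ up to constants when $\Spike\gg R^2$ (and $\mu_1\approx\Spike$ when $\Spike\lesssim R^2$), while $\mu_2\approx 1$; a second-moment computation for the conditioned chi-square-type sum gives the gap. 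Choosing $R=\Theta(\sqrt k)$ as in (\ref{eq:R-def}) is what makes $\PBall$ only polynomially small (so $n=\poly(k)$ suffices) while keeping the segment length $\sqrt\Spike$ small relative to the relevant lattice scale under $\Spike=\poly(k)$. The remaining steps — the matrix Bernstein/truncated-covariance concentration in (ii), a standard argument that $|\Kpicked|$ concentrates around $n\PBall$, and assembling everything through Davis--Kahan — are routine, so the proof really rests on (a) the eigen-gap analysis of $\bSigmaTruc$ and (b) the lattice-avoidance / contamination bound, with (b) being the crux.
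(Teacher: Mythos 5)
Your plan follows essentially the same route as the paper: the same decomposition of $\hbSigma-\bSigmaTruc$ into a truncated-covariance concentration term (with effective sample size $n\PBall$, sub-exponential tails handled via the truncation inheriting Gaussian concentration) plus a contamination term from folded points, the same reduction of the contamination bound to the geometric-probability question of a random segment of length $\approx\sqrt{\Spike}$ avoiding $R$-neighborhoods of nonzero points of $\Delta\ZZ^k$ (handled by a solid-angle/packing union bound defining $\m{U}_M$), and the same spectral-gap-plus-Davis--Kahan conclusion. The one slip is your first paragraph's claim that the gap $\mu_1-\mu_2$ is $\approx\sqrt{\Spike/R}\wedge\Spike$; the correct order is $R^2\wedge\Spike=k\wedge\Spike$, which is what your own later computation ($\Expt[g_1^2\mid\cdot]\approx R^2/\Spike$ times $\lambda_1\approx\Spike$, hence $\mu_1\approx R^2$ for $\Spike\gg R^2$) actually yields and what the final rates in (\ref{eq:thm-small})--(\ref{eq:thm-large}) require.
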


\paragraph{Discussion.} Let us start with the small-spike regime, $1\le \Spike\le k$. The first term in Eq. (\ref{eq:thm-small}) is, up to prefactors, the error rate for PCA without modulo-folding, see e.g. \cite[Chapter 8]{wainwright2019high}; in particular, when $k=O(n)$, it matches the minimax lower bound Eq. (\ref{eq:minimax-rate}).  The second term corresponds to error incurred by erroneously taking ``bad'' points $\by_i$, that do not belong on the central stripe. By taking $\Delta\gtrsim \sqrt{\log k}$ large enough, this term can be made to decay arbitrarily (polynomially) fast as $k\to\infty$. We note that in this regime, the consequences of Theorem~\ref{thm:main} are, in fact, rather unsurprising: if $\SpikeVec\sim \Unif(\SphereK)$, then with high probability, ${\|\SpikeVec\|_{\infty}\lesssim \sqrt{\log k /k }}$. Since $\Delta\gtrsim \sqrt{\log k}$, the cube $[-\frac12 \Delta,\frac12\Delta)^k$ contains a segment $\{t\SpikeVec\,:\,t\in[-L,L]\}$ of length $2L\gtrsim \Delta/\|\SpikeVec\|_{\infty} \gtrsim \sqrt{k}$; consequently, a large fraction of $\bx_1,\ldots,\bx_n$ are actually themselves already inside the cube, since the ``typical length'' of the projection along $\SpikeVec$, $|\langle \SpikeVec,\bX\rangle|$, is $\lesssim \sqrt{\Spike}$ (the standard deviation). 

The ``interesting'' regime is {$\Spike\gg k$}. Note that unlike in the small-spike regime, here the error, Eq. (\ref{eq:thm-large}), \emph{increases} as $\Spike$ grows. Moreover, the magnitude of $\Spike$ has to be constrained by $\Delta$: $\Spike\lesssim k^{2M^2-21}$ (the constant $21$ is itself not particularly important, and can be improved). Thus, to retain the scaling $\Delta \sim \sqrt{\log k}$, $\Spike$ has to grow at most polynomially with $k$; in that case, note that the term $\Spike^{1/(2(k-1))}$ in the bound for $\Delta$ is always negligible. Furthermore, note that $n$ has to scale at least as $n\gtrsim \sqrt{\Spike}$, which anyhow precludes the practically of the algorithm when $\Spike$ is super-polynomial, regardless of the third term. 

Let us try to get some intuition for the particular form of the bounds (\ref{eq:thm-small}), (\ref{eq:thm-large}), by considering a simplified setting, where one had direct access to all the measurements $\bx_i$ that lie inside the ball $\m{B}(\0,R)$, and used them to perform PCA. There are roughly (Lemmas~\ref{lem:Nball}, \ref{lem:Pball})
\[
\tilde{n} \approx  \PBall n \approx (1\wedge \sqrt{k/\Spike})n
\]
such measurements. By Proposition~\ref{obs:trunc-eigs}, the population covariance $\bSigmaTruc$ is spiked, and one can show (Lemma~\ref{lem:spectral-gap}) that the effective spike is
\[
\tilde{\Spike} \approx k\wedge \Spike \,.
\]
(Note that $\bXTruc$ lies inside $\m{B}(\0,R)$, and consequently $\lambda_1(\bSigmaTruc)\le R^2=\Theta(k)$). In particular, note that when $\Spike=\omega(k)$, $\tilde{n}$ decreases with $\Spike$ but $\tilde{\Spike}$ cannot grow further to compensate for this; this is the reason why the error in Eq. (\ref{eq:thm-large}) degrades with $\Spike$. Now, assuming  that $\tilde{n}$ is large enough (this point is {a little subtle}, since we let $\tilde{\Spike}$ grow as well), the error is bounded like
\[
\|\SpikeVec-\SpikeVecEst\| \lesssim \frac{1}{\sqrt{\tilde{\Spike}}} \left( \frac{k}{\tilde{n}} \vee \sqrt{\frac{k}{\tilde{n}}} \right)\,,
\]
using ``standard'' bounds for PCA (e.g. \cite[Chapter 8]{wainwright2019high}). Plugging in the {above} estimates for $\tilde{n},\tilde{\Spike}$ recovers the first terms in Eqs. (\ref{eq:thm-small}), (\ref{eq:thm-large}). The challenging part of the analysis (and our main technical contribution) is to control the last term: namely, show that for most directions $\SpikeVec$, when $\Delta\gtrsim \sqrt{\log k}$, the contribution of the erroneously picked (``bad'') points is indeed very small with high probability.

\begin{figure}[h]
	\centering
	\begin{subfigure}[b]{0.4\textwidth}
		\centering
		\includegraphics[width=\textwidth]{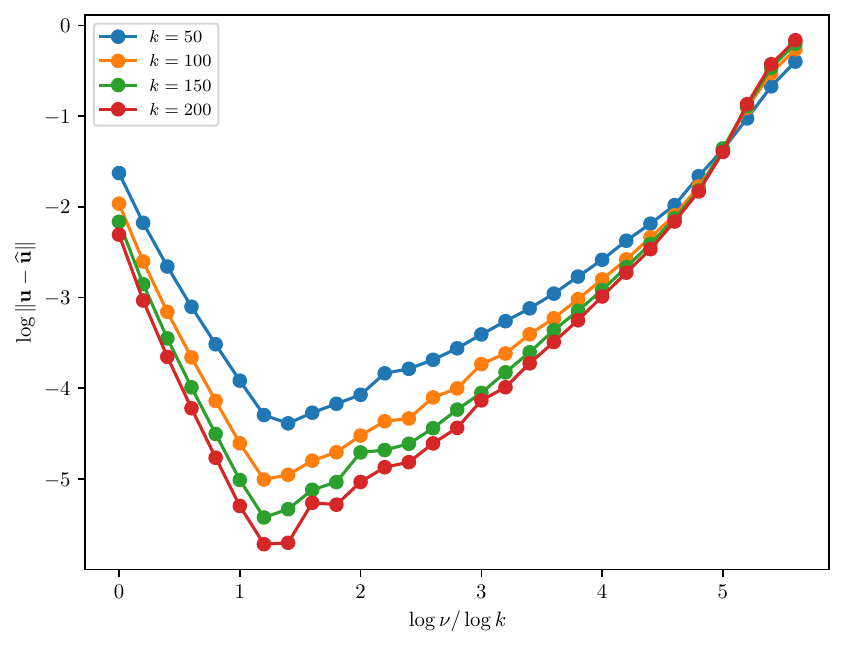}
		\caption{The estimation error, ${\|\SpikeVec-\SpikeVecEst\|}$, as $\Spike$ changes.}
		\label{fig:Experiment1}
	\end{subfigure}\\~\\
	\begin{subfigure}[b]{0.4\textwidth}
		\centering
		\includegraphics[width=\textwidth]{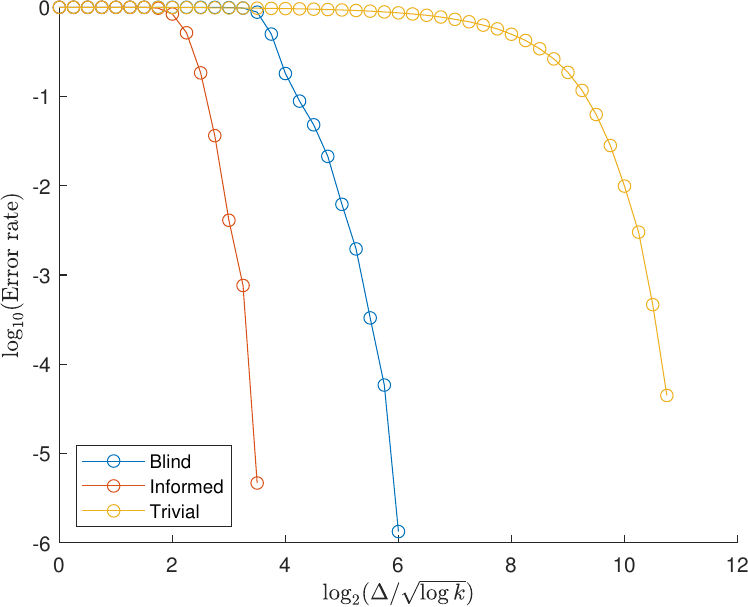}
		\caption{Performance of the informed vs. blind integer forcing decoder, as $\Delta/\sqrt{\log k}$ changes.}
		\label{fig:Experiment2}
	\end{subfigure}
\caption{Numerical results}
\end{figure}

\paragraph{Experiments.} 
We demonstrate the validity and relevance of our results through numerical experiments:
{
\begin{itemize}
	\item 
	In Figure~\ref{fig:Experiment1} we study the behavior of the error, ${\|\SpikeVec-\SpikeVecEst\|}$, as the spike magnitude $\Spike$ changes. For several values, $k=50,100,150,200$, we have set $\Delta=16\sqrt{\log k}$, $n=k^2$ and varied $\Spike=k^{\alpha}$ for an exponent $\alpha\in [0,5.6]$; each point on the graph is the average error across $T=2400$ repetition. We observe that as $k$ increases, scaling $\Delta\propto \sqrt{\log k}$ indeed suffices for estimation. Moreover, we see that for small spikes, $\alpha<1$, the error decreases as $\alpha$ increases, whereas when $\alpha>1$ the error increases; this is consistent with Theorem~\ref{thm:main}. 

	\item 
	In Figure~\ref{fig:Experiment2} we apply our algorithm as an intermediate step for blind unwrapping. We set $k=150$, $n=k^2\approx 2\cdot 10^4$, $\Spike=k^3\approx 3\cdot 10^6$ and vary $\Delta=2^{\delta}\sqrt{\log k}$. At every working point, we compute the error rate, namely the fraction of erroneously recovered samples $\widehat{p}_e=\frac1n \sum_{i=1}^n \Ind_{\bx_i\ne \widehat{\bx}_i}$, of the informed IF decoder (Eq. (\ref{eq:IF})), the blind IF decoder (computed from ${\hbSigma=\Spike \SpikeVecEst\SpikeVecEst+\bI}$), and the trivial decoder $\widehat{\bX}=\bY$. 
	For each method,
	$\delta$ is increased in jumps of $0.25$, until the point where $p_e\le 10^{-4}$; each point on the graph is the average of $T=200$ repetitions (so, overall, $nT\approx 10^6$ single recovery trials). We see, for this particular setup, a gap of around $\delta=2$ bits between the the informed and blind decoders, and of about $\delta=5$ bits between the blind and trivial decoders. To put things in context, a hypothetical quantization scheme based around modulo-folding and the blind decoder could save up to $5$ bits per coordinate (so $5k= 750$ bits overall) compared to the uniform quantizer (both designed so that the probability of a saturation is $\le 10^{-4}$).
	
\end{itemize} 
}

By Theorem~\ref{thm:main}, when $\Spike=\poly(k)$, the condition $\Delta\gtrsim \sqrt{\log k}$ ensure that one can estimate most directions $\SpikeVec\in\SphereK$ with $n=\poly(k)$ measurements. 
Recall that the present problem was motivated by the modulo-unfolding problem (which is a \emph{harder} problem). It turns out that for the latter, the condition $\Delta\gtrsim \sqrt{\log k}$ is actually necessary.
Thus, if one's goal is to solve the unwrapping problem (e.g. for implementing modulo-ADCs), and to that end estimates the covariance as an \emph{intermediate step}, then our algorithm succeeds with \emph{essentially} the smallest allowable dynamic range. We show the following (see Appendix~\ref{sec:proof-prop:lower-bound} for the proof):
\begin{proposition}
	\label{prop:lower-bound}
	Suppose that there exists $\widehat{\bX}=\widehat{\bX}(\bY)$, with $\Pr(\bX=\widehat{\bX})\ge 0.1$. Then ${\Delta\gtrsim (\sqrt{\log k}\vee \Spike^{\frac{1}{2k}})}$. 
\end{proposition}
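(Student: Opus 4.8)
\section*{Proof proposal}

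The plan is to establish the two lower bounds $\Delta\gtrsim\Spike^{\frac{1}{2k}}$ and $\Delta\gtrsim\sqrt{\log k}$ separately, and then combine them. In both cases we start from the fact that, since the MAP rule minimizes the error probability among all estimators of $\bX$ from $\bY$,
\[
0.1\le \Pr\!\left(\bX=\widehat{\bX}\right)\le p_{\mathrm{MAP}}:=\Pr\!\left(\widehat{\bX}_{\mathrm{MAP}}(\bY)=\bX\right).
\]

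For the bound $\Delta\gtrsim\Spike^{\frac{1}{2k}}$ I would use the volume/determinant argument alluded to in Section~\ref{sec:intro}. By \cite{romanov2021blind}, $p_{\mathrm{MAP}}=\Pr_{\bZ\sim\m N(\0,\Id)}(\bZ\in V_0)$, where $V_0$ is the Voronoi cell of $\0$ in the lattice $\m L=\Delta\bSigma^{-1/2}\ZZ^k$. Since $\Vol(V_0)$ equals the covolume of $\m L$, namely $|\det(\Delta\bSigma^{-1/2})|=\Delta^k|\bSigma|^{-1/2}=\Delta^k(1+\Spike)^{-1/2}$ (using $|\bSigma|=\det(\Id+\Spike\SpikeVec\SpikeVec^\T)=1+\Spike$), bounding the standard-Gaussian density by its maximum $(2\pi)^{-k/2}$ gives $p_{\mathrm{MAP}}\le(2\pi)^{-k/2}\Delta^k(1+\Spike)^{-1/2}$. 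Combined with $p_{\mathrm{MAP}}\ge 0.1$, this rearranges to $\Delta\ge (0.1)^{1/k}\sqrt{2\pi}\,(1+\Spike)^{\frac{1}{2k}}\gtrsim\Spike^{\frac{1}{2k}}$.

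For the bound $\Delta\gtrsim\sqrt{\log k}$ I would run a genie argument that removes the spike. Condition on $\xi=t$: then $\bX=\bZ+\bs_t$, where $\bs_t:=\sqrt{\Spike}\,t\,\SpikeVec$ is a fixed vector and $\bZ\sim\m N(\0,\Id)$. Because $[\bZ+\bs_t]\bmod\Delta=[\,\bs_t+([\bZ]\bmod\Delta)\,]\bmod\Delta$, the observation $\bY$ --- and hence $\widehat{\bX}(\bY)$ --- is, conditionally on $\xi=t$, a deterministic function of $[\bZ]\bmod\Delta$ alone; so recovering $\bX$ is equivalent to recovering $\bZ$ from $[\bZ]\bmod\Delta$ (up to the known shift $\bs_t$). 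Since $\bZ$ has i.i.d.\ $\m N(0,1)$ entries and $[\,\cdot\,]\bmod\Delta$ acts coordinatewise, the optimal estimator is $\widehat{\bZ}=[\bZ]\bmod\Delta$, which succeeds precisely when $\bZ\in[-\tfrac{\Delta}{2},\tfrac{\Delta}{2})^k$ --- with probability $\big(\Pr_{g\sim\m N(0,1)}(|g|<\tfrac{\Delta}{2})\big)^k$, the same for every $t$. Hence $0.1\le\big(\Pr_{g\sim\m N(0,1)}(|g|<\tfrac{\Delta}{2})\big)^k$, which gives $\Pr_{g\sim\m N(0,1)}(g>\tfrac{\Delta}{2})\le\tfrac12(1-0.1^{1/k})\le\tfrac{\ln 10}{2k}$. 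Inserting the Gaussian tail lower bound $\Pr_{g\sim\m N(0,1)}(g>t)\ge\frac{1}{\sqrt{2\pi}}\frac{t}{1+t^2}e^{-t^2/2}$ with $t=\Delta/2$, and taking logarithms, forces $(\Delta/2)^2\gtrsim\log k$ once $k$ exceeds a universal threshold; for the remaining bounded range of $k$, the one-coordinate relaxation $0.1\le\Pr_{g\sim\m N(0,1)}(|g|<\tfrac{\Delta}{2})$ already yields $\Delta\ge c_0$ for a universal $c_0>0$, which is $\gtrsim\sqrt{\log k}$ over any bounded set of dimensions. Therefore $\Delta\gtrsim\sqrt{\log k}$, and combined with the previous bound, $\Delta\gtrsim\sqrt{\log k}\vee\Spike^{\frac{1}{2k}}$.

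The only genuinely delicate step is the genie reduction: one must verify carefully that conditioning on $\xi$ collapses $\widehat{\bX}(\bY)$ to a function of $[\bZ]\bmod\Delta$ alone --- so that the white-noise optimality of $\widehat{\bZ}=[\bZ]\bmod\Delta$ can be invoked --- and that the resulting success probability is uniform in $\xi$ (it is, since it does not involve $\bs_t$). The remaining ingredients --- the determinant identity, the Gaussian density and tail estimates, and checking that the final constant is universal across all $k$ --- are routine.
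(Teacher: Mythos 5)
Your proof is correct, and both halves take routes that genuinely differ from the paper's. For $\Delta\gtrsim\Spike^{\frac{1}{2k}}$, the paper also starts from the Voronoi-cell identity $p_{\mathrm{MAP}}=\Pr_{\bZ\sim\m{N}(\0,\Id)}(\bZ\in V_0)$ and the covolume formula, but then invokes the fact that among convex bodies of a given volume the ball has the largest Gaussian measure, followed by $\chi^2$ concentration and Stirling's approximation for $\VolK$; your one-line bound $\Pr(\bZ\in V_0)\le(2\pi)^{-k/2}\mathrm{vol}_k(V_0)$ is more elementary and yields the same conclusion (with a slightly worse constant, $\sqrt{2\pi}$ in place of $\sqrt{2\pi e}$). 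For $\Delta\gtrsim\sqrt{\log k}$, the paper first proves a monotonicity lemma --- $\Spike\mapsto p(\Spike;\SpikeVec)$ is decreasing, shown by a dithering argument that adds known noise along $\SpikeVec$ --- and then evaluates at $\Spike=0$, where the problem decouples across coordinates; you instead condition on $\xi$ (reveal the signal component to a genie), observe that the conditional problem is exactly the white modulo-recovery problem up to a known, invertible shift of the observation, and bound the conditional success probability uniformly in $\xi$. Both reductions ultimately rest on the same elementary fact that for white Gaussian input the optimal unwrapper is the identity map; yours avoids stating the monotonicity lemma as a separate claim, while the paper's lemma is marginally more general (it compares all pairs of SNRs, not just $\Spike$ against $0$). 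Your tail manipulations at the end, including the separate treatment of bounded $k$, are routine and correct.
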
 

We remark in passing that, once we have obtained an estimate for $\SpikeVec$, and consequently for $\bSigma$, using Algorithm~\ref{alg:Alg}, we may use it to unwrap the measurements $\bx_1,\ldots,\bx_n$ (using, e.g., the IF decoder). Having unwrapped the samples, we can use standard methods (e.g., PCA) to get an improved estimate of $\SpikeVec$. 
We do not pursue this option here for two reasons: 1) The performance of such an algorithm depends on the unwrapping error probability, which is difficult to analyze. In particular, unwrapping errors could have a disastrous effect on the estimation error; 2) Our primary motivation for estimating $\SpikeVec$ in the first place was to perform unwrapping. To that end, once we obtain an estimate of $\SpikeVec$ with accuracy sufficient for unwrapping, further improvements are of limited interest. 

\section{Analysis}

In this section, we give a proof outline for our main result, Theorem~\ref{thm:main}. In the interest of space, the proofs of most technical lemmas are relegated to the Appendix.

For $\delta\in (0,1)$, set 
\begin{equation}\label{eq:z2-zinf}
\begin{split}
	z_2(\delta) &:= \left( k + 2\sqrt{k\log(1/\delta)} + 2\log(1/\delta) \right)^{1/2}\,,\quad \\
	z_\infty(\delta) &:= \sqrt{2\log k} + \sqrt{2\log (2/\delta)} \,,
\end{split}
\end{equation}
so that for $\bZ\sim \m{N}(\0,\Id_k)$ (see Appendix~\ref{sec:aux}, Lemma~\ref{lem:gaussian-tail-bounds}),
\begin{equation*}
	\begin{split}
		&\Pr(\|\bZ\|\ge z_2(\delta)) \le \delta,\quad
		\Pr(\|\bZ\|_\infty \ge z_\infty(\delta)) \le \delta \,.
	\end{split}
\end{equation*}

Going forward, we {\bf fix} a truncation radius:
\begin{equation}
	\label{eq:R-def}
	R = 2\sqrt{k} + z_2(0.1) = \Theta(\sqrt{k})\,.
\end{equation}
This {\it particular} choice is rather arbitrary. One could carry out the analysis with any $R=C\sqrt{k}$ for $C>1$; this would only change the constants in the bounds. 



\subsection{High level view} 

Divide the pairs $\{(\bx_i,\by_i)\}_{i\in [n]}$ into groups. Denote by
\begin{align*}
	&\Kpicked = \{ i \in [n]\,:\, \by_i \in \m{B}(\0,R)\}\,,\\
	&\Kball = \{ i \in [n] \,:\, \bx_i \in \m{B}(\0,R)\}\,,
\end{align*}
respectively the points that were picked by Algorithm~\ref{alg:Alg}, and those for which $\bx_i\in \m{B}(\0,R)$. Note that, conditioned on $i\in \Kball$, $\bx_i \overset{d}{=}\bXTruc$; hence $\{\bx_i\}_{i\in \Kball}$ is an i.i.d. sample from $\bXTruc$. Observe also that  $\Kball \subseteq \Kpicked$, since $\|\bX\bmod \Delta\|\le \|\bX\|$.
We denote by $\Kgood\subseteq \Kball$ the subset of measurements for which $\bx_i=\by_i$, in other words, such that $\bx_i\in [-\frac12 \Delta,\frac12 \Delta)^k$ to begin with. The measurements in $\Kbad=\Kpicked\setminus \Kgood$, will be called \emph{bad}. 
We have 
\begin{equation*}
\Kgood \subseteq \Kball \subseteq \Kpicked = \Kgood \sqcup \Kbad \subseteq [n] \,.
\end{equation*}
Now, the sample covariance,
\begin{equation}
	\label{eq:SigmaX}
	\hbSigmaX := \frac{1}{|\Kball|} \sum_{i\in\Kball}\bx_i \bx_i^\T
\end{equation}
is a consistent (as $|\Kball|\to \infty$) estimator for $\bSigmaTruc$, whose largest eigenvector is $\SpikeVec$ (Proposition~\ref{obs:trunc-eigs}). Consequently, PCA yields a consistent estimator for the unknown direction.
Alas, the set $\Kball$ is not directly observable, so the algorithm uses $\Kpicked\supseteq \Kball$ instead:\footnote{Note that we have normalized by $|\Kball|$, which is unknown. This is done {for the sake of} convenience in the analysis; the eigenvectors, of course, are not affected.  }
\begin{equation}
	\label{eq:SigmaY}
	\hbSigma = \frac{1}{|\Kball|}\sum_{i\in \Kpicked} \by_i\by_i^\T\,.
\end{equation}
This injects \emph{additional} error into the covariance estimation process, in two ways. First, the covariance is computed using the $\by_i$-s instead of the $\bx_i$-s (the latter are unknown); we have $\by_i=\bx_i$ only for $i\in \Kgood$, which may be a strict subset of $\Kpicked$. Second, we use additional samples, on top of $\Kball$: the points in $\Kpicked\setminus \Kball$ {necessarily} come from the wrong distribution. Set
\begin{equation*}
\begin{split}
\epsEst &:= \|\hbSigmaX-\bSigmaTruc\|\,,
 \quad \epsPick := \|\hbSigmaX-\hbSigma\| \,,
\end{split}
\end{equation*}
so that $\|\hbSigma - \bSigmaTruc\| \le \epsEst + \epsPick$.
A bound on this operator norm yields, by standard eigenvector perturbation results, a bound on $\|\SpikeVec-\SpikeVecEst\|$. 
Note: $\epsEst$ is simply the statistical estimation error in estimating $\bSigmaTruc=\Cov(\bXTruc)$ from $|\Kball|$ i.i.d. measurements; the other term, $\epsPick$, is the error induced through picking \emph{erroneous measurements}. We shall bound each error term separately.


\subsection{The covariance estimation error}

We start with $\epsEst$; the argument is quite standard. 
First, we show that with high probability, $|\Kball|$ is reasonably large. 
Denote
\begin{equation}
	\label{eq:Pball}
	\PBall = \PBall(\Spike;R) = \Pr\left( \bX\in \m{B}(\0,R)\right) \,,
\end{equation}
so that $|\Kball|\sim \mathrm{Binomial}\left( \PBall, n \right)$. Controlling $|\Kball|$ is straightforward using, e.g., Chernoff's inequality (Appendix~\ref{sec:aux}, Lemma~\ref{lem:chernoff}):
\begin{lemma}
	\label{lem:Nball} For any $\bu\in \SphereK$, for universal $c_1,c_2>0$,
	\[
	\Pr\left( |\Kball| \le c_1 \PBall n \,\big|\,\SpikeVec\right) \le 2e^{-c_2\PBall n} \,.
	\]
\end{lemma}

Next is an (tight, for large $\Spike$) estimate for $\PBall$; the (short) proof is relegated to Appendix~\ref{sec:proof-lem:Pball}:
\begin{lemma}\label{lem:Pball}
	\begin{equation*}
		0.9\,\erf\left( \sqrt{\frac{2k}{\Spike}}\right) \le \PBall \le \erf\left( \sqrt{\frac{2k + z_2(0.1)/2}{1+\Spike}}\right)\,,
	\end{equation*}
where, recall, the {error function} is defined by
\begin{equation}
	\erf(x) = \frac{2}{\sqrt{\pi}}\int_{0}^x  e^{-t^2}dt = \Pr(|\xi|\le \sqrt{2}x)\,.
\end{equation}
\end{lemma}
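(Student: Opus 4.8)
The plan is to express the event $\{\bX\in\m{B}(\0,R)\}$ in terms of the $\chi^2$-type variable $\sum_{i=1}^k \lambda_i g_i^2$ coming from the spectral decomposition of $\bSigma$, and then sandwich it between two pure $\chi^2$ events. Write $\bX\overset{d}{=}\bSigma^{1/2}\bg$ with $\bg\sim\m{N}(\0,\Id)$; since $\bSigma=\Spike\SpikeVec\SpikeVec^\T+\Id$ has eigenvalues $\lambda_1=1+\Spike$ (along $\SpikeVec$) and $\lambda_2=\cdots=\lambda_k=1$, we get
\[
\|\bX\|^2 \overset{d}{=} (1+\Spike)g_1^2 + \sum_{i=2}^k g_i^2 = g_1^2 + \Spike g_1^2 + \sum_{i=2}^k g_i^2\,,
\]
where $g_1\sim\m{N}(0,1)$ is the component along $\SpikeVec$ and $W:=\sum_{i=1}^k g_i^2\sim\chi^2_k$ is independent of nothing in particular — so I must be a little careful, since $g_1$ appears in both terms.

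For the \textbf{upper bound} on $\PBall$: condition on $\|\bg\|=\|\bg\|$; on the event $\m{B}(\0,R)$ we need $(1+\Spike)g_1^2 + \sum_{i\ge2}g_i^2 \le R^2$, which forces in particular $(1+\Spike)g_1^2 \le R^2 - \sum_{i\ge 2}g_i^2 \le R^2$. Thus $\PBall \le \Pr\big((1+\Spike)g_1^2 \le R^2 - \sum_{i\ge2}g_i^2\big)$. I would bound $\sum_{i\ge2}g_i^2$ from below by its typical value: with probability at least $0.9$, $\|\bg\|^2\ge$ some lower tail bound, but it is cleaner to use the crude inclusion $\m{B}(\0,R)\subseteq \{(1+\Spike)g_1^2\le R^2\}$ directly only if that is not tight enough. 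Actually the stated upper bound $\erf\!\big(\sqrt{(2k+z_2(0.1)/2)/(1+\Spike)}\big)$ suggests the following: $R^2 = (2\sqrt k + z_2(0.1))^2 \le$ roughly $4k + 4\sqrt k z_2(0.1) \le$ (for $k$ large, using $z_2(0.1)=\Theta(\sqrt k)$) something like $4k+4z_2(0.1)\cdot\sqrt k \approx 2(2k + z_2(0.1)/2)\cdot$(constant)? Let me instead just record the strategy: bound $R^2$ from above by $2(2k + z_2(0.1)/2)$ — this needs checking but is an elementary inequality in $k$ since $R^2\approx 4k$ and $z_2(0.1)\approx\sqrt k$ — and then $\Pr((1+\Spike)g_1^2\le R^2)=\Pr(|g_1|\le R/\sqrt{1+\Spike}) = \erf(R/\sqrt{2(1+\Spike)}) \le \erf(\sqrt{(2k+z_2(0.1)/2)/(1+\Spike)})$. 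So the upper bound follows from the trivial inclusion $\m{B}(\0,R)\subseteq\{$projection onto $\SpikeVec$ is short$\}$ plus an algebraic bound on $R^2$.

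For the \textbf{lower bound} $0.9\,\erf(\sqrt{2k/\Spike})\le\PBall$: I would use the inclusion of events
\[
\Big\{(1+\Spike)g_1^2 \le \tfrac{R^2}{2}\Big\}\cap\Big\{\sum_{i=2}^k g_i^2 \le \tfrac{R^2}{2}\Big\} \subseteq \{\bX\in\m{B}(\0,R)\}\,,
\]
and, since the two coordinate-blocks are independent, $\PBall \ge \Pr\big(|g_1|\le R/\sqrt{2(1+\Spike)}\big)\cdot\Pr\big(\|(g_2,\dots,g_k)\|\le R/\sqrt 2\big)$. By the definition of $R$ in (\ref{eq:R-def}), $R/\sqrt2 = \sqrt2\sqrt k + z_2(0.1)/\sqrt2 \ge \sqrt{k-1} + z_2(0.1)$ (again an elementary check, using that $\sqrt 2\sqrt k - \sqrt{k-1}\ge (\sqrt2-1)\sqrt k$ handles the slack), and so by the very definition of $z_2(0.1)$ the second factor is $\ge 0.9$. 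The first factor is $\erf\!\big(R/(2\sqrt{1+\Spike})\big) = \erf\!\big((\sqrt k + z_2(0.1)/2)/\sqrt{1+\Spike}\big) \ge \erf\!\big(\sqrt k/\sqrt\Spike\big)$... but the claim has $\sqrt{2k/\Spike}$, so I'd instead drop less: $R = 2\sqrt k + z_2(0.1) \ge 2\sqrt k$, so $R/(2\sqrt{1+\Spike}) \ge \sqrt k/\sqrt{1+\Spike}$; to reach $\sqrt{2k/\Spike}$ I need the factor $\sqrt2$, which I can recover by not halving $R^2$ but rather splitting as $(1+\Spike)g_1^2 \le \alpha R^2$, $\sum_{i\ge2}g_i^2\le(1-\alpha)R^2$ and optimizing $\alpha$ — taking $\alpha$ close to $1$ pushes the $g_1$-factor up to $\erf(R/(\sqrt2\sqrt{1+\Spike}))\ge\erf(\sqrt{2k/(1+\Spike)})\ge\erf(\sqrt{2k/\Spike})$ while still leaving $(1-\alpha)R^2 \gtrsim z_2(\cdot)^2$ for a slightly worse constant than $0.9$ in the second factor — so I may need to track constants a bit more carefully, or use a union-bound form ($\Pr(A\cap B)\ge 1 - \Pr(A^c) - \Pr(B^c)$) instead of independence to get exactly the $0.9$.

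The computation is elementary throughout; the only mild obstacle is \textbf{bookkeeping the constants} so that the slack in $R=2\sqrt k + z_2(0.1)$ is apportioned correctly between the "signal" coordinate and the "noise" coordinates to land on precisely $\sqrt{2k/\Spike}$ and the factor $0.9$. I would handle this by choosing the split point $\alpha$ (or equivalently the radius allotted to each block) explicitly in terms of $z_2(0.1)$ and verifying the two resulting one-dimensional inequalities, both of which reduce to monotonicity of $x\mapsto\sqrt x$ and the definition of $z_2$.
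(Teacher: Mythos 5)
Your upper bound is the paper's argument verbatim: project onto $\SpikeVec$, use $\bX\in\m{B}(\0,R)\Rightarrow|\langle\SpikeVec,\bX\rangle|\le R$, and get $\PBall\le\erf\bigl(R/\sqrt{2(1+\Spike)}\bigr)$. One caution on the step you defer: matching this to the displayed bound requires $R^2\le 4k+z_2(0.1)$, which is \emph{false} for $R=2\sqrt{k}+z_2(0.1)$ (one has $R^2=4k+4\sqrt{k}\,z_2(0.1)+z_2(0.1)^2\approx 9k$); this is an inconsistency between the lemma's statement and the paper's own proof (which also stops at $\erf\bigl(R/\sqrt{2(1+\Spike)}\bigr)$), and it does not affect how the lemma is used ($\PBall\lesssim 1\wedge\sqrt{k/\Spike}$), but your "elementary check" would not go through as written.

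The genuine gap is in the lower bound. Your plan splits $R^2$ between the two orthogonal blocks, $\{(1+\Spike)g_1^2\le\alpha R^2\}\cap\{\sum_{i\ge2}g_i^2\le(1-\alpha)R^2\}$, and you correctly sense that no choice of $\alpha$ obviously yields both the factor $\erf(\sqrt{2k/\Spike})$ and the factor $0.9$. This is not mere bookkeeping: forcing the first factor up to $\erf(\sqrt{2k/\Spike})$ requires $\alpha R^2\ge 4k(1+\Spike)/\Spike$, which already exceeds $R^2\approx 9k$ once $\Spike\lesssim 0.8$ (the lemma carries no restriction on $\Spike$), and even for $\Spike\ge1$ the leftover $(1-\alpha)R^2$ sits essentially at the $\chi^2_{k-1}$ quantile needed for the $0.9$, so the argument survives only by a hair and only after the careful verification you postpone; the union-bound fallback gives $\erf(\cdot)-0.1$, which is vacuous exactly in the large-$\Spike$ regime of interest. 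The missing idea is to abandon the eigen-split and instead write $\bX=\sqrt{\Spike}\,\xi\SpikeVec+\bZ$ with $\xi\sim\m{N}(0,1)$ and the \emph{full} $k$-dimensional $\bZ\sim\m{N}(\0,\Id)$ independent, then split the \emph{radius} additively via the triangle inequality: the event $\{\sqrt{\Spike}|\xi|\le2\sqrt{k}\}\cap\{\|\bZ\|\le z_2(0.1)\}$ forces $\|\bX\|\le2\sqrt{k}+z_2(0.1)=R$. The definition of $R$ in (\ref{eq:R-def}) is engineered precisely for this decomposition, and independence immediately gives $\PBall\ge\Pr(\|\bZ\|\le z_2(0.1))\cdot\Pr(\sqrt{\Spike}|\xi|\le2\sqrt{k})\ge 0.9\,\erf(\sqrt{2k/\Spike})$, with no optimization and uniformly in $\Spike$.
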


Note that Lemma~\ref{lem:Pball} implies that $\PBall=\Theta(1)$ when $\Spike=O(k)$, whereas $\PBall=\Theta(\sqrt{k/\Spike})$ when $\Spike=\omega(k)$; in other words, $\PBall = \Theta(1\wedge  \sqrt{k/\Spike})$.

Error bounds for covariance estimation rely on the concentration properties of the data. Thus, we need to show that $\bXTruc$ ``inherits'' the favorable properties of the underlying Gaussian vector $\bX$. 
We start with a general Lemma, whose proof appears in Appendix~\ref{sec:proof-lem:simple-convex}:
\begin{lemma}\label{lem:simple-convex}
	For every convex function $g:\RR^k\to \RR$,
	\[
	\Expt\left[ g(\bXTruc) \right] \le \Expt\left[ g(\bX) \right] \,.
	\]
\end{lemma}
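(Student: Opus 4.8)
The claim is that spherical truncation decreases the expectation of any convex function. The natural approach is a symmetrization/conditioning argument exploiting the fact that $\bXTruc$ is obtained from $\bX$ by conditioning on the convex event $\{\bX\in\m{B}(\0,R)\}$, together with the spherical symmetry of $\m{B}(\0,R)$. The cleanest route I would take is via a \emph{coupling through the norm}: write $\bX = \rho\, \bU$ where $\rho = \|\bX\|$ and $\bU = \bX/\|\bX\|$, but note this is not quite right because $\bX$ is not rotationally invariant when $\bSigma\neq\Id$. So instead I would work in the whitened coordinates. Since $g$ convex and the statement is about $\Expt[g(\bX)]$ versus $\Expt[g(\bXTruc)]$, and $\m{B}(\0,R)$ is a Euclidean ball (not an ellipsoid), I cannot simply reduce to the isotropic case; the truncation set is fixed while $\bSigma$ varies. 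This asymmetry is exactly where the work lies.

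\textbf{Step 1: reduce to a statement about conditioning on a symmetric convex set.} Let $K = \m{B}(\0,R)$, a symmetric convex body. I want to show $\Expt[g(\bX)\mid \bX\in K] \le \Expt[g(\bX)]$ for all convex $g$. Equivalently, writing $\mu$ for the law of $\bX$ and $\mu_K = \mu(\cdot\mid K)$, I must show $\mu_K \preceq \mu$ in the convex (Choquet) order.

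\textbf{Step 2: a one-dimensional slicing / layer-cake argument.} The key structural fact is that $\mu$ has a density $f(\bx) \propto e^{-\frac12 \bx^\T\bSigma^{-1}\bx}$ which is \emph{even} and \emph{log-concave}, and $K$ is symmetric and convex. I would prove the convex-order domination by the following device: couple $\bX \sim \mu$ and $\bX' \sim \mu_K$ so that $\bX'$ is a "contraction toward the origin" of $\bX$ along rays — concretely, for each direction, the conditional law of the radial part given the direction, under $\mu_K$, is stochastically dominated (in $|\cdot|$) by that under $\mu$, because conditioning on $\{\|\bx\|\le R\}$ only truncates the radial tail. Then $\bX' = \psi(\bX)$ for a map $\psi$ with $\psi(\bx) = s(\bx)\bx$, $0\le s(\bx)\le 1$, and for convex $g$ with $g(\0)$ finite, $g(s\bx) \le (1-s)g(\0) + s\, g(\bx)$; taking expectations and using that translating $g$ by a constant doesn't affect the inequality (WLOG $g(\0)=0$ after adding an affine function, which is legitimate since both sides of the target inequality shift by the same affine expectation and $\bXTruc,\bX$ are both centered) gives $\Expt[g(\bX')] \le \Expt[s(\bX)g(\bX)] \le \Expt[g(\bX)]$, the last step needing $g\ge 0$. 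The sign issue with $g(\0)$: I should instead reduce to $g$ with a global minimum normalized to $0$ at $\0$ — but $\0$ need not be the minimizer. The honest fix: replace $g$ by $g(\bx) + \langle \bv,\bx\rangle$ for the subgradient $\bv\in\partial g(\0)$; this is still convex, agrees with the original up to a term that is linear hence has equal (zero) expectation under both centered laws, and now $\0$ \emph{is} a minimizer, so $g\ge g(\0)$, and subtracting the constant $g(\0)$ we may assume $g\ge 0$ with $g(\0)=0$. Then $g(s\bx)\le s\,g(\bx)$ by convexity, and the argument closes.

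\textbf{Step 3: verify the radial coupling.} The remaining point is that the map $\psi(\bx)=s(\bx)\bx$ genuinely pushes $\mu$ forward to $\mu_K$ with $s\le 1$. Fix a direction $\theta\in\SphereK$. Along the ray $\{r\theta: r\ge 0\}$, $\mu$ has conditional radial density proportional to $r^{k-1}e^{-\frac12 r^2\,\theta^\T\bSigma^{-1}\theta}$; conditioning on $r\le R$ renormalizes this on $[0,R]$. Since $K$ is a ball, the truncation set is the \emph{same} interval $[0,R]$ for every direction, so $\mu_K$'s conditional radial law is just the restriction-renormalization of $\mu$'s, which is stochastically smaller; the monotone rearrangement map $r\mapsto s(\theta,r)\,r$ realizing this coupling has $s\le 1$. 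Patching over $\theta$ (the angular marginal is unchanged by conditioning on a radially-symmetric set \emph{only} if $\mu$ is itself radially symmetric — which it is \emph{not}) is the subtle point: in fact conditioning on $\|\bx\|\le R$ \emph{does} change the angular marginal when $\bSigma\neq\Id$.

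\textbf{The main obstacle,} therefore, is precisely that the angular marginal shifts under truncation, so the naive "truncate each ray independently" coupling does not reproduce $\mu_K$. I expect the clean resolution is \emph{not} to build an explicit coupling but to argue abstractly: the density ratio $\frac{d\mu_K}{d\mu}(\bx) = \frac{1}{\mu(K)}\Ind\{\bx\in K\}$ is a nonincreasing function of $\bx$ along every ray from the origin (it drops from $1/\mu(K)$ to $0$), i.e. it is of the form $h(\bx)$ with $h$ "radially nonincreasing" — and a density which is an even, radially-nonincreasing reweighting of an even log-concave measure yields a measure dominated in convex order. This last implication should follow from a theorem of the type: if $\mu$ is symmetric and the reweighting $h$ is symmetric and quasi-concave (superlevel sets convex and symmetric — true here, they are $K$ and $\emptyset$), then $h\cdot\mu / \int h\,d\mu \preceq_{\mathrm{cx}} \mu$. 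I would prove this by approximating $h = \Ind_K$ and noting $\Ind_K \cdot \mu$ renormalized is exactly $\mu_K$; and the domination $\mu_K \preceq_{\mathrm{cx}}\mu$ for $\mu$ even log-concave and $K$ symmetric convex is a known fact (it is essentially the statement that conditioning a symmetric log-concave vector on a symmetric convex set reduces all "convex moments" — provable via a bisection/Hermite-type argument, or by the fact that $\mu_K$ is itself log-concave and "closer to the origin"). If a citable version is unavailable, the layer-cake reduction $\Ind_K = \int_0^\infty \Ind\{\bx\in K,\ t\le 1\}dt$ is degenerate, so instead I would approximate $\Ind_K$ from below by smooth log-concave symmetric bumps $h_n$ supported near $K$, establish the inequality for each $h_n\cdot\mu$ via a genuine radial coupling (now valid because $h_n$ varies smoothly and one can use the transport-along-gradient-flow argument — the relevant coupling is the one induced by the log-concave density, moving mass monotonically toward the origin), and pass to the limit. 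Everything else (Steps 1–2) is routine; the crux is this domination lemma for symmetric-convex conditioning of log-concave measures, and I would devote the bulk of the write-up to it, very likely citing it or reproving it in an appendix via the standard "bisection + symmetric decreasing rearrangement" technique.
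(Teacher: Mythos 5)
Your reduction is the right one: the lemma is equivalent to the covariance inequality $\Expt\left[ g(\bX)\Ind_{\bX\in K}\right]\le \Expt\left[ g(\bX)\right]\Pr(\bX\in K)$ for $K=\m{B}(\0,R)$, i.e.\ to $\mu_K\preceq_{\mathrm{cx}}\mu$, and the affine normalization in your Step~2 is sound as far as it goes. But the argument has a genuine gap exactly where you yourself locate ``the crux.'' The contraction coupling $\bXTruc=s(\bX)\bX$ with $0\le s\le 1$ is never constructed: you correctly observe that the naive ray-by-ray truncation does not push $\mu$ forward to $\mu_K$ when $\bSigma\ne\Id$ (the angular marginal changes under conditioning), and nothing in Step~3 establishes that a ``move mass monotonically toward the origin'' transport map exists at all. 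You then fall back on asserting that ``conditioning a symmetric log-concave vector on a symmetric convex set reduces all convex moments'' is a known fact, provable by bisection/rearrangement. In the generality you state it (arbitrary symmetric log-concave $\mu$, arbitrary symmetric convex $K$) this is not a citable theorem: by the layer-cake decomposition it contains the correlation inequality $\mu(K\cap L)\ge \mu(K)\,\mu(L)$ for symmetric convex $K,L$, and even the Gaussian special case of that statement was a famous open conjecture until Royen's 2014 proof; the general log-concave version is not an available result. So the proof does not close.

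What rescues the statement is precisely that $\mu$ here is Gaussian, and the needed inequality is exactly the Gaussian correlation inequality --- which is the paper's entire (one-line) proof: take $h=\Ind_{\bX\in\m{B}(\0,R)}$, which is symmetric and quasiconcave, and $f=-g$, which is quasiconcave since $g$ is convex; the GCI gives $\Expt[f(\bX)h(\bX)]\ge \Expt[f(\bX)]\,\Expt[h(\bX)]$, which rearranges to $\Expt[g(\bXTruc)]\le\Expt[g(\bX)]$. Your write-up never identifies this ingredient, and without it (or a genuinely new proof of the domination $\mu_K\preceq_{\mathrm{cx}}\mu$) the argument is incomplete.
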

The proof of Lemma~\ref{lem:simple-convex} relies on the Gaussian correlation inequality. As an important corollary, it allows us to control the sub-Gaussian and sub-exponential norms of 
$\bXTruc$; see Appendix~\ref{sec:proof-lem:epsEst-bound}, Lemma~\ref{lem:orlicz-norms-est}. 

Recall that by Proposition~\ref{obs:trunc-eigs}, $\bSigmaTruc$ is a spiked covariance matrix with largest eigenvector $\SpikeVec$:
\[
\lambda_1(\bSigmaTruc) > \lambda_2(\bSigmaTruc) = \ldots = \lambda_k(\bSigmaTruc)\,,
\]
so that applying Lemma~\ref{lem:simple-convex} (with $g(\bX)=\langle \bu_i,\bX\rangle^2$),
\[
\lambda_1(\bSigmaTruc)\le R^2\wedge (1+\Spike),\quad \lambda_2(\bSigmaTruc)\le 1\,.
\]
The rest of the analysis proceeds along rather standard lines, as in e.g. \cite[Section 8.2.2]{wainwright2019high}; the full details are given in Appendix~\ref{sec:proof-lem:epsEst-bound}. We prove:

\begin{lemma}\label{lem:epsEst-bound}
	Suppose that $\PBall n\gtrsim  \log k$. Then, with probability $1-O(k^{-10})$, 
	\begin{equation*}
		\begin{split}
		\epsEst
		&\lesssim 
		(k\wedge (1+\Spike))\sqrt{ \frac{\log k}{\PBall n} } \\
		&+ \sqrt{(k\wedge (1+\Spike)}\left( {\frac{k}{\PBall n}}\vee \sqrt{\frac{k}{\PBall n}} \right) \,.
		%
		\end{split}
	\end{equation*}
\end{lemma}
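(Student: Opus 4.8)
The plan is to reduce the bound on $\epsEst = \|\hbSigmaX - \bSigmaTruc\|$ to a standard sample-covariance concentration estimate, using the fact that $\{\bx_i\}_{i\in\Kball}$ is, conditioned on $|\Kball|$, an i.i.d. sample of size $N := |\Kball|$ from $\bXTruc$. First I would condition on the event from Lemma~\ref{lem:Nball} that $N \gtrsim \PBall n$, which holds with probability $1 - 2e^{-c_2\PBall n} = 1 - O(k^{-10})$ under the hypothesis $\PBall n \gtrsim \log k$ (choosing the implied constant appropriately). On this event it suffices to prove the stated bound for a fixed sample size $N \gtrsim \PBall n$.

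The key analytic input is a sub-Gaussian/sub-exponential control on $\bXTruc$. Using Lemma~\ref{lem:simple-convex} with $g(\bx) = \exp(\langle \bv,\bx\rangle)$ (or $g(\bx) = \langle\bv,\bx\rangle^{2p}$), one shows that for any unit vector $\bv$, the one-dimensional marginal $\langle\bv,\bXTruc\rangle$ is sub-Gaussian with $\psi_2$-norm bounded by that of $\langle\bv,\bX\rangle$, which is $\sqrt{\lambda_1(\bSigma)} = \sqrt{1+\Spike}$; combined with the deterministic bound $\|\bXTruc\|\le R$, the effective sub-Gaussian parameter along any direction is $\lesssim \sqrt{k\wedge(1+\Spike)}$. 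Likewise $\|\bXTruc\|_2^2$ is sub-exponential with the appropriate parameter. This is precisely the content of the auxiliary Lemma~\ref{lem:orlicz-norms-est} referenced in the excerpt, so I would invoke it directly. The point is that the relevant ``variance proxy'' that governs the spectral norm of $\hbSigmaX - \bSigmaTruc$ is $\sigma^2 := \|\bSigmaTruc\|_{\mathrm{op}} \lesssim k\wedge(1+\Spike)$ rather than $\Spike$ itself — this is where the truncation pays off.

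Next I would apply a standard sub-exponential-type matrix concentration / empirical process bound for the sample covariance of i.i.d. vectors with these Orlicz norms — e.g., the argument in \cite[Section 8.2.2]{wainwright2019high} via a discretization ($\epsilon$-net) of the sphere together with Bernstein's inequality for the quadratic forms $\langle \bv, \bx_i\rangle^2$. This yields, with probability $1 - O(k^{-10})$ (the $k^{-10}$ coming from setting the deviation parameter to absorb the net cardinality $e^{O(k)}$ and an extra $\log k$ factor),
\[
\epsEst \lesssim \sigma^2\sqrt{\frac{\log k}{N}} + \sigma \cdot \sqrt{\frac{k + \log k}{N}} + \sigma^2\cdot\frac{k+\log k}{N},
\]
where the three terms are, respectively, the Bernstein ``variance'' term at a single direction amplified by $\sqrt{\log k}$, and the net/rank contributions in the sub-Gaussian and sub-exponential regimes. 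Substituting $\sigma^2 \lesssim k\wedge(1+\Spike)$, $N \gtrsim \PBall n$, and absorbing the $\log k \lesssim \PBall n$ terms, the last two terms combine into $\sqrt{k\wedge(1+\Spike)}\bigl(\frac{k}{\PBall n}\vee\sqrt{\frac{k}{\PBall n}}\bigr)$, giving exactly the claimed inequality.

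The main obstacle is bookkeeping rather than conceptual: carefully tracking which term dominates in which regime of $N$ versus $k$ (the $\frac{k}{\PBall n}$ vs. $\sqrt{\frac{k}{\PBall n}}$ split reflects whether the deviation is in the sub-exponential ``far'' regime or the sub-Gaussian ``near'' regime of Bernstein's inequality), and making sure the probability budget is genuinely $O(k^{-10})$ after the union bound over the $\epsilon$-net — which forces the $\sqrt{\log k}$ and $k$ factors in the numerators and is the reason the hypothesis $\PBall n \gtrsim \log k$ is needed. Since $\bXTruc$ is only sub-Gaussian "up to the truncation scale", one must also verify that the deterministic bound $\|\bXTruc\|\le R = \Theta(\sqrt k)$ does not degrade the estimate — it does not, because $R^2 = \Theta(k) \gtrsim k\wedge(1+\Spike)$, so the truncation scale is never the binding constraint.
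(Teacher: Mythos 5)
Your overall scaffolding (condition on $N=|\Kball|\gtrsim \PBall n$ via Lemma~\ref{lem:Nball}, then use Lemma~\ref{lem:simple-convex} to transfer Orlicz-norm bounds from $\bX$ to $\bXTruc$, then concentrate the sample covariance) matches the paper. But there is a genuine gap in the middle step. You treat $\bXTruc$ as a sub-Gaussian vector with a single, worst-case parameter $\sigma=\sqrt{k\wedge(1+\Spike)}$ and invoke a ``standard'' net-plus-Bernstein bound for its sample covariance. That argument pays $\sigma^2$ in \emph{every} direction of the net, and what it actually yields is $\epsEst\lesssim \sigma^2\bigl(\sqrt{k/N}\vee k/N\bigr)=(k\wedge(1+\Spike))\bigl(\sqrt{k/N}\vee k/N\bigr)$ --- weaker than the lemma both in the first term (where $\log k$ must replace $k$ under the square root) and in the prefactor of the second term (where only $\sqrt{k\wedge(1+\Spike)}$ should appear). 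The intermediate three-term estimate you write down does not follow from the argument you describe, and even granting it, your claim that the last two terms ``combine into'' $\sqrt{k\wedge(1+\Spike)}\bigl(\tfrac{k}{\PBall n}\vee\sqrt{\tfrac{k}{\PBall n}}\bigr)$ is arithmetically off: your third term is $\sigma^2(k+\log k)/N$, which exceeds the target $\sigma\,k/N$ by a factor of $\sigma$.

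The missing idea is the anisotropic decomposition $\bXTruc=w\SpikeVec+\bp$, with $w=\langle\SpikeVec,\bXTruc\rangle$ and $\bp$ the orthogonal part, leading to $\epsEst\le\eps_1+2\eps_2+\eps_3$ where $\eps_1=|\tfrac1N\sum_i w_i^2-\Expt[w^2]|$, $\eps_2=\|\tfrac1N\sum_i w_i\bp_i\|$, $\eps_3=\|\tfrac1N\sum_i\bp_i\bp_i^\T-\Expt[\bp\bp^\T]\|$. The large parameter $k\wedge(1+\Spike)$ lives only in the one-dimensional component $w$: the term $\eps_1$ is a \emph{scalar} Bernstein estimate requiring no net (hence the $\sqrt{\log k/N}$ rather than $\sqrt{k/N}$); the cross term has $\|w\bp\|_{\psi_1}\le\|w\|_{\psi_2}\|\bp\|_{\psi_2}\lesssim\sqrt{k\wedge(1+\Spike)}\cdot 1$, so the net over the sphere costs only one factor of $\sigma$; and $\bp$ is $O(1)$-sub-Gaussian, so $\eps_3\lesssim k/N\vee\sqrt{k/N}$ with an $O(1)$ prefactor. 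This is exactly how the three pieces of the stated bound arise, and without the decomposition the bound is not reachable by the route you propose.
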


\subsection{The sample picking error}

%
Decompose
\begin{equation*}
	\sum_{i\in \Kpicked} \by_i\by_i^\T =   \sum_{i\in \Kgood} \bx_i\bx_i^\T + \sum_{i\in \Kbad} \by_i\by_i^\T \,,
\end{equation*}
so that
\begin{equation}\label{eq:epsPick}
	\begin{split}
		\epsPick
		&= \frac{1}{|\Kball|}\cdot \left\| \sum_{i\in \Kbad}\by_i\by_i^\T - \sum_{i\in \Kball\setminus \Kgood}\bx_i\bx_i^\T \right\| \\
		&\le \frac{2|\Kbad|\cdot R^2}{|\Kball|} 
		\lesssim  k\cdot \frac{|\Kbad|}{|\Kball|}\,.
	\end{split}
\end{equation}
Above, we used: $\|\bx_i\|\le R$ for $i\in \Kball$; $\|\by_i\|\le R$ for $i\in \Kpicked$; and
$
\left| \Kball\setminus\Kgood\right| \le \left| \Kpicked\setminus\Kgood\right| \le \left|\Kbad\right|
$.

The next Lemma is one of our main technical results. It states that for \emph{most} directions $\bu\in\SphereK$, the probabiliy that a pair $(\bX,\bY)$ is \emph{bad}, meaning that $\bY\in \m{B}(\0,R)$ but $\bX\ne \bY$, is overwhelmingly small provided that $\Delta\gtrsim \sqrt{\log k}\vee \Spike^{\frac{1}{2(k-1)}}$:
\begin{lemma}\label{lem:good-whp}
	Fix a constant $M\ge 1$. There is a universal $\ConstDelta>0$ and a subset $\m{U}_{M}\subseteq \SphereK$ with
	\[
	\Pr_{\bu\sim \Unif(\SphereK)}\left( \bu\in \m{U}_M \right) = 1 - O_M(k^{-10}) \,,
	\]
	such that if $\Delta \ge \ConstDelta (M\sqrt{\log k}\vee \Spike^{\frac{1}{2(k-1)}})$, then for all $\bu\in \m{U}_M$,
	\[
	\Pr\left( (\bX,\bY)\textnormal{ is bad }\,\big|\,\SpikeVec \right) \le k^{-M^2}\,.
	\]
\end{lemma}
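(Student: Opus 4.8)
The event that $(\bX,\bY)$ is bad requires $\bX\notin[-\tfrac12\Delta,\tfrac12\Delta)^k$ (so $\bX$ gets folded) \emph{and} $\bY=[\bX]\bmod\Delta\in\m{B}(\0,R)$. Writing $\bX=\sqrt{\nu}\,\xi\,\bu+\bZ$, I would first condition on $\xi$ and work with $\bW=\sqrt{\nu}\xi\bu+\bZ$ for a fixed scalar $s=\sqrt{\nu}\xi$: the point $\bX=s\bu+\bZ$ is bad only if $s\bu+\bZ-\bm{\ell}\in\m{B}(\0,R)$ for some \emph{nonzero} $\bm{\ell}\in\Delta\ZZ^k$ (the case $\bm{\ell}=\0$ being ruled out when $\|\bZ\|_\infty$ is not too large, which happens with very high probability). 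So the bad event is essentially $\bigcup_{\bm{\ell}\in\Delta\ZZ^k\setminus\{\0\}}\{\,\|s\bu+\bZ-\bm{\ell}\|\le R\,\}$, i.e. the Gaussian-distributed point $s\bu+\bZ$ (which has mean $s\bu$ lying on the line $\RR\bu$) must land within distance $R$ of a lattice point of $\Delta\ZZ^k$. Since $|s|=\sqrt{\nu}|\xi|\lesssim\sqrt{\nu}\sqrt{\log k}$ with overwhelming probability, the mean $s\bu$ traces out (as $\xi$ varies over its typical range) a segment of length $\approx\sqrt{\nu}$ in direction $\bu$ — this is exactly the geometric picture flagged in the abstract.

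The core estimate I would aim to prove is: for most $\bu$, the tube of radius $O(R)=O(\sqrt k)$ around the segment $\{t\bu:|t|\le L\}$, with $L\approx\sqrt{\nu\log k}$, contains no nonzero point of $\Delta\ZZ^k$ once $\Delta\gtrsim\sqrt{\log k}\vee\nu^{1/(2(k-1))}$. To handle this I would take a union bound over lattice vectors $\bm{\ell}=\Delta\bm{m}$, $\bm{m}\in\ZZ^k\setminus\{\0\}$, of the probability that $\bm{\ell}$ is within distance $\approx R$ of the random segment. For a fixed $\bm{\ell}$, the relevant quantity is the distance from $\bm{\ell}$ to the line $\RR\bu$; since $\bu\sim\Unif(\SphereK)$, this distance is well-concentrated, and the probability that it is $\lesssim R$ scales like $(R/\|\bm{\ell}\|)^{k-1}$ up to the length of the segment. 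Summing $\sum_{\bm{m}\ne\0}(\text{seg length})\cdot(R/(\Delta\|\bm{m}\|))^{k-1}$ over the integer lattice: the shell $\|\bm{m}\|\approx r$ contributes $\approx r^{k-1}$ terms each of size $\approx L\,(R/(\Delta r))^{k-1}$, giving a geometric-type sum $\approx L\sum_r (R/\Delta)^{k-1}$ — which converges and is $\lesssim L(R/\Delta)^{k-1}$ provided $\Delta\gtrsim R=\Theta(\sqrt k)$; wait, we want only $\Delta\gtrsim\sqrt{\log k}$, so the summation must be organized more carefully, separating nearest lattice points (handled by a Diophantine/"most directions" argument) from far ones. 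The set $\m{U}_M$ is then defined to exclude the (measure $O_M(k^{-10})$) set of $\bu$ for which some short lattice vector is atypically close to $\RR\bu$; for $\bu\in\m{U}_M$ one gets a clean bound that is $\le k^{-M^2}$ after absorbing the $\sqrt{\log k}$ factors into the constant $\ConstDelta$ and using $\Delta\ge\ConstDelta M\sqrt{\log k}$.

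More precisely, I would split the sum at $\|\bm{m}\|\le\rho$ versus $\|\bm{m}\|>\rho$ for a threshold $\rho$ chosen so that (i) for the finitely many $\bm{m}$ with $\|\bm{m}\|\le\rho$, the "bad direction" set $\{\bu:\mathrm{dist}(\bu,\RR(\Delta\bm{m}/\|\Delta\bm{m}\|))\lesssim R/(\Delta\|\bm{m}\|)\cdot\text{(slack)}\}$ has total measure $O_M(k^{-10})$ — this defines $\m{U}_M$, and the threshold $\rho$ is polynomial in $k$; the constraint $\nu^{1/(2(k-1))}$ enters here, since controlling the largest relevant $\|\bm{m}\|$ against the segment length $L\approx\sqrt{\nu\log k}$ forces $\Delta^{k-1}\gtrsim\nu$; (ii) for $\|\bm{m}\|>\rho$ the tail sum is negligible by the geometric decay. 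The main obstacle, as I see it, is precisely this bookkeeping: getting the threshold effect at $\Delta\approx\sqrt{\log k}$ rather than $\Delta\approx\sqrt k$ requires exploiting that the segment is \emph{one-dimensional} (length $\approx\sqrt{\nu}$, not a ball of radius $\sqrt\nu$) — the width of the tube is $R\approx\sqrt k$ but the relevant "volume" the lattice must avoid is $L\cdot R^{k-1}$, and matching $R^{k-1}$ against $\Delta^{k-1}\cdot(\text{number of shells})$ is where the $\sqrt{\log k}$ versus $\sqrt k$ distinction is won or lost. I expect the anti-concentration estimate for $\mathrm{dist}(\bm{\ell},\RR\bu)$ under $\bu\sim\Unif(\SphereK)$ (a Beta-type computation) and the careful definition of $\m{U}_M$ to be the technical heart, with the rest being union bounds and Gaussian tail estimates of the type already collected in the Appendix.
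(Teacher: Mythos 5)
Your reduction of the bad event to ``$\bX$ lands within $R$ of a nonzero point of $\Delta\ZZ^k$'' is correct, and the machinery you outline (discretize the segment, cap-measure anti-concentration for $\mathrm{dist}(\Delta\bt,\RR\bu)$ under $\bu\sim\Unif(\SphereK)$, packing count over lattice shells, union bound) is essentially what the paper uses. But there is a genuine gap at the core of the proposal: your central geometric claim --- that for most $\bu$ the tube of radius $O(R)=O(\sqrt{k})$ around the \emph{full} segment $\{t\bu:|t|\le L\}$ contains no nonzero point of $\Delta\ZZ^k$ once $\Delta\gtrsim\sqrt{\log k}$ --- is false for \emph{every} $\bu$. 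That tube contains the ball $\m{B}(\0,R)$ around $t=0$, and since $\Delta\approx\sqrt{\log k}\ll R=\Theta(\sqrt{k})$ this ball already contains exponentially many nonzero lattice points (all $\Delta\bt$ with $\|\bt\|\le R/\Delta$). Correspondingly, for small $|s|$ the event ``$s\bu+\bZ\in\m{B}(\bm{\ell},R)$ for some $\bm{\ell}\ne\0$'' has probability close to $1$, so the union bound cannot close there no matter how the shells are organized.

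The lemma is nonetheless true because for small $|s|$ the pair is good for a different reason: $\bX\in\Cube$, hence $\bX=\bY$. You invoke $\bX\notin\Cube$ only to rule out $\bm{\ell}=\0$ and then discard it; the missing idea is to use it \emph{quantitatively}. The paper first restricts to incoherent directions, $\|\SpikeVec\|_\infty\lesssim\sqrt{\log k/k}$ (which holds for all but an $O(k^{-10})$ fraction of $\SphereK$ and is half of the definition of $\m{U}_M$); then $\|\bX\|_\infty>\tfrac12\Delta$ together with $\|\bZ\|_\infty\lesssim\sqrt{\log k}$ forces $|s|=\sqrt{\Spike}|\xi|\gtrsim\Delta\sqrt{k/\log k}$. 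The segment is thus truncated to $A\le s\le B$ with $A\approx\Delta\sqrt{k/\log k}$, bounded away from the origin, and only then does the cap measure $(R/s)^{k-1}\lesssim(\sqrt{\log k}/\Delta)^{k-1}$, multiplied by the packing count of lattice points in the annulus $\|\Delta\bt\|\approx s$, decay exponentially at $\Delta\gtrsim\sqrt{\log k}$. Your heuristic sum correctly flags that otherwise one only gets $\Delta\gtrsim\sqrt{k}$, but the fix is not a finer Diophantine treatment of short lattice vectors --- without the lower truncation the set of admissible directions would be empty. (Your account of where $\Spike^{\frac{1}{2(k-1)}}$ enters, from the $\approx\sqrt{\Spike}/R$ segment points at large $s$, does match the paper.)
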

The proof appears in Appendix~\ref{sect:proof-lem:good-whp}. The key idea is to reduce the problem into a question in geometric probability: whether a randomly rotated line segment is far away from all non-zero lattice points. 

Lemma~\ref{lem:good-whp} readily gives the following bound; the details are {given} in Appendix~\ref{sec:proof-lem:epsPick-bound}:
\begin{lemma}
	\label{lem:epsPick-bound}
	Assume the setup of Lemma~\ref{lem:good-whp}, with ${M>\sqrt{12}}$, $\SpikeVec\in\m{U}_M$, $\Delta\ge \ConstDelta (M\sqrt{\log k}\vee \Spike^{\frac{1}{2(k-1)}})$. Suppose that $\PBall n\gtrsim \log k$. With probability $1-O(k^{-10})$:
		\[
\epsPick \lesssim \frac{k^{-M^2+12}}{\PBall} \,.
		\]
%
\end{lemma}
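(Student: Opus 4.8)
The goal is to bound $\epsPick$, which by \eqref{eq:epsPick} satisfies $\epsPick \lesssim k \cdot |\Kbad|/|\Kball|$. So the plan is to control the numerator $|\Kbad|$ and the denominator $|\Kball|$ separately, each with high probability, and combine.

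\textbf{Controlling $|\Kball|$ from below.} By Lemma~\ref{lem:Nball}, since we assumed $\PBall n \gtrsim \log k$ (large enough so that $2e^{-c_2 \PBall n} = O(k^{-10})$), with probability $1 - O(k^{-10})$ we have $|\Kball| \ge c_1 \PBall n$. This handles the denominator.

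\textbf{Controlling $|\Kbad|$ from above.} Each pair $(\bx_i, \by_i)$ is independently ``bad'' with probability $p_{\mathrm{bad}} := \Pr((\bX,\bY) \text{ is bad} \mid \SpikeVec) \le k^{-M^2}$ by Lemma~\ref{lem:good-whp} (whose hypotheses $\bu \in \m{U}_M$, $\Delta \ge \ConstDelta(M\sqrt{\log k} \vee \Spike^{1/(2(k-1))})$ are exactly the standing assumptions). Thus $|\Kbad| \sim \mathrm{Binomial}(n, p_{\mathrm{bad}})$ with mean $n p_{\mathrm{bad}} \le n k^{-M^2}$. Since $M^2 > 12$, $n k^{-M^2}$ is small; but since $\PBall n$ could be as small as $\Theta(\log k)$ we cannot simply say $|\Kbad| = 0$ w.h.p. — we need a Chernoff/Bernstein-type upper tail bound. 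Using the multiplicative Chernoff bound (Appendix, Lemma~\ref{lem:chernoff}), $\Pr(|\Kbad| \ge t) \le O(k^{-10})$ for $t \approx \max\{n k^{-M^2}, \log k\}$ (roughly: the upper tail of a Poisson-like variable with small mean is controlled once $t$ exceeds the mean plus $\Theta(\log k)$). So with probability $1 - O(k^{-10})$, $|\Kbad| \lesssim n k^{-M^2} + \log k$.

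\textbf{Combining.} On the intersection of the two events (still probability $1 - O(k^{-10})$),
\[
\epsPick \lesssim k \cdot \frac{|\Kbad|}{|\Kball|} \lesssim k \cdot \frac{n k^{-M^2} + \log k}{\PBall n} = \frac{k^{-M^2+1}}{\PBall} + \frac{k \log k}{\PBall n}\,.
\]
Since $\PBall n \gtrsim \log k$ and $\PBall = \Theta(1 \wedge \sqrt{k/\Spike})$, the second term is $\lesssim k/(\PBall n) \cdot \log k \lesssim k$ — this alone is too weak, so the claimed bound $\epsPick \lesssim k^{-M^2+12}/\PBall$ must rely on a stronger lower bound on $\PBall n$ than merely $\gtrsim \log k$; presumably the surrounding proof of Theorem~\ref{thm:main} supplies $\PBall n \gtrsim k^{12}$ (or the stated hypothesis should be read with the constant buried so that $\log k$ is replaced by a polynomial), in which case $k\log k/(\PBall n) \lesssim k^{-10}/\PBall \le k^{-M^2+12}/\PBall$ for $M^2 \le 22$. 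Then both terms are $\lesssim k^{-M^2+12}/\PBall$, giving the claim. The main obstacle is getting the exponent bookkeeping right: one must verify that the $\log k$ (or polynomial) slack from the Binomial upper tail on $|\Kbad|$ is dominated by $k^{-M^2+12}$ after dividing by $\PBall n$, which is why the somewhat generous constant $12$ (rather than something closer to $M^2$) appears in the statement.
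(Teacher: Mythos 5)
Your decomposition is the right one and matches the paper's up to the final step, but there is a genuine gap exactly where you flag it: the term $k\log k/(\PBall n)$ is not controlled by the hypothesis $\PBall n\gtrsim \log k$ alone, and your guess that the lemma secretly requires $\PBall n\gtrsim k^{12}$ is wrong --- no such strengthening is needed. The missing idea is a case split on the magnitude of $n$ itself. If $n\le k^{M^2-10}$, you do not need Chernoff at all: by Lemma~\ref{lem:good-whp}, $\Expt|\Kbad|\le nk^{-M^2}\le k^{-10}$, so Markov's inequality gives $\Pr(|\Kbad|\ge 1)=O(k^{-10})$; on the complementary event $\Kbad=\emptyset$, hence $\epsPick=0$ by Eq.~(\ref{eq:epsPick}) and the bound holds trivially. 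If instead $n\ge k^{M^2-10}$, your Chernoff bound $|\Kbad|\lesssim nk^{-M^2}\vee\log k$ together with $|\Kball|\gtrsim \PBall n$ yields
\[
\epsPick \;\lesssim\; \frac{k^{-M^2+1}}{\PBall} + \frac{k\log k}{n\PBall}
\;\le\; \frac{k^{-M^2+1}}{\PBall} + \frac{k^{-M^2+11}\log k}{\PBall}
\;\lesssim\; \frac{k^{-M^2+12}}{\PBall}\,,
\]
where the middle inequality uses the standing assumption $n\ge k^{M^2-10}$ of this case. This is precisely why the generous exponent $12$ appears: it absorbs the $k^{11}\log k$ slack in the large-$n$ regime, while the regime where $n$ (and hence $\PBall n$) is small is rescued by the first-moment argument rather than by the concentration bound. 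Everything else in your write-up --- the reduction $\epsPick\lesssim k\,|\Kbad|/|\Kball|$, the lower bound on $|\Kball|$ via Lemma~\ref{lem:Nball}, and the upper tail for $|\Kbad|$ --- agrees with the paper.
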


\subsection{Concluding the analysis}

So far, we have shown that $\|\hbSigma-\bSigmaTruc\|$ is small with high probability. To deduce that their largest eigenvectors are close as well (using eigenvector perturbation results), we first need to show that the spectral gap of $\bSigmaTruc$ is large.
%
We prove the following in Appendix~\ref{sec:proof-lem:spectral-gap}:
\begin{lemma}\label{lem:spectral-gap}
	There are universal $C_1,C_2$ such that for $\Spike \ge e^{-C_1k}$,
	\[
	\lambda_1(\bSigmaTruc) \ge 1 + C_2(k \wedge \Spike) \,.
	\]
	Consequently, $\lambda_1(\bSigmaTruc)-\lambda_2(\bSigmaTruc)\gtrsim  (k \wedge \Spike)$.
\end{lemma}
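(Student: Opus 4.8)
The plan is to analyze the eigenvalue formula from Proposition~\ref{obs:trunc-eigs}, namely
\[
\lambda_1(\bSigmaTruc) = \mu_1 = \Expt\left[ (1+\Spike)\, g_1^2 \,\Big|\, (1+\Spike)g_1^2 + \sum_{i=2}^k g_i^2 \le R^2 \right],
\]
where $g_1,\ldots,g_k \overset{i.i.d.}{\sim}\m{N}(0,1)$ (here I use that the eigenvalues of $\bSigma$ are $\lambda_1=1+\Spike$ and $\lambda_2=\cdots=\lambda_k=1$). Since $\lambda_2(\bSigmaTruc)\le 1$ by the bound already recorded just before the lemma statement (apply Lemma~\ref{lem:simple-convex} with $g(\bX)=\langle\bu_2,\bX\rangle^2$), it suffices to show $\mu_1 \gtrsim 1 + (k\wedge \Spike)$; the ``consequently'' then follows since $\lambda_1(\bSigmaTruc)-\lambda_2(\bSigmaTruc)\ge 1+C_2(k\wedge\Spike) - 1 = C_2(k\wedge\Spike)$.

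The main step is a lower bound on the conditional expectation. I would split into two regimes. \emph{Small spike} ($\Spike \lesssim k$, say $\Spike \le k$): here $R^2 = (2\sqrt k + z_2(0.1))^2 \ge 4k$ and $z_2(0.1)=\Theta(\sqrt k)$, so the conditioning event $(1+\Spike)g_1^2 + \sum_{i\ge 2}g_i^2 \le R^2$ has probability bounded below by a universal constant — indeed $\PBall = \Theta(1)$ when $\Spike = O(k)$ by Lemma~\ref{lem:Pball}. Restricting the expectation to the sub-event $\{g_1^2 \ge c\}\cap\{\sum_{i\ge2}g_i^2 \le 2k\}$ for a suitable small constant $c>0$ (still compatible with the constraint, since $(1+\Spike)c + 2k \le R^2$ for $c$ small once $\Spike\le k$), one gets $\mu_1 \ge (1+\Spike)\cdot c \cdot \Pr(\text{sub-event})/\PBall \gtrsim 1+\Spike \gtrsim 1 + (k\wedge\Spike)$. \emph{Large spike} ($\Spike > k$): now the constraint forces $g_1^2 \lesssim R^2/\Spike = \Theta(k/\Spike)$, which is small, so the dominant contribution is from $g_1^2$ near the largest value allowed given the other coordinates. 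Conditioned on $\sum_{i\ge2}g_i^2 = s$ (typically $s = \Theta(k)$), the variable $(1+\Spike)g_1^2$ is $(1+\Spike)$ times a $\chi^2_1$ truncated to $[0,(R^2-s)/(1+\Spike)]$; since the truncation level $(R^2-s)/(1+\Spike)$ is $\Theta(k/\Spike) = o(1)$, the truncated $\chi^2_1$ has mean of order the truncation level itself (a $\chi^2_1$ restricted to a tiny interval $[0,\tau]$ has mean $\approx \tau/3$), giving $\Expt[(1+\Spike)g_1^2 \mid \cdots] \gtrsim (1+\Spike)\cdot (R^2-s)/(1+\Spike) = R^2 - s \gtrsim k$ on the high-probability event $s \le R^2/2$. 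Averaging over $s$ (and noting $\Pr(s \le R^2/2 \mid \text{overall constraint})$ is bounded below) yields $\mu_1 \gtrsim k = k\wedge\Spike$. The condition $\Spike\ge e^{-C_1 k}$ is only needed so that $\PBall$ is not super-exponentially small in the residual edge cases (very small $\Spike$), ensuring the denominators above are controlled.

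The main obstacle I anticipate is making the large-spike computation clean: one must handle the \emph{joint} conditioning carefully, since $g_1$ and $\sum_{i\ge2}g_i^2$ are coupled through the single constraint. The cleanest route is to first condition on $T := \sum_{i=2}^k g_i^2$, use that $T$ concentrates around $k$ with $\Pr(T \le R^2 - k) $ bounded below (since $R^2 \ge 4k$), and then on that event lower-bound the inner conditional expectation of $(1+\Spike)g_1^2$ given $T=s$ by a direct estimate on the truncated $\chi^2_1$ density near zero — specifically $\Expt[G^2 \mid G^2 \le \tau] \ge \tau/4$ for $G\sim\m{N}(0,1)$ and $\tau$ small, which follows from $\int_0^{\sqrt\tau} t^2 e^{-t^2/2}\,dt \big/ \int_0^{\sqrt\tau} e^{-t^2/2}\,dt \to \tau/3$. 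Assembling these with the tower rule and dividing by $\PBall = \Theta(\sqrt{k/\Spike})$ (from Lemma~\ref{lem:Pball}) gives the bound; one should double-check that the $\PBall$ in the denominator does not accidentally inflate the estimate, but since we are lower-bounding $\mu_1$ and $\PBall \le 1$, dividing by it only helps.
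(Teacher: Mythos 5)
Your overall strategy---analyzing the ratio formula for $\mu_1$ from Proposition~\ref{obs:trunc-eigs}, splitting by the size of $\Spike$, and using the small-truncation asymptotics $\Expt[W\mid W\le\tau]\approx\tau/3$ for $W\sim\chi^2_1$ in the large-spike case---is the same as the paper's, and your large-spike argument is essentially sound. The joint-conditioning issue you flag there is real but fixable: either note that $s\mapsto\Pr(\text{constraint}\mid T=s)$ is decreasing so conditioning tilts $T$ downward, or, as the paper does, sidestep it entirely by lower-bounding the numerator via the product indicator $\Ind_{(1+\Spike)g_1^2\le k}\cdot\Ind_{\sum_{i\ge2}g_i^2\le B}$ (which factorizes by independence) and upper-bounding the denominator separately.

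The genuine gap is in the small-spike regime. Restricting to $\{g_1^2\ge c\}$ discards a constant fraction of the mass of $g_1^2$, so your argument only yields the multiplicative bound $\mu_1\ge c'(1+\Spike)$ for some constant $c'<1$. But the lemma asserts the additive form $\mu_1\ge 1+C_2(k\wedge\Spike)$, and your own deduction of the spectral gap, $\lambda_1-\lambda_2\ge 1+C_2(k\wedge\Spike)-1=C_2(k\wedge\Spike)$, requires exactly that form. For $\Spike\lesssim 1$---and in particular for $\Spike$ as small as $e^{-C_1 k}$---the quantity $c'(1+\Spike)$ is strictly below $1$ and gives nothing; your chain ``$\gtrsim 1+\Spike\gtrsim 1+(k\wedge\Spike)$'' silently replaces the needed additive guarantee by a multiplicative one. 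The repair (and the paper's route) is to show that for small $\Spike$ the conditioning removes only an exponentially small fraction of $\Expt[g_1^2]=1$, so that $\mu_1\ge(1+\Spike)(1-e^{-\Omega(k)})\ge 1+\Spike/2$ once $\Spike\ge e^{-C_1k}$ with $C_1$ small relative to the concentration constants. This also corrects your reading of the hypothesis $\Spike\ge e^{-C_1k}$: for small $\Spike$ one has $\PBall=1-e^{-\Omega(k)}$ (it degrades only for large $\Spike$), so the hypothesis is not about controlling the denominator; it is needed because the claimed gain $C_2\Spike$ over $1$ is itself exponentially small and must dominate the exponentially small truncation loss.
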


The proof of Theorem~\ref{thm:main} follows by combining our bounds thus far. The details appear in Appendix~\ref{sec:proof-thm:main}.

\subsection*{Acknowledgements} This work was supported in part by ISF under Grant 1791/17 and in part by the GENESIS Consortium via the Israel Ministry of Economy and Industry. The work of Elad Romanov was supported in part by an Einstein-Kaye fellowship from the Hebrew University of Jerusalem.

\bibliographystyle{apalike}
\bibliography{refs}

\newpage

\onecolumn

\appendix

\section{Proof of Proposition~\ref{obs:trunc-eigs}, Item~\ref{item:obs1}}
\label{sec:proof-obs:trunc-eigs}

Decompose $\bX$ along the principal components:
\[
\bX = \sum_{i=1}^k \sqrt{\lambda_i}g_i \bu_i \,,
\]
where $g_1,\ldots,g_k\overset{i.i.d.}{\sim} \m{N}(0,1)$. Then,
\begin{align*}
	\bSigmaTruc 
	&= \Expt\left[ \bX\bX^\T \,\big|\, \|\bX\|^2\le R^2 \right] 
	= \sum_{i,j=1}^k \Expt\left[ \sqrt{\lambda_i\lambda_j }g_ig_j \bu_i\bu_j^\T \,\Big|\, \sum_{i=1}^k \lambda_i g_i^2 \le R^2 \right] \,.
\end{align*}
Now, observe that the cross terms, $i\ne j$, are zero, since conditioning onto the ball preserves the symmetry $(g_i,g_j)\mapsto (-g_i,g_j)$. Thus,
\begin{equation*}
	\bSigmaTruc = \sum_{i=1}^k \Expt\left[ \lambda_i g_i ^2\bu_i\bu_i^\T \,\Big|\, \sum_{i=1}^k \lambda_i g_i^2 \le R^2 \right] =: \sum_{i=1}^k \mu_i \bu_i\bu_i^\T \,,
\end{equation*}
and so the claim is proved. 

\section{Proof of Proposition~\ref{prop:lower-bound}}
\label{sec:proof-prop:lower-bound}

For brevity, define
\begin{eqnarray}
	p(\Spike;\SpikeVec) = \Pr_{\bX\sim \m{N}(\0,\Spike\SpikeVec\SpikeVec^\T+ \Id)} \left( \bX= \widehat{\bX}_{\mathrm{MAP}} \right)\,,
\end{eqnarray}
where $\bX_{\mathrm{MAP}}$ is the MAP estimator of $\bX$ from $\bY=[\bX]\bmod \Delta$; in other words, $p(\Spike;\SpikeVec)$ is the success probability of the MAP estimator at SNR $\Spike$ with spike direction $\SpikeVec$. Recalling that the MAP estimator is optimal in the sense of error probability, it is clear that to prove Proposition~\ref{prop:lower-bound}, it suffices to show that $p(\Spike;\SpikeVec) \ge 0.1$ implies that $\Delta\gtrsim \Spike^{1/2k}\vee \sqrt{\log k}$. 

We start with a simple observation:
\begin{lemma}\label{lem:obs-decreasing}
	The function $\Spike \mapsto p(\Spike;\SpikeVec)$ is decreasing. 
\end{lemma}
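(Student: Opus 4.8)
The plan is to prove monotonicity by a coupling/reduction argument rather than by directly analyzing how the relevant Gaussian measure changes with $\Spike$. The idea: a decoder achieving success probability $p(\Spike';\SpikeVec)$ at a \emph{higher} SNR $\Spike'\ge \Spike$ can be turned into a (randomized) estimator of $\bX$ from $\bY$ at SNR $\Spike$ whose success probability is at least $p(\Spike';\SpikeVec)$; since $\widehat{\bX}_{\mathrm{MAP}}$ is optimal in the sense of error probability (as recalled just above the lemma), this forces $p(\Spike;\SpikeVec)\ge p(\Spike';\SpikeVec)$.

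Concretely, fix $0\le \Spike\le \Spike'$ and set $c:=\sqrt{\Spike'-\Spike}$. Couple the two SNRs by taking $\bX\sim \m{N}(\0,\Spike\SpikeVec\SpikeVec^\T+\Id)$ and $\eta\sim\m{N}(0,1)$ independent, and defining $\bX':=\bX+c\,\eta\,\SpikeVec$; since $(\Spike'-\Spike)\SpikeVec\SpikeVec^\T\succeq \0$, we have $\bX'\sim \m{N}(\0,\Spike'\SpikeVec\SpikeVec^\T+\Id)$. Write $\bY=[\bX]\bmod\Delta$ and $\bY'=[\bX']\bmod\Delta$. Because $\bY$ and $\bX$ lie in the same coset of $\Delta\ZZ^k$, so do $\bY+c\eta\SpikeVec$ and $\bX+c\eta\SpikeVec=\bX'$, whence $[\bY+c\eta\SpikeVec]\bmod\Delta=\bY'$.

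Now I would define an estimator for the SNR-$\Spike$ problem: on input $\by$, draw a fresh $\eta\sim\m{N}(0,1)$, form $\widetilde{\by}:=[\by+c\eta\SpikeVec]\bmod\Delta$, run the SNR-$\Spike'$ MAP decoder on $\widetilde{\by}$, and output $\widehat{\bX}_{\mathrm{MAP}}^{(\Spike')}(\widetilde{\by})-c\eta\SpikeVec$. Fed $\bY$ together with the independent $\eta$ from the coupling, this produces $\widetilde{\bY}=\bY'$, so on the event $\{\widehat{\bX}_{\mathrm{MAP}}^{(\Spike')}(\bY')=\bX'\}$ the estimator returns $\bX'-c\eta\SpikeVec=\bX$. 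That event has probability exactly $p(\Spike';\SpikeVec)$, since $(\bX',\bY')$ has precisely the joint law used to define $p(\Spike';\SpikeVec)$. Hence our (randomized) estimator recovers $\bX$ with probability at least $p(\Spike';\SpikeVec)$, and optimality of $\widehat{\bX}_{\mathrm{MAP}}$ at SNR $\Spike$ yields $p(\Spike;\SpikeVec)\ge p(\Spike';\SpikeVec)$, i.e.\ $\Spike\mapsto p(\Spike;\SpikeVec)$ is decreasing.

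The only place that needs real care is the exactness of the reduction: that adding $c\eta\SpikeVec$ to the \emph{already} modulo-reduced $\bY$ and reducing again reproduces $[\bX']\bmod\Delta$, and that the coupled pair $(\bX',\bY')$ carries the SNR-$\Spike'$ success probability verbatim. Beyond this bookkeeping I do not expect any technical obstacle; the heart of the argument is simply that a higher SNR is obtained from a lower one by injecting extra Gaussian signal along the fixed direction $\SpikeVec$, and that this injection commutes with modulo reduction (being applied, and then undone, along the same direction). I would prefer this route over the alternative of writing $p$ as the standard Gaussian measure of the Voronoi cell of $\Delta\bSigma^{-1/2}\ZZ^k$ and tracking it in $\Spike$, since those Voronoi cells are not nested as $\Spike$ grows.
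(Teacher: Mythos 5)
Your proposal is correct and is essentially the same argument as the paper's: inject known Gaussian noise $\bN=\sqrt{\Spike'-\Spike}\,\eta\,\SpikeVec$ along the spike direction, observe that this commutes with modulo reduction so that $[\bY+\bN]\bmod\Delta$ has the law of the SNR-$\Spike'$ observation, decode with the higher-SNR MAP estimator, subtract $\bN$, and invoke optimality of the MAP estimator at SNR $\Spike$. The bookkeeping you flag (the coset/modulo identity and the distributional identity of the coupled pair) is handled identically in the paper.
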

\begin{proof}
	For any $\Spike\ge 0$, denote $\bX_\Spike \sim \m{N}(\0,\Spike\SpikeVec\SpikeVec^\T+ \Id)$ and $\bY_\Spike = [\bX_\Spike]\bmod \Delta$, and let $g_\Spike : [-\Delta/2,\Delta/2)^k \to \RR^k$ be a deterministic function such that $g_\Spike(\bY_\Spike)$ is the MAP estimator for $\bX_\Spike$. Fix any $\tau\ge 0$; we shall now construct a suboptimal estimator for $\bX_\Spike$ given $\bY_\Spike$, based on $g_{\Spike+\tau}(\cdot)$. The idea is simple: we generate known noise $\bN\sim \m{N}(\0,\tau\SpikeVec\SpikeVec^\T)$ and set $\bY'=[\bY_\Spike + \bN]\bmod \Delta$, which also equals $\bY'=[\bX_\Spike+\bN]\bmod \Delta$. Note that $(\bX_{\Spike+\tau},\bY_{\Spike+\tau}) \overset{d}{=} (\bX_{\Spike}+\bN,\bY')$. Considering the sub-optimal estimator $\widehat{\bX}'=g_{\Spike+\tau}(\bY')-\bN$ for $\bX_\Spike$, we conclude,
	\[
	p({\Spike;\SpikeVec}) \ge 
	\Pr(\bX_\Spike= \widehat{\bX}') =
	\Pr\left( \bX_\Spike = g_{\Spike+\tau}(\bY')-\bN\right) = \Pr\left( \bX_{\Spike+\tau} = g_{\Spike+\tau}(\bY_{\Spike+\tau})\right) = p({\Spike+\tau;\SpikeVec}) \,.
	\]
\end{proof}

Let us start by showing $\Delta\gtrsim \sqrt{\log k}$. By Lemma~\ref{lem:obs-decreasing}, the assumptions of Proposition~\ref{prop:lower-bound} imply that $p(0;\SpikeVec)\ge 0.1$. Now, it is easy to see that when $\Spike=0$, $\widehat{\bX}_{\mathrm{MAP}}=\bY$; in this case, the problem simply decouples across the different coordinates. Thus, 
\[
0.1 \le p(0;\SpikeVec) = \Pr(\bX=\bY)=\Pr\left( |X_i|\le \Delta/2\,\textrm{ for all }1\le i \le k)\right) = \left\{\erf\left(\Delta/2^{3/2}\right)\right\}^k \,,
\]
and therefore $\erf(\Delta/2^{3/2})\ge 1-ck^{-1}$ for some universal $c\ge 0$. Clearly, then, $\Delta=\omega(1)$ for large $k$, so by the standard estimate $1-\erf(x)\gtrsim e^{-x^2}/x$ (for large $x$), we get $e^{-c_1\Delta^2}/\Delta \lesssim 1/k$ hence $\Delta \gtrsim \sqrt{\log k}$.

It remains to show $\Delta\gtrsim \Spike^{1/2k}$. To that end, we will use a simple geometric characterization of the MAP estimator, following \cite{romanov2021blind}. Let 
\[
\m{L} = \Delta\bSigma^{-1/2}\ZZ^k = \left\{ \Delta\bSigma^{-1/2}\bt \, : \, \bt\in \ZZ^k \right\}
\]
be the lattice generated by the matrix $\Delta\bSigma^{-1/2}$, and denote by $V_0 \subset \RR^k$ the Voronoi cell of $\0\in \m{L}$ (that is, all points $\ba\in \RR^k$ whose closest lattice point is $\0\in \m{L}$). 
By \cite[Section III, Eq. (38)]{romanov2021blind}, the success probability of the MAP estimator is
\[
\Pr\left( \widehat{\bX}^{\mathrm{MAP}}(\bY) = \bX \right) = \Pr_{\bZ\sim \m{N}(0,\bI_k)}\left( \bZ \in V_0 \right) \,.
\]
Now, it is a well-known fact that $V_0$ is a convex symmetric set, with $\mathrm{vol}_{k}(V_0)=|\Delta\bSigma^{-1/2}|=\Delta^{k}/|\bSigma|^{1/2}$. Let $r_0$ be the \emph{effective radius} of $\m{L}$, defined by 
\[
\mathrm{vol}_k(\m{B}(\0,r_0))=\mathrm{vol}_{k}(V_0) \Longrightarrow r_0 = \frac{\Delta}{|\bSigma|^{\frac{1}{2k}}\VolK^{1/k}}\,,
\]
($\VolK$ denotes the volume of the Euclidean unit ball). 
Recall that among all convex bodies with a given (finite) volume, a ball has the largest Gaussian measure. Thus,
\[
0.1 \le \Pr_{\bZ\sim \m{N}(0,\bI_k)}\left( \bZ \in V_0 \right) \le \Pr_{\bZ\sim \m{N}(0,\bI_k)}\left( \bZ \in \m{B}(\0,r_0) \right) = \Pr_{\bZ\sim \m{N}(0,\bI_k)}\left( \|\bZ\|^2 \le r_0^2 \right) \,.
\]
Note that $\|\bZ\|^2\sim \chi^2_{(k)}$, which concentrates around $k$ with ``typical'' deviations of order $O(\sqrt{k})$ (see, e.g., Lemma~\ref{lem:gaussian-tail-bounds}). This gives $r_0\gtrsim \sqrt{k}$, so
\[
\Delta \gtrsim  |\bSigma|^{\frac{1}{2k}} \left( \VolK^{1/k}\sqrt{k}\right) = (1+\Spike)^{1/2k}\left( \VolK^{1/k}\sqrt{k}\right) \gtrsim (1+\Spike)^{1/2k}\,,
\]
where the last inequality follows from Stirling's approximation: $\VolK \overset{k\to\infty}{\approx} \frac{1}{\sqrt{k \pi}}\left( \frac{2\pi e}{k}\right)^{k/2}$, and therefore ${\VolK^{1/k}\sqrt{k}=\sqrt{2\pi e} + o(1) = \Theta(1)}$.

\section{Proof of Lemma~\ref{lem:epsEst-bound}}
\label{sec:proof-lem:epsEst-bound}
	
Decompose $\bXTruc$ along the principal components:
\[
\bXTruc = w\bu + \bp\,,
\]
where $w$ is the projection along $\bu$ and $\bp$ is the orthogonal complement. Note that, while $w$ and $\bp$ are uncorrelated, they are \textbf{not} independent (as was the case without truncation, for a Gaussian vector) since we condition on $w^2+\|\bp\|^2\le R^2$. Also, recalling the ``spiky'' structure of $\bSigmaTruc$,
\begin{equation}
\Expt(w^2)=\lambda_1(\bSigmaTruc),\quad \Expt[\bp\bp^\T] = \bSigmaTruc - \lambda_1(\bSigmaTruc)\cdot \SpikeVec\SpikeVec^\T = \lambda_2(\bSigmaTruc)\cdot (\Id - \SpikeVec\SpikeVec^\T) \,.
\end{equation}

Condition on $N=|\Kball|$, and denote for convenience $\Kball=\{1,\ldots,N\}$, so that $\{\bx_i\}_{i=1}^N$ are i.i.d. measurements from $\bXTruc$. Write
\begin{equation}
	\label{eq:hbSigmaX-decomposition}\begin{split}
			\hbSigmaX 
		&= \frac{1}{N} \sum_{i=1}^N \left( w_i\bu + \bp_i \right)\left( w_i\bu+\bp_i \right)^\T \\
		&= \left( \frac1N\sum_{i=1}^N w_i^2 \right)\bu\bu^\T + \left(\frac1N \sum_{i=1}^N w_i\bp_i\right)\bu^\T  + \left(\frac1N\left(\sum_{i=1}^N w_i\bp_i\right)\bu^\T\right)^\T + \frac1N \sum_{i=1}^N \bp_i\bp_i^\T \,,
	\end{split}
\end{equation}
so that the error can be decomposed as $\left\| \hbSigmaX-\bSigmaTruc \right\|\le \eps_1 + 2\eps_2 + \eps_3$, with
\begin{equation}\label{eq:hbSigmaX-eps}
	\begin{split}
		\eps_1 &:= \left| \frac1N\sum_{i=1}^N w_i^2 - \Expt[w^2] \right|\,,\\
		\eps_2 &:= \left\| \frac1N \sum_{i=1}^N w_i\bp_i \right\| \,, \\
		\eps_3 &:= \left\| \frac1N \sum_{i=1}^N \bp_i\bp_i^\T - \Expt\left[\bp\bp^\T\right] \right\| \,.
	\end{split}
\end{equation}

We first show that $\bXTruc$ inherits the sub-Gaussian concentration properties of $\bX$. 
We denote, respectively, the sub-Gaussian and sub-exponential norms by $\|\cdot\|_{\psi_2}$ and $\|\cdot \|_{\psi_1}$. For a quick reminder on these norm (and Orlicz norms in general), see Definition~\ref{def:orlicz-norm} and Lemma~\ref{lem:orlicz-props}.

\begin{lemma}\label{lem:orlicz-norms-est}
	We have
	\[
\|\bp\|_{\psi_2} \lesssim 1\,,\quad 	\|w^2\|_{\psi_1} \lesssim k\wedge (1+\Spike)\,.
	\]
	and
	\[
	\left\|w \bp \right\|_{\psi_1} \lesssim \sqrt{k\wedge (1+\Spike)}\,.
	\]
\end{lemma}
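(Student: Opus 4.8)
The strategy is to reduce each Orlicz-norm bound on a function of $\bXTruc$ to the same quantity evaluated on the untruncated Gaussian $\bX$, which has explicitly known sub-Gaussian/sub-exponential behavior, and then to absorb the effect of truncation using Lemma~\ref{lem:simple-convex}. Recall the decomposition $\bXTruc = w\bu + \bp$ with $w=\langle\bu,\bXTruc\rangle$ and $\bp = (\Id-\bu\bu^\T)\bXTruc$. For the untruncated vector $\bX = \sqrt{\Spike}\xi\bu + \bZ$, the corresponding projections are $w_{\bX} = (\sqrt{\Spike}\xi + \langle\bu,\bZ\rangle) \sim \m{N}(0,1+\Spike)$ and $\bp_{\bX} = (\Id-\bu\bu^\T)\bZ$, a standard Gaussian in the $(k-1)$-dimensional orthogonal complement. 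These have the obvious bounds $\|\bp_{\bX}\|_{\psi_2}\lesssim 1$ (a standard Gaussian vector restricted to a subspace), $\|w_{\bX}^2\|_{\psi_1}\lesssim 1+\Spike$ (square of a centered Gaussian), and $\|w_{\bX}\bp_{\bX}\|_{\psi_1} \le \|w_{\bX}\|_{\psi_2}\|\|\bp_{\bX}\|\|_{\psi_2} \lesssim \sqrt{1+\Spike}$ by the product property of Orlicz norms (Lemma~\ref{lem:orlicz-props}).

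The first key step is to pass from these to the truncated analogues via Lemma~\ref{lem:simple-convex}. The Orlicz norm $\|Y\|_{\psi_\alpha} = \inf\{t>0 : \Expt\psi_\alpha(|Y|/t)\le 1\}$ is defined through an expectation of a convex increasing function of $|Y|$. For $\|\bp\|_{\psi_2}$: the map $\bv\mapsto \psi_2(\|(\Id-\bu\bu^\T)\bv\|/t)$ is convex on $\RR^k$ (a convex increasing function composed with a seminorm), so Lemma~\ref{lem:simple-convex} gives $\Expt\psi_2(\|\bp\|/t)\le \Expt\psi_2(\|\bp_{\bX}\|/t)$, and taking $t = \|\bp_{\bX}\|_{\psi_2}$ shows $\|\bp\|_{\psi_2}\le\|\bp_{\bX}\|_{\psi_2}\lesssim 1$. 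The same argument applies verbatim to $w^2$ (via the convex map $\bv\mapsto\psi_1(\langle\bu,\bv\rangle^2/t)$) and to $w\bp$ (via $\bv\mapsto\psi_1(|\langle\bu,\bv\rangle|\,\|(\Id-\bu\bu^\T)\bv\|/t)$; note this is convex as a function of $\bv$ since $|\langle\bu,\bv\rangle|\cdot\|(\Id-\bu\bu^\T)\bv\|$ is a product of two seminorms supported on orthogonal subspaces, hence convex — this is the one spot to be slightly careful, but it follows because $(a,b)\mapsto ab$ is convex on the positive quadrant when composed appropriately, or more simply from $ab\le\frac12(a^2/s + sb^2)$ and taking infima). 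This yields $\|w^2\|_{\psi_1}\lesssim 1+\Spike$ and $\|w\bp\|_{\psi_1}\lesssim\sqrt{1+\Spike}$.

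The remaining point is the $k\wedge(\cdot)$ improvement: we also need $\|w^2\|_{\psi_1}\lesssim k$ and $\|w\bp\|_{\psi_1}\lesssim\sqrt{k}$, which is the binding bound when $\Spike\gg k$. This follows from the deterministic constraint $\bXTruc\in\m{B}(\0,R)$ with $R=\Theta(\sqrt{k})$: since $w^2\le\|\bXTruc\|^2\le R^2 = \Theta(k)$ almost surely, a bounded random variable has $\|w^2\|_{\psi_1}\lesssim\|w^2\|_\infty\lesssim k$; similarly $|w|\cdot\|\bp\|\le\|\bXTruc\|^2\le R^2$, giving $\|w\bp\|_{\psi_1}\lesssim k$, but we want $\sqrt{k}$ — for this use $|w|\cdot\|\bp\|\le \frac12\|\bXTruc\|\cdot(|w|+\|\bp\|)\cdot\mathbf{1}$... more directly, bound $\|w\bp\|_{\psi_1}\le\|w\|_{\psi_2}\|\|\bp\|\|_{\psi_2}$ with $\|w\|_{\psi_2}\le\|w^2\|_{\psi_1}^{1/2}\lesssim\sqrt{k\wedge(1+\Spike)}$ and $\|\|\bp\|\|_{\psi_2}=\|\bp\|_{\psi_2}\lesssim 1$. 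Combining the two regimes by taking the minimum gives the stated bounds. I expect the main (minor) obstacle to be the convexity verification for the $w\bp$ term; everything else is a routine application of Lemma~\ref{lem:simple-convex} together with standard Orlicz-norm manipulations from Lemma~\ref{lem:orlicz-props}.
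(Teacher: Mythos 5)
Your overall strategy is the same as the paper's (reduce to the untruncated Gaussian via Lemma~\ref{lem:simple-convex}, get the $k$ bound from the deterministic constraint $|w|\le R$ via Hoeffding's lemma, and handle the cross term with the product property of Orlicz norms), but two steps are wrong as written. First, the identity $\left\|\,\|\bp\|\,\right\|_{\psi_2}=\|\bp\|_{\psi_2}$ is false: by Definition~\ref{def:orlicz-norm} the vector norm $\|\bp\|_{\psi_2}$ is $\sup_{\bv\in\SphereK}\|\langle\bv,\bp\rangle\|_{\psi_2}$, whereas the scalar Orlicz norm of the Euclidean length of a (near-)standard Gaussian in $k-1$ dimensions is $\Theta(\sqrt{k})$, not $O(1)$ — the requirement $\Expt\exp(\|\bp\|^2/t^2)\le 2$ forces $t\gtrsim\sqrt{\Expt\|\bp\|^2}\approx\sqrt{k}$. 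Consequently your derivation of $\|\bp\|_{\psi_2}\lesssim 1$ via the convexity of $\bv\mapsto\psi_2(\|(\Id-\bu\bu^\T)\bv\|/t)$ controls the wrong quantity. The correct (and the paper's) route is to apply Lemma~\ref{lem:simple-convex} direction by direction, to $g(\bx)=\langle\bv,(\bI-\bu\bu^\T)\bx\rangle$ for each fixed $\bv$, and then take the supremum; similarly the cross term must be bounded as $\|w\bp\|_{\psi_1}=\sup_{\bv}\|w\langle\bv,\bp\rangle\|_{\psi_1}\le\|w\|_{\psi_2}\sup_{\bv}\|\langle\bv,\bp\rangle\|_{\psi_2}=\|w\|_{\psi_2}\|\bp\|_{\psi_2}$, which is exactly the paper's final chain.

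Second, the spot you flagged as needing care does in fact fail: the product of two seminorms supported on orthogonal subspaces is \emph{not} convex. In $\RR^2$, $f(x,y)=|x|\cdot|y|$ vanishes at $(1,0)$ and $(0,1)$ but equals $1/4$ at their midpoint. The AM--GM trick $ab\le\frac12(a^2/s+sb^2)$ does not repair this, so Lemma~\ref{lem:simple-convex} cannot be applied directly to $w\bp$. This attempted step is, however, redundant: once you have $\|w\|_{\psi_2}\lesssim\sqrt{k\wedge(1+\Spike)}$ (from $\|w\|_{\psi_2}^2=\|w^2\|_{\psi_1}$, combining the Lemma~\ref{lem:simple-convex} bound with $|w|\le R$) and the correctly-derived $\|\bp\|_{\psi_2}\lesssim 1$, the product property alone gives $\|w\bp\|_{\psi_1}\lesssim\sqrt{k\wedge(1+\Spike)}$, with no need for any convexity of the product.
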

\begin{proof}
	Let $\psi : [0,\infty) \to [0,\infty)$ be convex and increasing, and {let $g:\RR\to\RR$ be such that $\bx\mapsto |g(\bx)|$ is convex.}
	Observe that $x\mapsto \psi(|g(x)|)$ is convex, and consequently, by Lemma~\ref{lem:simple-convex}, $\|g(\bXTruc)\|_{\psi} \le \|g(\bX)\|_{\psi}$, where $\|\cdot \|_{\psi}$ is the Orlicz $\psi$-norm (see Definition~\ref{def:orlicz-norm}). {Consequently},
	\[
	\|w^2\|_{\psi_1} \le \|\langle \bu,\bX\rangle^2\|_{\psi_1} \lesssim 1+\Spike,\quad \|\bp\|_{\psi_2} \le \left\| (\bI-\bu\bu^\T)\bX \right\|_{\psi_2} \lesssim 1 \,.
	\]
	{Furthermore, } using Lemma~\ref{lem:orlicz-props}, Items \ref{item:orlicz1} and \ref{item:orlicz5}, 
{
	\[
	\|w^2\|_{\psi_1} = \|w\|_{\psi_2}^2 \lesssim \|w\|^2_{\infty} \le R^2 \lesssim k \,.
	\]}
	This proves the first two bound. As for the last one,
{	
	\[
	\|w\bp\|_{\psi_1} = \sup_{\bv\in\SphereK} \|w\langle \bv,\bp\rangle \|_{\psi_1} 
	\overset{(\star)}{\le} \|w\|_{\psi_2}\cdot \sup_{\bv\in\SphereK}\|\langle \bv,\bp\rangle\|_{\psi_2} = \|w\|_{\psi_2}\|\bp\|_{\psi_2} \overset{(\star \star)}{\lesssim} \sqrt{k\wedge (1+\Spike)} \,,
	\]
	where $(\star)$ follows from Lemma~\ref{lem:orlicz-props}, Item \ref{item:orlicz2}, and $(\star \star)$ follows from Lemma~\ref{lem:orlicz-props}, Item~\ref{item:orlicz1}, and the first part of this proof. 
}
\end{proof}

We now bound the errors $\eps_1,\eps_2,\eps_3$, again conditioned on $N=|\Kball|$:

\begin{lemma}\label{lem:eps1}
	Assume that $N\gtrsim \log k$. Then, with probability $1-O(k^{-10})$, 
	\[
	\eps_1 \lesssim (k\wedge (1+\Spike)) \sqrt{\frac{\log k}{N}} \,.
	\]
\end{lemma}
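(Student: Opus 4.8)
\textbf{Proof plan for Lemma~\ref{lem:eps1}.}

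The plan is to recognize $\eps_1 = \left| \frac1N\sum_{i=1}^N w_i^2 - \Expt[w^2] \right|$ as a deviation of an empirical average of i.i.d. non-negative random variables $w_1^2,\ldots,w_N^2$ around their common mean, and to apply Bernstein's inequality for sub-exponential random variables. The key ingredient is already in hand: by Lemma~\ref{lem:orlicz-norms-est}, $\|w^2\|_{\psi_1}\lesssim k\wedge(1+\Spike)$, so each $w_i^2$ is sub-exponential with parameter $K:=\|w^2-\Expt[w^2]\|_{\psi_1}\lesssim k\wedge(1+\Spike)$ (recentering only changes the Orlicz norm by a universal constant, via Lemma~\ref{lem:orlicz-props}). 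Then, for a standard sub-exponential Bernstein bound (e.g.\ Vershynin's formulation), for any $t>0$,
\[
\Pr\left( \eps_1 \ge t \right) \le 2\exp\left( -c N \min\left\{ \frac{t^2}{K^2}, \frac{t}{K} \right\} \right)\,.
\]

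Next I would choose $t = C K \sqrt{(\log k)/N}$ for a suitably large universal constant $C$. Since we are in the regime $N\gtrsim \log k$, we have $\sqrt{(\log k)/N}\le 1$, hence $t/K \ge t^2/K^2$, so the minimum in the exponent equals $t^2/K^2 = C^2 (\log k)/N$, and the right-hand side becomes $2\exp(-c C^2 \log k) = 2k^{-cC^2}$. Picking $C$ large enough that $cC^2 \ge 10$ yields $\Pr(\eps_1 \ge t) = O(k^{-10})$, which is exactly the claimed bound $\eps_1 \lesssim (k\wedge(1+\Spike))\sqrt{(\log k)/N}$ once $K\lesssim k\wedge(1+\Spike)$ is substituted.

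There is no serious obstacle here — this is the ``standard lines'' computation the text alludes to. The only points requiring minor care are: (i) everything is conditional on $N=|\Kball|$, but conditioned on that event $\{\bx_i\}_{i\in\Kball}$ are genuinely i.i.d.\ copies of $\bXTruc$ (as noted in the high-level view), so $w_i^2$ are i.i.d.\ and Bernstein applies cleanly; (ii) the recentering step, which is routine; and (iii) making sure the hypothesis $N\gtrsim \log k$ is used precisely to place us in the ``sub-Gaussian tail'' branch of Bernstein rather than the ``sub-exponential tail'' branch — if $N$ were much smaller than $\log k$ the bound would instead read $\eps_1\lesssim K(\log k)/N$. With these remarks the proof is essentially immediate.
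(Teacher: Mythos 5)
Your proof is correct and follows essentially the same route as the paper's: centralization of the $\psi_1$-norm via Lemma~\ref{lem:orlicz-props}, the sub-exponential bound $\|w^2\|_{\psi_1}\lesssim k\wedge(1+\Spike)$ from Lemma~\ref{lem:orlicz-norms-est}, Bernstein's inequality, and the choice $t\asymp K\sqrt{(\log k)/N}$ with $N\gtrsim\log k$ placing the deviation in the sub-Gaussian branch. No gaps.
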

\begin{proof}
	By the centralization Lemma (Lemma~\ref{lem:orlicz-props}, Item~\ref{item:orlicz3}) and Lemma~\ref{lem:orlicz-norms-est},
	\[
	\left\|w_i^2-\Expt\left[w_i^2\right]\right\|_{\psi_1}\le \|w_i^2\|_{\psi_1} \lesssim k\wedge (1+\Spike) \,.
	\]
	By Bernstein's inequality (Lemma~\ref{lem:bernstein}),
	\[
	\Pr\left(\eps_1 \ge t\right) \le 2\exp \left[ -c_1N (\delta\wedge \delta^2)\right],\quad \delta := \frac{t}{k\wedge (1+\Spike)} \,.
	\]
	Set $t=\sqrt{\frac{{10}}{c_1}}(k\wedge(1+\Spike))\sqrt{\frac{\log k}{N}}$. Then whenever $N \ge \frac{10}{c_1}\log k$, the probability is $\le 2k^{-10}$.
\end{proof}

\begin{lemma}\label{lem:eps2}
	With probability $1-2e^{-\Omega(k)}$,
	\[
	\eps_2 \lesssim  \sqrt{k\wedge (1+\Spike)} \left(\frac{k}{N}\vee \sqrt{\frac{k}{N}}\right) \,.
	\]
\end{lemma}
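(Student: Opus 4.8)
The plan is to view $\eps_2$ as the operator norm of an empirical average of i.i.d.\ mean-zero random vectors and to control it by the usual net-plus-Bernstein argument. Write $\bv_i := w_i\bp_i$. Since $\bp_i\in\bu^\perp$, also $\bv_i\in\bu^\perp$, and by the symmetry $(w_i,\bp_i)\mapsto(-w_i,\bp_i)$ — which is preserved under conditioning onto $\m{B}(\0,R)$ — we have $\Expt[\bv_i]=\0$. (The fact that $w_i$ and $\bp_i$ are only uncorrelated and not independent is irrelevant below, since the Orlicz-norm product inequality we invoke does not require independence.) Let $\mathcal{N}$ be a $\tfrac12$-net of the unit sphere of $\bu^\perp\cong\RR^{k-1}$, with $|\mathcal{N}|\le 5^{k}$; then $\eps_2 = \bigl\|\tfrac1N\sum_{i=1}^N\bv_i\bigr\| \le 2\max_{\bw\in\mathcal{N}}\bigl\langle\bw,\tfrac1N\sum_{i=1}^N\bv_i\bigr\rangle$.

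Fix a unit $\bw\in\bu^\perp$. The summands $\langle\bw,\bv_i\rangle = w_i\langle\bw,\bp_i\rangle$ are i.i.d., mean zero, and sub-exponential with $\|\langle\bw,\bv_i\rangle\|_{\psi_1}\le \|w\bp\|_{\psi_1}\lesssim \sqrt{k\wedge(1+\Spike)}$ by Lemma~\ref{lem:orlicz-norms-est} (equivalently, by the product bound $\|w\|_{\psi_2}\|\langle\bw,\bp\rangle\|_{\psi_2}$ from Lemma~\ref{lem:orlicz-props}). Bernstein's inequality (Lemma~\ref{lem:bernstein}) therefore gives, for universal $c,C>0$ and any $t>0$,
\[
\Pr\left( \left|\tfrac1N\sum_{i=1}^N\langle\bw,\bv_i\rangle\right| \ge t \right) \le 2\exp\bigl(-cN(\delta\wedge\delta^2)\bigr),\qquad \delta:=\frac{t}{C\sqrt{k\wedge(1+\Spike)}},
\]
and a union bound over $\mathcal{N}$ bounds the failure probability by $2\cdot 5^{k}\exp(-cN(\delta\wedge\delta^2))$.

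To finish, I would pick $t$ so that $N(\delta\wedge\delta^2)\gtrsim k$ with enough slack to absorb $\log 5^{k}$, yielding failure probability $2e^{-\Omega(k)}$. If $k/N\ge 1$, take $\delta\asymp k/N\ge1$, so $\delta\wedge\delta^2=\delta$ and $N\delta\asymp k$; if $k/N<1$, take $\delta\asymp\sqrt{k/N}<1$, so $\delta\wedge\delta^2=\delta^2$ and $N\delta^2\asymp k$. In either case $\delta\asymp (k/N)\vee\sqrt{k/N}$, hence $t\asymp\sqrt{k\wedge(1+\Spike)}\bigl((k/N)\vee\sqrt{k/N}\bigr)$, and together with the factor $2$ from the net this is exactly the claimed bound on $\eps_2$. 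None of this is a genuine obstacle; the only mild care needed is in bookkeeping the two regimes of Bernstein's $\delta\wedge\delta^2$ exponent so that the single clean factor $(k/N)\vee\sqrt{k/N}$ emerges, and in noting that the quantity $\|w\bp\|_{\psi_1}$ controlled in Lemma~\ref{lem:orlicz-norms-est} is precisely what the net discretization requires.
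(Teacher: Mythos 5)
Your proposal is correct and follows essentially the same route as the paper: an $\eps$-net reduction of $\|\frac1N\sum_i w_i\bp_i\|$ to a maximum over $\le 5^k$ directions, the sub-exponential bound $\|w\bp\|_{\psi_1}\lesssim\sqrt{k\wedge(1+\Spike)}$ from Lemma~\ref{lem:orlicz-norms-est}, Bernstein plus a union bound, and the same choice of $t$ balancing the two regimes of $\delta\wedge\delta^2$. The only (immaterial) differences are that you net the unit sphere of $\bu^{\perp}$ rather than all of $\SphereK$ and justify $\Expt[w\bp]=\0$ via the sign symmetry rather than via uncorrelatedness.
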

\begin{proof}
	Set $\bq = \frac1N\sum_{i=1}^N w_i\bp_i$, and observe that $\Expt[\bq]=\0$, since $w_i$ and $\bp_i$ are uncorrelated. We want to bound $\|\bq\|$ with high probability; to that end, we use a standard $\eps$-net argument, executed in detail for the sake of completeness. Using \cite[Corollary 4.2.13]{vershynin2018high}, {fix} a $1/2$-net $\m{N}$ of $\SphereK$ of size $|\m{N}|\le 5^k$. Let $\tilde{\bv}\in\m{N}$ {be a member of the net, such that $\left\|\frac{\bq}{\|\bq\|}-\tilde{\bv}\right\|\le 1/2$. Now,
	\begin{align*}
\|\bq\| 
= \left\langle \bq,\frac{\bq}{\|\bq\|}\right\rangle 
= \left\langle \bq,\frac{\bq}{\|\bq\|}-\tilde{\bv}\right\rangle + \left\langle \bq,\tilde{\bv}\right\rangle 
\le  \|\bq\| \left\|\frac{\bq}{\|\bq\|} -\tilde{\bv}\right\| + \langle \bq,\tilde{\bv}\rangle 
\le \frac12 \|\bq\| + \langle \bq,\tilde{\bv}\rangle\,,
	\end{align*}
which implies $\|\bq\|\le 2\langle \bq,\tilde{\bv}\rangle$. }
	Consequently, $\|\bq\|\le 2\max_{\bv\in\m{N}} \langle \bq,\bv\rangle$, {so it suffices to bound the latter.} Recalling, by Lemma~\ref{lem:orlicz-norms-est}, that ${\|w_i\bp_i\|_{\psi_1}\lesssim \sqrt{k\wedge (1+\Spike)}}$, by Bernstein's inequality and a union bound over the net,
	\[
	\Pr\left( \max_{\bv\in\m{N}} \langle \bq,\bv\rangle \ge t \right) \le 2\cdot 5^k \cdot \exp \left[ -c_1N (\delta\wedge \delta^2)\right],\quad \delta := \frac{t}{\sqrt{k\wedge (1+\Spike)}}\,.
	\] 
	Set 
	\[
	t = \sqrt{k\wedge (1+\Spike)} \left( \frac{10}{c_1}\frac{k}{N} \vee \sqrt{\frac{10}{c_1}\frac{k}{N}} \right) \,,
	\]
	so that the probability is bounded by $2\cdot 5^k e^{-10k}=2e^{-\Omega(k)}$.
\end{proof}

\begin{lemma}\label{lem:eps3}
	With probability $1-2e^{-\Omega(k)}$,
	\[
	\eps_3 \lesssim \left(\frac{k}{N}\vee \sqrt{\frac{k}{N}}\right) \,.
	\]
\end{lemma}
\begin{proof}
	By Lemma~\ref{lem:orlicz-norms-est}, the vectors $\bp_i$ are $O(1)$-sub-Gaussian. 
	The claim follows by Lemma~\ref{lem:vershynin}, applied with $t\approx \sqrt{k}$, $\delta\approx \sqrt{\frac{k}{N}}$.
\end{proof}

\paragraph{Proof of Lemma~\ref{lem:epsEst-bound}.} The proof follows from Lemmas~\ref{lem:eps1}, \ref{lem:eps2} and \ref{lem:eps3}, combined with $N\gtrsim \PBall n$ from Lemma~\ref{lem:Nball}.

\section{Proof of Lemma~\ref{lem:good-whp}}
\label{sect:proof-lem:good-whp}

The core of the argument is this: we reduce the question of whether ${\Pr( (\bX,\bY)\textnormal{ is bad}\,|\,\bu)}$ is large to a \emph{geometric question}; specifically, whether a randomly rotated line segment is close to any non-zero lattice point.
The details proceed as follow.


Recall: the pair $(\bX,\bY)$ is bad when $\bY\in \m{B}(\0,R)$ but $\bX\notin \Cube := [-\frac12 \Delta,\frac12 \Delta)^k$. Our goal is to show that for most directions $\SpikeVec\in \SphereK$, the probability that $(\bX,\bY)$ is bad is small, specifically,
\[
\Pr\left( (\bX,\bY)\textnormal{ is bad} \,|\,\SpikeVec\right) \le k^{-M^2} \,.
\]

We start by constraining ourselves to a set of ``typical'' vectors $\bX$. 
As in Eq. (\ref{eq:X}), write, $\bX=\sqrt{\Spike}\xi\bu + \bZ$, for independent $\xi\sim \m{N}(0,1)$, $\bZ\sim \m{N}(\0,\bI)$. Let $\delta>0$ be a confidence parameter (we shall set $\delta=k^{-M^2}$ later), and consider the event
\begin{equation}\label{eq:EventX}
	\m{E}_{\bX} = \left\{ \|\bZ\|_2 \le z_2(\delta/3),\, \|\bZ\|_\infty \le z_\infty(\delta/3),\,|\xi|\le h(\delta/3) \right\}
\end{equation}
where $h(\delta)=\sqrt{2\log (2/\delta)}$ is such that $\Pr(|\xi|\ge h(\delta))\le \delta$,  and $z_2(\delta),z_{\infty}(\delta)$ are as in Eq. (\ref{eq:z2-zinf}). 
Clearly, $\Pr(\m{E}_{\bX})\le \delta$. 

Operating under $\m{E}_{\bX}$, let us bound the event $\{(\bX,\bY)\textnormal{ is bad} \}$ by another, larger, event. To start, note that $\bX\ne \bY$ implies that $\bX=\bY+\Delta \bt$ for some \emph{non-zero} lattice vector $\bt\in \ZZ^k\setminus\{\0\}$. Consequently, when $\bY\in \m{B}(\0,R)$, $\bX\ne \bY$ implies that $\bX \in \bigcup_{\bt\in \ZZ^k\setminus\{\0\}} \m{B}\left( \Delta\bt,R \right)$. Decomposing $\bX$, this further implies that 
\begin{equation}\label{eq:helper1}
\sqrt{\Spike}\xi \SpikeVec \in \bigcup_{\bt\in \ZZ^k\setminus\{\0\}} \m{B}\left( \Delta\bt,R + \|\bZ\| \right) \,.
\end{equation}
As for the condition $\bX\notin \Cube$, equivalently $\|\bX\|_{\infty}>\frac12 \Delta$, it follows from the triangle inequality that 
\begin{equation}\label{eq:helper2}
	\|\sqrt{\Spike}\xi \SpikeVec\|_\infty = |\sqrt{\Spike}\xi| \|\SpikeVec\|_{\infty} \ge \frac12 \Delta-\|\bZ\|_{\infty} \,.
\end{equation}
Let $\m{U}_{1,M}\subseteq \SphereK$ be the set of \emph{incoherent} directions, 
\begin{equation}
	\m{U}_{1,M} := \left\{ \SpikeVec\in \SphereK\,:\, \|\SpikeVec\|_{\infty} \le \frac{1}{C_1}\sqrt{\frac{\log k}{k}}\right\}\,,
\end{equation} 
with $C_1$ a universal constant such that $\Pr(\SpikeVec\in \m{U}_{1,M})=1-O(k^{-10})$ for $\SpikeVec\sim \Unif(\SphereK)$ (see Lemma~\ref{lem:unif-inf}). Now, under $\m{E}_{\bX}$, and assuming that $\SpikeVec\in \m{U}_{1,M}$, Eqs. (\ref{eq:helper1}) and (\ref{eq:helper2}) imply that 
\begin{equation}\label{eq:helper3}
	\sqrt{\Spike}\xi \SpikeVec \in \bigcup_{\bt\in \ZZ^k\setminus\{\0\}} \m{B}\left( \Delta\bt,R_\delta \right),\quad 
	|\sqrt{\Spike}\xi| \ge C_1 \Delta_\delta \sqrt{\frac{k}{\log k}}\,,
\end{equation}
where we set
\begin{equation}\label{eq:def-Delta-R-delta}
	\Delta_\delta := \frac12 \Delta - z_{\infty}(\delta/3),\quad R_\delta := R + z_2(\delta/3) \,.
\end{equation}
Henceforth, we shall assume $\Delta$ to be large enough so that $\Delta_\delta>0$. Consider the line segment, $L(\SpikeVec)\subset \RR^k$,
\begin{equation}\label{eq:Line-tilde}
	{L}(\SpikeVec) = \{s\SpikeVec\quad :\quad A \le s \le B\}\,,\quad\textrm{where}\quad A := C_1\Delta_{\delta} \sqrt{\frac{k}{\log k}},\,\,B:=\sqrt{\Spike}h(\delta/3)=\sqrt{2\Spike\log(6/\delta)} \,.
\end{equation}
Observe that under $\m{E}_{\bX}$, the occurrence of the event in Eq. (\ref{eq:helper3}) implies, in particular, that 
\[
L(\SpikeVec) \cap \bigcup_{\bt\in \ZZ^k\setminus\{\0\}} \m{B}\left( \Delta\bt,R_\delta \right) \ne \emptyset \,.
\]
Note that given $\SpikeVec$, this is a \emph{deterministic} geometric condition. Set 
\begin{equation}
	\m{U}_{2,M} := \left\{ \SpikeVec\in\SphereK\,: L(\SpikeVec) \cap \bigcup_{\bt\in \ZZ^k\setminus\{\0\}} \m{B}\left( \Delta\bt,R_\delta \right) = \emptyset \right\} \,.
\end{equation}
and
\begin{equation}
	\m{U}_{M} := \m{U}_{1,M} \cap \m{U}_{2,M} \,.
\end{equation}
Summarizing the preceding discussion, we have argued that whenever $\SpikeVec\in \m{U}_{M}$, the event $\m{E}_{\bX}$ already implies that $(\bX,\bY)$ are \emph{good}. Thus, for $\SpikeVec\in\m{U}_{M}$, 
\begin{equation}
\Pr\left( (\bX,\bY)\textnormal{ is bad} \,\big|\, \SpikeVec\right) \le \Pr(\m{E}_{\bX}^c\,|\,\SpikeVec)  = \Pr(\m{E}_{\bX}^c) \le \delta \,.
\end{equation}

The proof of Lemma~\ref{lem:good-whp} will follow from the following auxiliary result:

\begin{lemma}\label{lem:good-whp-helper}
	Fix a constant $M\ge 1$ and set $\delta=k^{-M^2}$. There is a universal constant $\ConstDelta>0$,  
	such that if $\Delta \ge \ConstDelta \left( M\sqrt{\log k}\vee \Spike^{\frac{1}{2(k-1)}} \right)$ then, for $\bu\sim \Unif(\SphereK)$, 
	\[
	\Pr(\SpikeVec\notin \m{U}_{2,M}) \le O_M(k^{-10}) \,.
	\]
\end{lemma}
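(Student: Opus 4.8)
The plan is to show that a uniformly random direction $\bu$ lies in $\m{U}_{2,M}$ with probability $1-O_M(k^{-10})$ by a union bound over lattice points, controlling for each nonzero $\bt\in\ZZ^k$ the probability that the random line segment $L(\bu)$ of length $\le B=\sqrt{2\Spike\log(6/\delta)}$ starting at radius $A\approx\Delta_\delta\sqrt{k/\log k}$ comes within $R_\delta=\Theta(\sqrt k)$ of $\Delta\bt$. Since $L(\bu)$ is contained in the spherical shell $\{A\le\|\bv\|\le B\}$, only lattice points $\bt$ with $\|\Delta\bt\|\le B+R_\delta$ can possibly be hit; call this finite set $\m{T}$. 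For a fixed such $\bt$, write $\bt=\|\bt\|\bw$ with $\bw\in\SphereK$: the segment $L(\bu)$ passes within $R_\delta$ of $\Delta\bt$ only if there is $s\in[A,B]$ with $\|s\bu-\Delta\|\bt\|\bw\|\le R_\delta$. Geometrically this forces both (i) $\langle \bu,\bw\rangle$ to be close to $1$ — i.e.\ $\bu$ to lie in a small spherical cap around $\bw$ of angular radius $\approx R_\delta/\|\Delta\bt\|$ (when $\|\Delta\bt\|\in[A-R_\delta,B+R_\delta]$) — and (ii) $\|\Delta\bt\|$ to be in the right range. So for each $\bt\in\m{T}$ the bad event is contained in a spherical cap $\mathrm{Cap}(\bw_\bt,\theta_\bt)$ with $\sin\theta_\bt\lesssim R_\delta/\|\Delta\bt\|$.

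The next step is the standard cap-measure estimate: $\Pr_{\bu\sim\Unif(\SphereK)}(\bu\in\mathrm{Cap}(\bw,\theta))\lesssim (\sin\theta)^{k-1}$ (up to a mild polynomial-in-$k$ prefactor, which we absorb), so the contribution of $\bt$ is $\lesssim (R_\delta/\|\Delta\bt\|)^{k-1}$. Summing over $\bt\in\m{T}$ and grouping lattice points by the dyadic scale of $\|\Delta\bt\|$: at scale $\|\Delta\bt\|\approx 2^j$ there are $\lesssim (2^j/\Delta)^k$ lattice points (a ball of radius $2^j$ contains $O((2^j/\Delta)^k)$ points of $\Delta\ZZ^k$ once $2^j\gtrsim \Delta\sqrt k$, which holds here since $A\gtrsim\Delta\sqrt k$ when $\Delta\gtrsim M\sqrt{\log k}$). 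Hence the scale-$2^j$ contribution is $\lesssim (2^j/\Delta)^k\cdot (R_\delta/2^j)^{k-1} = (R_\delta/\Delta)^{k-1}\cdot (2^j/\Delta)$, and summing the geometric-type series over the $O(\log(B/A))$ relevant scales up to $2^j\lesssim B+R_\delta$ gives a total bound of order $(R_\delta/\Delta)^{k-1}\cdot (B+R_\delta)/\Delta \lesssim (C\sqrt k/\Delta)^{k-1}\cdot \sqrt{\Spike}/\Delta$ (using $B\lesssim\sqrt{\Spike\log k}$, $R_\delta\asymp\sqrt k$, and folding logs into constants).

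Finally, I would check that the hypothesis $\Delta\ge\ConstDelta(M\sqrt{\log k}\vee \Spike^{1/(2(k-1))})$ makes this quantity $O_M(k^{-10})$: the factor $(C\sqrt k/\Delta)^{k-1}$ is at most $(C'/\!\sqrt{\log k})^{\,k-1}$ once $\Delta\ge\ConstDelta M\sqrt{\log k}$ with $\ConstDelta$ large, which decays faster than any polynomial in $k$ and in particular beats any fixed power of $M$; and the residual $\sqrt{\Spike}/\Delta$ is controlled because $\Delta\ge\ConstDelta\,\Spike^{1/(2(k-1))}$ forces $\sqrt{\Spike}\le (\Delta/\ConstDelta)^{k-1}$, so $\sqrt{\Spike}/\Delta\le (\Delta/\ConstDelta)^{k-1}/\Delta$, which when multiplied by $(C\sqrt k/\Delta)^{k-1}$ yields $(C\sqrt k/\ConstDelta)^{k-1}/\Delta \le k^{-10}$ for $\ConstDelta$ large. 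Combining, $\Pr(\bu\notin\m{U}_{2,M})\le O_M(k^{-10})$, as claimed.

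The main obstacle I anticipate is making the geometry of step one fully rigorous and uniform — precisely relating "segment $L(\bu)$ gets within $R_\delta$ of $\Delta\bt$" to membership of $\bu$ in an explicit spherical cap whose angular radius is correctly $\asymp R_\delta/\|\Delta\bt\|$ over the entire admissible range of $\|\Delta\bt\|$ (including the near-endpoint regimes $\|\Delta\bt\|\approx A$ and $\|\Delta\bt\|\approx B$, where the segment may be hit near its tip and the cap bound needs care), together with getting the lattice-point counting bound $|\{\bt:\|\Delta\bt\|\le r\}|\lesssim (r/\Delta)^k$ with the right constants in the regime $r\gtrsim\Delta\sqrt k$. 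The cap-measure bound and the summation are routine once these two geometric inputs are pinned down.
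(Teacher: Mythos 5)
Your high-level strategy is the same as the paper's: reduce $\SpikeVec\notin\m{U}_{2,M}$ to a union bound, control each term by the normalized surface measure of a small ball/cap on the sphere (your per-lattice-point cap bound is an equivalent reorganization of the paper's per-segment-point bound after it discretizes $[A,B]$ into $\approx (B-A)/R_\delta$ points), then count lattice points shell by shell. The geometric step you flag as the main obstacle is indeed fine. The real gap is in the lattice-point count, and it is fatal to the claimed range of $\Delta$.

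The number of points of $\Delta\ZZ^k$ in $\m{B}(\0,r)$ is, by the cube-packing argument, at most $\VolK\left(r/\Delta+\sqrt{k}/2\right)^k$, and the factor $\VolK\approx (C/\sqrt{k})^k$ is not a constant you can absorb: it is superexponentially small in $k$ and is exactly what makes the lemma true at $\Delta\asymp\sqrt{\log k}$. Your count $O((2^j/\Delta)^k)$ discards it. Carrying your own bound through honestly, the per-scale contribution is $(2^j/\Delta)^k(R_\delta/2^j)^{k-1}=(R_\delta/\Delta)^{k-1}\,2^j/\Delta$ with $R_\delta\asymp\sqrt{k}$, so the total is $\asymp(\sqrt{k}/\Delta)^{k-1}\sqrt{\Spike}/\Delta$, which is small only when $\Delta\gtrsim\sqrt{k}$, not $\sqrt{\log k}$. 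Your final paragraph hides this behind the false step ``$(C\sqrt{k}/\Delta)^{k-1}\le (C'/\sqrt{\log k})^{k-1}$ once $\Delta\ge\ConstDelta M\sqrt{\log k}$'' --- that inequality actually requires $\Delta\gtrsim\sqrt{k\log k}$. Restoring $\VolK$, the large scales contribute $\approx (C/\Delta)^{k-1}\cdot 2^j/(\Delta\sqrt{k})$ each, and the $M\sqrt{\log k}$ requirement comes from the \emph{bottom} scales $2^j\approx A\approx \Delta_\delta\sqrt{k/\log k}$: note $A$ is \emph{below} $\Delta\sqrt{k}$ (contrary to your parenthetical claim that $A\gtrsim\Delta\sqrt{k}$), so there the count saturates at $\VolK(2\sqrt{k})^k\approx C^k$ while the cap radius is $\approx R_\delta/A\approx\sqrt{\log k}/\Delta_\delta$, giving $C^k(\sqrt{\log k}/\Delta_\delta)^{k-1}$. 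The paper's Lemma~\ref{lem:packing-bound} exists precisely to handle this small-radius/large-radius dichotomy (and, at large radii, to use a thin-annulus rather than full-ball count so that summing $\approx B/R_\delta$ shells costs only one factor of $s_\ell$); your dyadic grouping can be made to work, but only after these corrections.
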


Lemma~\ref{lem:good-whp-helper} is purely a result in geometric probability. It states the following: take the 1D line segment $\tilde{L}(\be_1) \subseteq \RR^k$, and rotate it uniformly in space (apply a random rotation $U\sim \mathrm{Haar}(O(k))$). Then with high probability, the rotated segment will end up far away from all non-zero lattice points.
The remainder of this section is devoted to proving Lemma~\ref{lem:good-whp-helper}. 

Let us discretize the interval $[A,B]$ into disjoint sub-intervals of maximal length, such that the length of a sub-interval is $\le R_\delta$; let $s_0=A<s_1<\ldots<s_T=B$ be the corresponding end-points, and note that we may take $T\le \left\lceil \frac{B-A}{R_\delta} \right\rceil+1$. 
Clearly, any point in ${L}(\bu)$ must be $0.5R_\delta$-close to some point in $\{s_1\bu,\ldots,s_T\bu\}$. In particular, ${L}(\bu) \cap  \bigcup_{\bt\in \ZZ^k\setminus\{\0\}}\m{B}(\Delta\bt,R_\delta) \ne \emptyset$ implies that $s_\ell \bu \in \bigcup_{\bt\in \ZZ^k\setminus\{\0\}}\m{B}(\Delta\bt,1.5R_\delta)$ for some $1\le \ell \le T$. Consequently, 
\begin{equation}\label{eq:good-helper-bound1}
	\begin{split}
		\Pr\left( {L}(\bu) \cap  \bigcup_{\bt\in \ZZ^k\setminus\{\0\}}\m{B}(\Delta\bt,R_\delta) \ne \emptyset\right) 
		&\le \Pr\left( \{s_1\bu,\ldots,s_T\bu\} \cap  \bigcup_{\bt\in \ZZ^k\setminus\{\0\}}\m{B}(\Delta\bt,1.5R_\delta) \ne \emptyset\right)  \\
		&\le \sum_{\ell=1}^T \Pr\left( s_\ell\bu \in \bigcup_{\bt\in \ZZ^k\setminus\{\0\}}\m{B}(\Delta\bt,1.5R_\delta) \right) \\
		&= \sum_{\ell=1}^T \Pr\left( \bu \in \bigcup_{\bt\in \ZZ^k\setminus\{\0\}}\m{B}\left(\frac{\Delta}{s_\ell}\bt,\frac{1.5R_\delta}{s_\ell}\right) \right) \\
		&=: \sum_{\ell=1}^T p_\ell \,.
	\end{split}
\end{equation}
Since $\bu\sim \Unif(\SphereK)$, each term of Eq. (\ref{eq:good-helper-bound1}) is, by definition,
\begin{equation*}
	\begin{split}
		p_\ell 
		= \frac{\sigma_{k-1}\left( \SphereK \cap \bigcup_{\bt\in \ZZ^k\setminus\{\0\}}\m{B}\left(\frac{\Delta}{s_\ell}\bt,\frac{1.5R_\delta}{s_\ell}\right) \right)}{\sigma_{k-1}(\SphereK)} 
		\le \frac{\sum_{\bt\in \ZZ^k\setminus\{\0\}} \sigma_{k-1}\left( \SphereK \cap \m{B}\left(\frac{\Delta}{s_\ell}\bt,\frac{1.5R_\delta}{s_\ell}\right)\right) }{\sigma_{k-1}(\SphereK)} \,,
	\end{split}
\end{equation*}
where $\sigma_{k-1}(\cdot)$ denotes the surface area. Note that, one the one hand,
\begin{align*}
	\sigma_{k-1}\left( \SphereK \cap \m{B}\left(\frac{\Delta}{s_\ell}\bt,\frac{1.5R_\delta}{s_\ell}\right)\right) 
	&\le \sigma_{k-1}\left(  \partial\left( \m{B}(\0,1) \cap \m{B}\left(\frac{\Delta}{s_\ell}\bt,\frac{1.5R_\delta}{s_\ell}\right)  \right) \right) \\
	&\overset{(\star)}{\le} \sigma_{k-1}\left( \partial \m{B}\left(\frac{\Delta}{s_\ell}\bt,\frac{1.5R_\delta}{s_\ell}\right)\right) \\
	&= \left(\frac{1.5R_\delta}{s_\ell}\right)^{k-1}\sigma_{k-1}(\SphereK)\,,
\end{align*}
where $\partial(\cdot)$ denotes the boundary of a set, and $(\star)$ follows from the well-known fact that for convex bodies $L\subset K$, $\sigma_{k-1}(\partial L) \le \sigma_{k-1}(\partial K)$; see, e.g., \cite[Theorem B.1.14]{artstein2015asymptotic}.
On the other hand, clearly, $\sigma_{k-1}\left( \SphereK \cap \m{B}\left(\frac{\Delta}{s_\ell}\bt,\frac{1.5R_\delta}{s_\ell}\right)\right)=0$ whenever $\SphereK \cap \m{B}\left(\frac{\Delta}{s_\ell}\bt,\frac{1.5R_\delta}{s_\ell}\right)=\emptyset$.
Setting
\begin{equation}
	N_\ell = \left| \left\{ \bt\in \ZZ^k\setminus \{\0\}\,:\, \SphereK \cap \m{B}\left(\frac{\Delta}{s_\ell}\bt,\frac{1.5R_M}{s_\ell}\right)\ne \emptyset \right\} \right|\,,
\end{equation}
we conclude that
\begin{equation}
	\label{eq:helper-pell-bound}
	p_{\ell} \le N_\ell \left(\frac{1.5R_\delta}{s_\ell}\right)^{k-1}\,.
\end{equation}
%
%
\begin{lemma}
	\label{lem:packing-bound}
	We have 
	\[
	N_\ell \le  
	\begin{cases}
		\VolK \cdot \left(\frac{s_\ell + 1.5R_\delta }{\Delta} + \sqrt{k}\right)^k  \quad&\textrm{if } s_\ell<  \Delta\sqrt{k}+1.5R_\delta \\
		k\cdot \VolK \cdot \left( \frac{3R_\delta}{\Delta} + 2\sqrt{k} \right)\left( \frac{s_\ell+1.5R_\delta}{\Delta} + \sqrt{k} \right)^{k-1} \quad&\textrm{if } s_\ell \ge  \Delta\sqrt{k}+1.5R_\delta
	\end{cases} \,,
	\]
	where $\VolK$ is the volume of the $k$-dimensional unit ball.
\end{lemma}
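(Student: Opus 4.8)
The plan is to reduce the lattice-point count to an elementary volume (cube-packing) estimate over a spherical shell.

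First I would translate the geometric intersection condition. The unit sphere $\SphereK$ meets the ball $\m{B}\!\left(\tfrac{\Delta}{s_\ell}\bt,\tfrac{1.5R_\delta}{s_\ell}\right)$ exactly when the distance from the center $\tfrac{\Delta}{s_\ell}\bt$ to $\SphereK$ is at most $\tfrac{1.5R_\delta}{s_\ell}$, i.e. $\bigl|\,\tfrac{\Delta}{s_\ell}\|\bt\|-1\,\bigr|\le \tfrac{1.5R_\delta}{s_\ell}$. Multiplying through by $s_\ell/\Delta$, this is precisely $r_-\le\|\bt\|\le r_+$ with $r_\pm := \tfrac{s_\ell\pm 1.5R_\delta}{\Delta}$. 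Hence $N_\ell$ is at most the number of $\bt\in\ZZ^k$ lying in the shell $\{x:r_-\le\|x\|\le r_+\}$ (dropping the constraint $\bt\ne\0$ only enlarges the count).

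Next comes the standard cube-packing bound: the unit cubes $\bt+[-\tfrac12,\tfrac12)^k$, $\bt\in\ZZ^k$, are disjoint and of volume $1$, and every point of such a cube is within $\tfrac{\sqrt k}{2}$ of its center $\bt$. So if $\bt$ lies in the shell $r_-\le\|\bt\|\le r_+$, its cube is contained in the enlarged region $\{x:(r_--\tfrac{\sqrt k}{2})_+\le\|x\|\le r_++\tfrac{\sqrt k}{2}\}$, whence $N_\ell$ is bounded by the volume of that region. Then I split into the two stated cases. If $s_\ell<\Delta\sqrt k+1.5R_\delta$, equivalently $r_-<\sqrt k$, I bound the shell crudely by the ball $\m{B}(\0,r_+)$, so the cubes lie in $\m{B}(\0,r_++\tfrac{\sqrt k}{2})\subseteq\m{B}(\0,r_++\sqrt k)$, giving $N_\ell\le\VolK\,(r_++\sqrt k)^k=\VolK\bigl(\tfrac{s_\ell+1.5R_\delta}{\Delta}+\sqrt k\bigr)^k$. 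If instead $s_\ell\ge\Delta\sqrt k+1.5R_\delta$, equivalently $r_-\ge\sqrt k$, then $r_--\tfrac{\sqrt k}{2}\ge\tfrac{\sqrt k}{2}>0$, so the enlarged region is a genuine shell of volume $\VolK\bigl[(r_++\tfrac{\sqrt k}{2})^k-(r_--\tfrac{\sqrt k}{2})^k\bigr]$. Applying the elementary inequality $a^k-b^k\le k(a-b)a^{k-1}$ for $a\ge b\ge 0$ with $a=r_++\tfrac{\sqrt k}{2}$, $b=r_--\tfrac{\sqrt k}{2}$, and noting $a-b=(r_+-r_-)+\sqrt k=\tfrac{3R_\delta}{\Delta}+\sqrt k\le\tfrac{3R_\delta}{\Delta}+2\sqrt k$ together with $a\le r_++\sqrt k=\tfrac{s_\ell+1.5R_\delta}{\Delta}+\sqrt k$, yields the second bound.

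The computation is entirely elementary, so there is no genuine obstacle; the only points requiring a little care are aligning the case-split threshold with the sign condition $r_-\gtrless\sqrt k$ (so that the shell-volume formula is valid in the second case), and tracking the $\tfrac{\sqrt k}{2}$ cube-radius slack, which I absorb into the looser $\sqrt k$ (resp. $2\sqrt k$) appearing in the statement. The main task is simply to present the reduction to a shell-packing count cleanly.
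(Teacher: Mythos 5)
Your proposal is correct and follows essentially the same route as the paper: reduce the intersection condition to membership of $\bt$ in an annulus $r_-\le\|\bt\|\le r_+$, then apply the standard unit-cube packing/volume comparison and the inequality $a^k-b^k\le k(a-b)a^{k-1}$, with the case split corresponding exactly to whether the inner radius survives the cube-radius enlargement. The only (harmless) difference is that you track the sharper cube circumradius $\sqrt{k}/2$ and absorb the slack at the end, whereas the paper uses $\sqrt{k}$ throughout.
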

\begin{proof}
	This is an essentially standard packing argument, made slightly more complicated (when $s_\ell$ is large) since we are considering intersections against a sphere rather than a ball. For radii $0\le r_1 \le r_2$, denote the (closed) annulus by 
	\[
	\m{A}(r_1,r_2) = \m{B}(\0,r_2)\setminus \mathrm{int}(\m{B}(\0,r_1)) \,.
	\]
	Observe that $\SphereK \cap \m{B}\left(\frac{\Delta}{s_\ell}\bt,\frac{1.5R_\delta}{s_\ell}\right)\ne \emptyset$ implies\footnote{$[x]_+$ denotes the positive part of $x$, namely, $[x]_+=\max\{x,0\}$.} $\frac{\Delta}{s_\ell}\bt \in \m{A}\left(\left[1-\frac{1.5R_\delta}{s_\ell}\right]_{+},1+\frac{1.5R_\delta}{s_\ell}\right)$, so 
	\[
	N_\ell \le \left| \ZZ^k  \cap \m{A}\left( \left[ \frac{s_\ell - 1.5R_\delta }{\Delta} \right]_{+}, \frac{s_\ell + 1.5R_\delta }{\Delta} \right)\right|\,.
	\]
	Next, we use the following packing argument: the sets $\ZZ^k + \left(-\frac12,\frac12 \right)^k$ are disjoint, so that if $\bt\in \m{A}(r_1,r_2)$ then $\bt+\left(-\frac12,\frac12\right)^k \subset \m{A}\left(\left[r_1-\sqrt{k}\right]_+,r_2+\sqrt{k}\right)$. Therefore, by a volume comparison,
	\[
	\left| \ZZ^k \cap \m{A}(r_1,r_2)\right| \le \frac{\Vol_k\left(\m{A}([r_1-\sqrt{k}]_+,r_2+\sqrt{k})\right)}{\Vol_{k}\left(\left(-\frac12,\frac12 \right)^k\right)} = \VolK \cdot \left( (r_2+\sqrt{k})^k - [r_1-\sqrt{k}]_+^k \right) \,.
	\]
	Set $r_2=\frac{s_\ell+1.5R_\delta}{\Delta}$ and $r_1=\left[ \frac{s_\ell - 1.5R_\delta }{\Delta} \right]_{+}$, so that 
	\[
	N_\ell \le \VolK \cdot \left( \left( \frac{s_\ell+1.5R_\delta}{\Delta} + \sqrt{k} \right)^k - \left(\left[ \frac{s_\ell-1.5R_\delta}{\Delta}-\sqrt{k}\right]_+\right)^k \right) \,.
	\]
	The second term is non-zero if and only if $s_\ell\ge  \Delta\sqrt{k}+1.5R_\delta$; the claimed bound follows from the inequality $|a^k-b^k| \le k|b-a|\max\{|a|,|b|\}^{k-1}$.
\end{proof}

We now conclude the proof of Lemma~\ref{lem:good-whp-helper}. Recall, by Eq. (\ref{eq:good-helper-bound1}),
that our goal is to bound $\sum_{\ell=1}^T p_\ell$, where $p_\ell$ is bounded in Eq. (\ref{eq:helper-pell-bound}). We treat separately small and large terms in the sum. 
\begin{itemize}
	\item {\bf Small terms:} $\ell$-s such that $s_\ell < \Delta \sqrt{k} + 1.5R_\delta$. Note that there are $\le \frac{\Delta\sqrt{k} + 1.5R_\delta}{R_\delta} +1  \lesssim 1 + \frac{\Delta\sqrt{k}}{R_\delta}$ such terms.
	Bound
	\[
	\VolK \le (C/\sqrt{k})^{k},\quad \frac{1.5R_\delta}{s_\ell} \le \frac{1.5R_\delta}{A} \lesssim \frac{R_\delta\sqrt{\log k}}{\Delta_\delta\sqrt{k} }
	\]
	(recall {$s_\ell\ge A$} and the definition of $A$ in Eq. (\ref{eq:Line-tilde})). Assuming 
	\begin{equation}
		\label{eq:Cond2}
		\Delta \ge \frac{R_\delta}{\sqrt{k}}\,,
	\end{equation}
	we have
	\[
	\frac{s_\ell + 1.5R_\delta }{\Delta} + \sqrt{k} \le \frac{(\Delta \sqrt{k} + 1.5R_\delta) + 1.5R_\delta }{\Delta} + \sqrt{k} \lesssim \sqrt{k} + \frac{R_\delta}{\Delta} \overset{(\ref{eq:Cond2})}{\lesssim} \sqrt{k}\,.
	\]
	Plugging into Eq. (\ref{eq:helper-pell-bound}) and Lemma~\ref{lem:packing-bound},
	\[
	p_\ell \lesssim C^k (1/\sqrt{k})^k \cdot \left( \sqrt{k}\right)^k \left( \frac{R_\delta\sqrt{\log k}}{\Delta_\delta\sqrt{k} } \right)^{k-1} 
	\lesssim 
	 \left( C\frac{R_\delta\sqrt{\log k}}{\Delta_\delta\sqrt{k} } \right)^{k-1}\,,
	\]
	for some universal $C$. 
	Recalling again that there are $\lesssim 1 + \frac{\Delta\sqrt{k}}{R_\delta} \overset{(\ref{eq:Cond2})}{\lesssim} \frac{\Delta\sqrt{k}}{R_\delta}$ such terms, and that, by definition (Eq. (\ref{eq:def-Delta-R-delta})),
	\begin{equation}
		\label{eq:def-Delta-delta}
			\Delta = 2\Delta_\delta + 2z_{\infty}(\delta/2) \lesssim \Delta_\delta + \sqrt{ \log k \vee \log(1/\delta)}\,,
	\end{equation}
	the total sum of the small terms is
	\begin{align*}
		&\lesssim \sqrt{\log k}\left( C\frac{R_\delta\sqrt{\log k}}{\Delta_\delta\sqrt{k} } \right)^{k-2} + \sqrt{ \log k \vee \log(1/\delta)} \cdot \frac{\sqrt{k}}{R_\delta }\cdot \left( C\frac{R_\delta\sqrt{\log k}}{\Delta_\delta\sqrt{k} } \right)^{k-1}\\
		&\lesssim \sqrt{\log k}\left( C\frac{R_\delta\sqrt{\log k}}{\Delta_\delta\sqrt{k} } \right)^{k-2} + \sqrt{ k\log k}  \left( C\frac{R_\delta\sqrt{\log k}}{\Delta_\delta\sqrt{k} } \right)^{k-1}\,,
	\end{align*}
	where, for the second inequality, we used $R_\delta\ge z_2(\delta/3)\gtrsim\sqrt{\log(1/\delta)}$.
	Consequently, whenever
	\begin{equation*}
		\Delta_\delta \gtrsim \frac{R_\delta\sqrt{\log k}}{\sqrt{k}} \,,
	\end{equation*}
	the sum is exponentially decaying in $k$, and in particular $O(k^{-10})$. Again, recalling Eq. (\ref{eq:def-Delta-delta}), the following condition on $\Delta$ is sufficient to get exponential decay:
	\begin{equation}
		\label{eq:Cond3}
		\Delta \gtrsim \frac{R_\delta\sqrt{\log k}}{\sqrt{k}} + \sqrt{\log k \vee \log(1/\delta)} \,.
	\end{equation}

	
	\item {\bf Large terms:} such that $s_\ell \ge \Delta \sqrt{k} + 1.5R_\delta$. Note that there are $\lesssim B/R_\delta \lesssim \frac{\sqrt{\Spike \log (1/\delta)}}{R_\delta}$ such terms (recall the definition of $B$ in Eq. (\ref{eq:Line-tilde})). Bounding $\frac{s_\ell + 1.5R_\delta }{\Delta} + \sqrt{k} \le \frac{2s_{\ell}}{\Delta}$, we estimate, using $\VolK\le (C/\sqrt{k})^k$ and assuming condition (\ref{eq:Cond2}),
	\[
	N_\ell \le k\cdot \VolK \cdot \left( \frac{3R_\delta}{\Delta} + 2\sqrt{k} \right)\left( \frac{s_\ell+1.5R_\delta}{\Delta} + \sqrt{k} \right)^{k-1} 
	\lesssim k (C/\sqrt{k})^k \sqrt{k} \left( \frac{2s_\ell}{\Delta}\right)^{k-1}\,,
	\]
	so that, using Eq. (\ref{eq:helper-pell-bound}),
	\[
	p_\ell \le N_\ell \left(\frac{1.5R_\delta}{s_\ell}\right)^{k-1} 
	\lesssim k \left( 3C \frac{R_\delta}{\Delta\sqrt{k}} \right)^{k-1} \,.
	\]
	Again, since there 
	are $\lesssim \frac{\sqrt{\Spike \log(1/\delta)}}{R_\delta}$ such terms, the total contribution is
	\[
	\lesssim \frac{k}{R_\delta} \sqrt{\Spike \log(1/\delta)} \cdot \left( 3C \frac{R_\delta}{\Delta\sqrt{k}} \right)^{k-1} \le
	k 
\left( 3C \frac{R_\delta \Spike^{\frac{1}{2(k-1)}}}{\Delta\sqrt{k}} \right)^{k-1} \,,
	\]
	where, for the second inequality, we again used $R_\delta\gtrsim\sqrt{\log(1/\delta)}$.
	This is exponentially decreasing in $k$ whenever
	\begin{equation}
		\label{eq:Cond4}
		\Delta \gtrsim \frac{R_\delta}{\sqrt{k}} \Spike^{\frac{1}{2(k-1)}} \,.
	\end{equation}
\end{itemize}

We finish by simplifying conditions (\ref{eq:Cond3}) and (\ref{eq:Cond4}) further. Setting $\delta=k^{-M^2}$, we may estimate
\[
R_{\delta} \approx \sqrt{k}\vee \sqrt{\log(1/\delta)} = \sqrt{k}\vee (M\sqrt{\log k}),
\]
so that for large $k\ge k_0(M)$, $R_\delta \approx \sqrt{k}$. Thus, (\ref{eq:Cond3}) reads $\Delta \gtrsim M\sqrt{\log k}$, and (\ref{eq:Cond4}) reads $\Delta \gtrsim \Spike^{\frac{1}{2(k-1)}}$.

\section{Proof of Lemma~\ref{lem:spectral-gap} }
\label{sec:proof-lem:spectral-gap}

Recall the choice of $R$ from Eq. (\ref{eq:R-def}). Decompose, rather arbitrarily, $R^2 = k + B$, so that $3k\le B \le C_3 k$ for some $C_3$. Note that if $g_1,\ldots,g_k\sim \m{N}(0,1)$, then $\Pr\left( \sum_{i=2}^k g_i^2 \le B \right) \ge 1-e^{-C_4k}$ for some $C_4>0$. Following Proposition~\ref{obs:trunc-eigs} Eq. (\ref{eq:trunc-eigs}),
\[
\lambda_1(\bSigmaTruc) = \Expt \left[ \lambda_i g_i^2 \,\Big| \, \sum_{i=1}^k \lambda_i g_i^2 \le R^2\right] = \frac{\Expt\left[ (1+\Spike)g_1^2 \cdot \Ind_{(1+\Spike)g_1^2 + \sum_{i=2}^k g_i^2 \le k + B} \right]}{\Pr\left( (1+\Spike)g_1^2 + \sum_{i=2}^k g_i^2 \le k + B  \right)} \,.
\]
Clearly, $\Ind_{(1+\Spike)g_1^2+\sum_{i=2}^k g_i^2 \le k + B} \ge \Ind_{(1+\Spike)g_1^2 \le k}\cdot \Ind_{\sum_{i=2}^k g_i^2 \le B}$, therefore,
\begin{align*}
	\Expt\left[ (1+\Spike)g_1^2 \cdot \Ind_{(1+\Spike)g_1^2 + \sum_{i=2}^k g_i^2 \le k + B} \right] 
	&\ge \Expt\left[ (1+\Spike)g_1^2 \cdot \Ind_{(1+\Spike)g_1^2 \le k}\cdot \Ind_{\sum_{i=2}^k g_i^2 \le B} \right]\\
	&\overset{(\star)}{=} \Expt\left[ (1+\Spike)g_1^2 \cdot \Ind_{(1+\Spike)g_1^2 \le k}\right] \cdot \Expt\left[\Ind_{\sum_{i=2}^k g_i^2 \le B} \right]\\
	&\ge \Expt\left[ (1+\Spike)g_1^2 \cdot \Ind_{(1+\Spike)g_1^2 \le k}\right] (1-e^{-C_4 k})\,,
\end{align*}
where $(\star)$ holds since this is the product of independent random variables. Furthermore, clearly, 
\[
\Pr\left( (1+\Spike)g_1^2 + \sum_{i=1}^k g_i^2 \le k + B  \right)\le \Pr\left( (1+\Spike)g_1^2 \le k + B  \right) \le \Pr\left( (1+\Spike)g_1^2 \le (1+C_3)k  \right)\,.
\]
Let $g_1^2 =: W\sim \chi^2(1)$, so that, finally,
\begin{equation}
	\lambda_1(\bSigmaTruc)  \ge (1-e^{-C_4 k})\frac{\Expt\left[ (1+\Spike)W \cdot \Ind_{W \le \frac{k}{1+\Spike}}\right]}{\Pr\left( W \le \frac{(1+C_3)k}{1+\Spike}  \right)} \,.
\end{equation}
We continue case-by-case, depending on the magnitude of $\Spike$:

\begin{enumerate}[label=(\roman*)]
	\item Suppose that $e^{-C_1 k} \le \Spike \le 1$, where $C_1$ is a sufficiently small constant. Since $W$ has an exponential tail and $\Expt[W]=1$, there is some $C_5$ such that 
	\[
	\frac{\Expt\left[ W \cdot \Ind_{W \le \frac{k}{1+\Spike}}\right]}{\Pr\left( W \le \frac{(1+C_3)k}{1+\Spike}  \right)} \ge \frac{\Expt\left[ W \cdot \Ind_{W \le \frac{k}{2}}\right]}{\Pr\left( W \le {(1+C_3)k} \right)} \ge 1-e^{-C_5 k} \,,
	\]
	therefore
	\[
	\lambda_1(\bSigmaTruc) \ge (1+\Spike)(1-e^{-C_4 k})(1-e^{-C_5 k}) \ge 1+C\Spike
	\]
	for small enough $C$, whenever $C_1$ is chosen sufficiently small compared to $C_4,C_5$.
	\label{item:gap-step1}
	\item Note that by \cite{palombi2012numerical}, $\lambda_1(\bSigmaTruc)$ increases with $\Spike$. Consequently, for all $\Spike\ge 1$, \ref{item:gap-step1} implies that $\lambda_1(\bSigmaTruc) \ge 1+C$.  
	Now, suppose that $1 \le \Spike \le Ak - 1$, where $A$ is such that for all $A'\le A$,
	\[
	\frac{\Expt\left[ W \cdot \Ind_{W \le \frac{1}{A'}}\right]}{\Pr\left( W \le \frac{(1+C_3)}{A'}  \right)} \ge \frac{4}{5}\,.
	\]
	Note that such $A$ indeed exists, since the above ratio $\to 1$ as $A'\to 0$. Then
	\[
	\lambda_1(\bSigmaTruc) 
	\ge (1+\Spike)(1-e^{-C_4 k}) \frac{\Expt\left[ W \cdot \Ind_{W \le \frac{k}{1+\Spike}}\right]}{\Pr\left( W \le \frac{(1+C_3)k}{1+\Spike}  \right)} 
	\ge \frac45(1-e^{-C_4})(1+\Spike) \ge C'(1+\Spike) \,,
	\]
	so that $\lambda_1(\bSigmaTruc)\ge (1+C)\vee C'(1+\Spike)$. Consequently, $\lambda_1(\bSigmaTruc)\ge 1 + C''\Spike$ for some other $C''$.
	\item $\Spike \ge Ak-1$. Consider the function
	\[
	F(h) = \frac{1}{h} \cdot  \frac{\Expt\left[ W\cdot \Ind_{W\le h}\right]}{\Pr\left( W\le (1+C_3)h\right)} \,,
	\]
	{so that 
	\[
	\lambda_1(\bSigmaTruc) \ge (1-e^{-C_4 k})\cdot k\cdot F\left(\frac{k}{1+\Spike}\right) \ge k \cdot (1-e^{-C_4}) \cdot \inf_{h\le 1/A} F(h) \,.
	\]
}
	We are done if we show that the infimum is non-zero, and it clearly suffices to show that $\lim_{h\to 0+}F(h)>0$. To do this, recall that $W\sim \chi^2(1)$ has a density $f_W(w)\propto w^{-1/2}e^{-w/2}$ supported on $w\ge 0$. Therefore, as $h \to 0+$,
	\[
	\Expt\left[ W\cdot \Ind_{W\le h}\right] \sim Ch^{3/2},\quad \Pr\left( W\le (1+C_3)h\right) \sim C'h^{1/2}\,,
	\]
	so that indeed $F(h)\sim C''$ as $h\to 0+$.
\end{enumerate}

\section{Proof of Theorem~\ref{thm:main}}\label{sec:proof-thm:main}

We shall use the following eigenvalue perturbation result \cite[Theorem 8.5]{wainwright2019high}:
\begin{lemma}
	\label{lem:wainwright-perturbation}
	Let $\bA$ be positive semidefinite, with a positive spectral gap: $\delta := \lambda_1(\bA)-\lambda_2(\bA)>0$. Let $\SpikeVec\in\SphereK$ be its largest eigenvector. 
	Suppose that $\widehat{\bA}$ is positive semidefinite with $\|\bA-\widehat{\bA}\|\le \delta/4$. Let $\SpikeVecEst$ be its maximal eigenvector, with the sign chosen such that $\langle \SpikeVec,\SpikeVecEst\rangle \ge 0$ (part of the claim is that the largest eigenspace of $\widehat{\bA}$ is 1-dimensional). Then
	\[
	\|\SpikeVec-\SpikeVecEst\| \le \frac{4}{\delta} \cdot \left\|\left(\bI-\SpikeVec\SpikeVec^\T\right) (\bA-\widehat{\bA}) \SpikeVec\right\| \,,
	\]
	where $\bI-\SpikeVec\SpikeVec^\T$ is the projection onto the orthogonal complement of $\SpikeVec$.
\end{lemma}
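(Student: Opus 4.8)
The plan is to prove this as a self‑improving Davis--Kahan / $\sin\Theta$ bound. Write $\bE := \widehat{\bA}-\bA$ and $\bP := \bI-\SpikeVec\SpikeVec^\T$, so that the quantity on the right‑hand side of the claimed bound is exactly $\|\bP\bE\SpikeVec\|$, and $\bP\SpikeVecEst$ is the component of $\SpikeVecEst$ orthogonal to $\SpikeVec$. The first step is to invoke Weyl's inequality, $|\lambda_i(\widehat{\bA})-\lambda_i(\bA)|\le\|\bE\|\le\delta/4$ for all $i$, which gives $\lambda_1(\widehat{\bA})\ge\lambda_1(\bA)-\delta/4$ and $\lambda_2(\widehat{\bA})\le\lambda_2(\bA)+\delta/4=\lambda_1(\bA)-3\delta/4$; hence $\lambda_1(\widehat{\bA})-\lambda_2(\widehat{\bA})\ge\delta/2>0$, so the leading eigenspace of $\widehat{\bA}$ is one‑dimensional and $\SpikeVecEst$ is well defined up to sign.

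The heart of the argument is a single identity. Since $\SpikeVec$ is an eigenvector of $\bA$, the operator $\bA$ leaves both $\Span(\SpikeVec)$ and $\Span(\SpikeVec)^\perp$ invariant, so $\bA\bP=\bP\bA$. Applying $\bP$ to the eigenrelation $(\bA+\bE)\SpikeVecEst=\widehat\lambda\,\SpikeVecEst$, with $\widehat\lambda:=\lambda_1(\widehat{\bA})$, yields
\[
(\bA-\widehat\lambda\,\bI)\,\bP\SpikeVecEst=-\,\bP\bE\SpikeVecEst .
\]
On $\Span(\SpikeVec)^\perp$ the eigenvalues of $\bA$ are all $\le\lambda_2(\bA)$, so those of $\bA-\widehat\lambda\,\bI$ are all $\le\lambda_2(\bA)-\lambda_1(\bA)+\delta/4=-3\delta/4$; in particular $\bA-\widehat\lambda\,\bI$ is invertible on that subspace and the norm of its inverse there is at most $4/(3\delta)$. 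Since $\bP\SpikeVecEst$ and $\bP\bE\SpikeVecEst$ both lie in $\Span(\SpikeVec)^\perp$, this gives $\|\bP\SpikeVecEst\|\le\frac{4}{3\delta}\|\bP\bE\SpikeVecEst\|$.

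Next I would bootstrap to replace $\SpikeVecEst$ by $\SpikeVec$ on the right. Decompose $\SpikeVecEst=\alpha\SpikeVec+\bP\SpikeVecEst$ with $\alpha=\langle\SpikeVec,\SpikeVecEst\rangle\in[0,1]$, so that $\|\bP\bE\SpikeVecEst\|\le\|\bP\bE\SpikeVec\|+\|\bE\|\,\|\bP\SpikeVecEst\|$. Substituting into the previous inequality and using $\|\bE\|\le\delta/4$ gives $\|\bP\SpikeVecEst\|\le\frac{4}{3\delta}\|\bP\bE\SpikeVec\|+\frac13\|\bP\SpikeVecEst\|$, i.e. $\|\bP\SpikeVecEst\|\le\frac{2}{\delta}\|\bP\bE\SpikeVec\|$. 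Finally, because the sign of $\SpikeVecEst$ is chosen so that $\alpha\ge0$, one has $\|\SpikeVec-\SpikeVecEst\|^2=2(1-\alpha)\le 2(1-\alpha)(1+\alpha)=2\|\bP\SpikeVecEst\|^2$, hence $\|\SpikeVec-\SpikeVecEst\|\le\sqrt2\,\|\bP\SpikeVecEst\|\le\frac{2\sqrt2}{\delta}\|\bP\bE\SpikeVec\|\le\frac{4}{\delta}\|\bP\bE\SpikeVec\|$, which is the assertion.

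The only genuinely delicate point is the bootstrap step: the clean identity controls $\|\bP\SpikeVecEst\|$ only in terms of $\|\bP\bE\SpikeVecEst\|$, which still involves the unknown eigenvector $\SpikeVecEst$; turning this into a bound in terms of $\|\bP\bE\SpikeVec\|$ — the quantity actually appearing in the statement — is precisely where the hypothesis $\|\bE\|\le\delta/4$ is used, to absorb the residual $\|\bP\SpikeVecEst\|$ term. Everything else is routine linear algebra (Weyl, the commutation $\bA\bP=\bP\bA$, and the elementary relation between $\|\SpikeVec-\SpikeVecEst\|$ and $\|\bP\SpikeVecEst\|$ under the sign normalization); in the paper this lemma would simply be quoted from \cite[Theorem 8.5]{wainwright2019high}.
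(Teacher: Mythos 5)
Your proof is correct, and every step checks out: Weyl's inequality gives the gap $\lambda_1(\widehat{\bA})-\lambda_2(\widehat{\bA})\ge\delta/2$ so $\SpikeVecEst$ is well defined; the identity $(\bA-\widehat{\lambda}\bI)\bP\SpikeVecEst=-\bP\bE\SpikeVecEst$ follows from $\bA\bP=\bP\bA$; the restriction of $\bA-\widehat{\lambda}\bI$ to $\Span(\SpikeVec)^\perp$ has all eigenvalues $\le-3\delta/4$, giving inverse norm $\le 4/(3\delta)$; the bootstrap absorbs the $\|\bE\|\,\|\bP\SpikeVecEst\|$ term using $\|\bE\|\le\delta/4$; and $\|\SpikeVec-\SpikeVecEst\|^2=2(1-\alpha)\le 2(1-\alpha^2)=2\|\bP\SpikeVecEst\|^2$ under the sign convention, so the final constant $2\sqrt{2}\le 4$ closes the argument. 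Note that the paper does not prove this lemma at all --- it quotes it verbatim from \cite[Theorem 8.5]{wainwright2019high} --- so you have supplied a correct, self-contained proof of the cited result; your argument is essentially the standard Davis--Kahan-type resolvent-plus-bootstrap proof that appears in that reference, and in fact you obtain the slightly sharper constant $2\sqrt{2}/\delta$ before relaxing to $4/\delta$.
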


We apply Lemma~\ref{lem:wainwright-perturbation} with $\bA=\bSigmaTruc$ and $\widehat{\bA}=\hbSigma$, using the error bounds developed so far.
For brevity, denote
\begin{equation}
	\delta := \lambda_1(\bSigmaTruc) - \lambda_2(\bSigmaTruc) \,.
\end{equation}

\begin{lemma} \label{lem:thm-proof-helper}
		Assume the setup of Lemma~\ref{lem:good-whp}, with $\Delta$ large, $M> \sqrt{12}$, and $\SpikeVec\in\m{U}_M$. 
		
		Assume either of the following conditions hold:
		\begin{itemize}
			\item $1\le \Spike \le k$ and $n\gtrsim \frac{k}{\sqrt{\Spike}}\vee \log k$.
			\item $k \le \Spike \lesssim k^{2M^2-21}$ and $n\gtrsim \sqrt{\Spike}$. 
		\end{itemize}
	Then with probability $1-O(k^{-10})$, one has $|\bSigmaTruc-\hbSigma\| \le \delta/4$
\end{lemma}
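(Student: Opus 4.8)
The plan is to combine the triangle-inequality bound $\|\bSigmaTruc-\hbSigma\|\le\epsEst+\epsPick$ (already established) with the estimates on the two error terms and the lower bound on the spectral gap. By Lemma~\ref{lem:spectral-gap} together with the bound $\lambda_2(\bSigmaTruc)\le 1$ noted earlier, the gap satisfies $\delta=\lambda_1(\bSigmaTruc)-\lambda_2(\bSigmaTruc)\ge c_0(k\wedge\Spike)$ for a universal $c_0>0$. Hence it suffices to show that, with probability $1-O(k^{-10})$, both $\epsEst\le\tfrac{c_0}{8}(k\wedge\Spike)$ and $\epsPick\le\tfrac{c_0}{8}(k\wedge\Spike)$, and then take a union bound over the events of Lemmas~\ref{lem:epsEst-bound} and~\ref{lem:epsPick-bound}. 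Throughout I will use Lemma~\ref{lem:Pball} in the form $\PBall=\Theta(1\wedge\sqrt{k/\Spike})$, and the elementary fact $k\wedge(1+\Spike)=\Theta(k\wedge\Spike)$ valid for $\Spike\ge 1$.

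First I would verify the hypothesis $\PBall n\gtrsim\log k$ of Lemmas~\ref{lem:epsEst-bound} and~\ref{lem:epsPick-bound} in both regimes: when $1\le\Spike\le k$ one has $\PBall=\Theta(1)$, so $n\gtrsim\log k$ suffices; when $k\le\Spike$ one has $\PBall=\Theta(\sqrt{k/\Spike})$, so $\PBall n\gtrsim n\sqrt{k/\Spike}\gtrsim\sqrt{k}\gtrsim\log k$ for $k$ large, using $n\gtrsim\sqrt{\Spike}$. Then I would plug the estimate for $\PBall$ into Lemma~\ref{lem:epsEst-bound}. In the regime $1\le\Spike\le k$ this gives $\epsEst\lesssim\Spike\sqrt{\log k/n}+\sqrt{\Spike}\,(k/n\vee\sqrt{k/n})$; the first term is $\le\tfrac{c_0}{8}\Spike=\tfrac{c_0}{8}(k\wedge\Spike)$ once $n\gtrsim\log k$ with a large enough implied constant, and the second is $\le\tfrac{c_0}{8}\Spike$ once $k/n\vee\sqrt{k/n}\lesssim\sqrt{\Spike}$, i.e.\ once $n\gtrsim k/\sqrt{\Spike}$ (which dominates $n\gtrsim k/\Spike$ since $\Spike\ge1$) --- exactly the stated condition. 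In the regime $k\le\Spike$ the same lemma gives $\epsEst\lesssim k\sqrt{\log k/(\PBall n)}+\sqrt{k}\,(k/(\PBall n)\vee\sqrt{k/(\PBall n)})$; the first term is $\le\tfrac{c_0}{8}k=\tfrac{c_0}{8}(k\wedge\Spike)$ since $\PBall n\gtrsim\log k$, and the second is $\le\tfrac{c_0}{8}k$ once $\PBall n\gtrsim\sqrt{k}$, which with $\PBall\approx\sqrt{k/\Spike}$ is precisely $n\gtrsim\sqrt{\Spike}$.

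It remains to bound $\epsPick$. By Lemma~\ref{lem:epsPick-bound}, $\epsPick\lesssim k^{-M^2+12}/\PBall$. When $1\le\Spike\le k$, $\PBall=\Theta(1)$, so $\epsPick\lesssim k^{-M^2+12}$; since $M>\sqrt{12}$ the exponent is negative, so for $k$ large this is $\le\tfrac{c_0}{8}\le\tfrac{c_0}{8}(k\wedge\Spike)$. When $k\le\Spike$, $1/\PBall=\Theta(\sqrt{\Spike/k})$, so $\epsPick\lesssim k^{-M^2+11.5}\sqrt{\Spike}$, and this is $\le\tfrac{c_0}{8}k=\tfrac{c_0}{8}(k\wedge\Spike)$ exactly when $\sqrt{\Spike}\lesssim k^{M^2-10.5}$, i.e.\ $\Spike\lesssim k^{2M^2-21}$, which is the stated upper bound on $\Spike$ (with its implied constant taken small enough that the universal prefactor from Lemma~\ref{lem:epsPick-bound} is absorbed into $\tfrac{c_0}{8}$). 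A union bound over the two events completes the argument.

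I expect the main obstacle to be purely bookkeeping: matching the three separate smallness requirements emerging from Lemma~\ref{lem:epsEst-bound} (one per summand, each translating into a lower bound on $\PBall n$, hence on $n$) against the two clean hypotheses $n\gtrsim k/\sqrt{\Spike}\vee\log k$ and $n\gtrsim\sqrt{\Spike}$ in the two regimes, and likewise tracking how the constraint $\Spike\lesssim k^{2M^2-21}$ is precisely what renders the ``bad points'' contribution $\epsPick$ negligible relative to the spectral gap $\delta\gtrsim k\wedge\Spike$. No genuinely new idea beyond the already-established lemmas is required.
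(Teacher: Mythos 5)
Your proposal is correct and follows essentially the same route as the paper's proof: bound $\|\hbSigma-\bSigmaTruc\|\le\epsEst+\epsPick$, lower-bound the gap via Lemma~\ref{lem:spectral-gap}, substitute $\PBall\approx 1\wedge\sqrt{k/\Spike}$ into Lemmas~\ref{lem:epsEst-bound} and~\ref{lem:epsPick-bound}, and check term by term that the stated conditions on $n$ and $\Spike$ make each contribution at most a constant fraction of $k\wedge\Spike$. The bookkeeping matches the paper's case analysis exactly, including the observations that $n\gtrsim k/\Spike$ is dominated by $n\gtrsim k/\sqrt\Spike$ and that $\Spike\lesssim k^{2M^2-21}$ is precisely what controls $\epsPick/\delta$ in the large-spike regime.
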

\begin{proof}
	We consider two cases:
	\begin{itemize}
		\item Suppose that $1\le \Spike \le k$. By Lemma~\ref{lem:Pball}, $\PBall\approx 1$, and by Lemma~\ref{lem:spectral-gap}, $\delta\gtrsim  \Spike$. By Lemmas \ref{lem:epsEst-bound} and \ref{lem:epsPick-bound}, it holds with probability $1-O(k^{-10})$ that
		\begin{align*}
			\|\bSigmaTruc-\hbSigma\|
			\lesssim  \Spike \sqrt{\frac{\log k}{n}} + \sqrt{\Spike} \left( \frac{k}{n}\vee \sqrt{\frac{k}{n}} \right) +  k^{-M^2+12}\,.
		\end{align*}
	Consequently, when $n\gtrsim \log k$, $n\gtrsim \frac{k}{\sqrt{\Spike}}$ and $n\gtrsim \frac{k}{\Spike}$ (the last condition is redundant, since $\Spike\ge 1$), it holds that, for large enough $k$, $\|\bSigmaTruc-\hbSigma\|\le \delta/4$.
		
		\item Suppose that $k\le \Spike \le k^{2M^2-21}$. By Lemma~\ref{lem:Pball}, {$\PBall\approx \sqrt{k/\Spike}$}, and by Lemma~\ref{lem:spectral-gap}, $\delta\gtrsim  k$. By Lemmas \ref{lem:epsEst-bound} and \ref{lem:epsPick-bound}, it holds with probability $1-O(k^{-10})$, provided that $\PBall n \gtrsim \log k \implies n\gtrsim \sqrt{\frac{\Spike}{k}}\log k$, that
	 \begin{align*}
	 	\|\bSigmaTruc-\hbSigma\|
	 	\lesssim  k\sqrt{ \frac{\sqrt{\Spike}\log k}{n\sqrt{k}} } + k\left( \frac{\sqrt{\Spike}}{n} \vee \sqrt{ \frac{\sqrt{\Spike}}{n\sqrt{k}} } \right)
	 	+ k \cdot \Spike^{1/2}k^{-M^2+10.5} \,.
	 \end{align*}
 	Consequently, whenever $n\gtrsim \sqrt{\frac{\Spike}{k}}\log k$, $n\gtrsim \sqrt{\Spike}$ (the first condition is redundant), and $\Spike \lesssim k^{-2M^2 + 21}$, it holds that $\|\bSigmaTruc-\hbSigma\|\le \delta/4$. 
	\end{itemize}
\end{proof}

\paragraph{Proof of Theorem~\ref{thm:main}.} We apply Lemma~\ref{lem:wainwright-perturbation}. Write, as before, $\hbSigma-\bSigmaTruc=(\hbSigma-\hbSigmaX)+(\hbSigmaX-\bSigmaTruc)$, so
	$\left\|\left(\bI-\SpikeVec\SpikeVec^\T\right) (\hbSigma-\bSigmaTruc) \SpikeVec\right\| \le \epsPick + \left\|\left(\bI-\SpikeVec\SpikeVec^\T\right) (\hbSigmaX-\bSigmaTruc) \SpikeVec\right\|$. Using the decomposition Eq. (\ref{eq:hbSigmaX-decomposition}), and recalling the notation in Eq.~(\ref{eq:hbSigmaX-eps}), we conclude that under the conditions of Lemma~\ref{lem:thm-proof-helper}, with probability $1-O(k^{-10})$, 
	\[
	\|\SpikeVec-\SpikeVecEst\|\le \frac{4}{\delta}\left\|\left(\bI-\SpikeVec\SpikeVec^\T\right) (\hbSigma-\bSigmaTruc) \SpikeVec\right\| \le \frac{4}{\delta} \left( \epsPick + 2\eps_2 + \eps_3 \right)\,.
	\]
	Note that the term $\eps_1$ does not appear, since it corresponds to a components of the difference $\hbSigmaX-\bSigmaTruc$ which is parallel to $\SpikeVec$. Using Lemmas \ref{lem:epsPick-bound}, \ref{lem:eps2} and \ref{lem:eps3}, $\delta\gtrsim k\vee \Spike$ (Lemma~\ref{lem:spectral-gap}), and $\PBall \approx 1\wedge \sqrt{k/\Spike}$, we conclude that the following holds with probability $1-O(k^{-10})$:
	\begin{itemize}
		\item Suppose that $1\le \Spike \le k$ and $n\gtrsim \frac{k}{\sqrt{\Spike}}\vee \log k$. Then
		\[
		\|\SpikeVec-\SpikeVecEst\| \lesssim \frac{1}{\sqrt{\Spike}} \left( \frac{k}{n} \vee \sqrt{\frac{k}{n} } \right) + \frac{1}{\Spike} \cdot k^{-M^2+12} \,.
		\]
		Note that the requirement $n\gtrsim \frac{k}{\sqrt{\Spike}}$ may effectively be omitted from the statement of the Theorem. The reason is that a bound of the form $\|\SpikeVec-\SpikeVecEst\|\le B$, for any $B\ge 2$, is completely vacuous (since $\SpikeVec,\SpikeVecEst$ are unit vectors). As we are not keeping track of the exact constants, it suffices to note that the first term in the upper bound becomes meaningful only when $n\gtrsim \frac{k}{\sqrt{\Spike}}$.
		\item Suppose that $k\le \Spike \lesssim k^{2M^2-21}$ and $n\gtrsim \sqrt{\Spike}$. Then 
		\[
		\|\SpikeVec-\SpikeVecEst\| \lesssim \frac{\sqrt{\Spike}}{n} + \sqrt{ \frac{1}{n}\sqrt{\frac{\Spike}{k}} } + \Spike^{1/2}k^{-M^2+10.5} \,.
		\]
		The  requirement $n\gtrsim \sqrt{\Spike}$ is omitted from the statement of the  Theorem, for the same reason as in the previous case.
	\end{itemize}

\section{Proof of additional lemmas}

In this section, we provide several short proofs, that were omitted from  the main text due to space constraints. 

\subsection{Proof of Lemma~\ref{lem:Pball}}
\label{sec:proof-lem:Pball}

	Upper bound: ${\langle \bu,\bX\rangle \sim \m{N}(0,1+\Spike)}$; clearly, $\bX\in \m{B}(\0,R)$ implies $|\langle \bu,\bX\rangle|\le R$, so ${\PBall\le \Pr(|\langle \bu,\bX\rangle|\le R)=\erf(R/\sqrt{2(1+\Spike)})}$.
	
Lower bound: Writing $\bX=\sqrt{\Spike}\xi\SpikeVec + \bZ$, 
the event $\{\sqrt{\Spike}|\xi|\le 2\sqrt{k}\}\cap \{\|\bZ\|\le z_2(0.1)\}$ implies $\bX\in\m{B}(\0,R)$. Thus,
\begin{align*}
	\PBall
	&\ge \Pr\left( \{\|\bZ\|\le z_2(0.1)\}\cap \{\sqrt{\Spike}|\xi|\le 2\sqrt{k}\}\right) \\
	&\overset{(\star)}{=} \Pr\left(\|\bZ\|\le z_2(0.1)\right) \cdot \Pr\left(\sqrt{\Spike}|\xi|\le 2\sqrt{k}\right) \\
	&\ge 0.9\cdot \erf\left( \sqrt{\frac{2k}{\Spike}}\right)\,,
\end{align*}
where $(\star)$ follows since these are independent events.

\subsection{Proof of Lemma~\ref{lem:simple-convex}}
\label{sec:proof-lem:simple-convex}

	By the Gaussian correlation inequality \cite{royen2014,latala2017royen}, for ${f,h:\RR^k\to \RR}$  quasiconcave\footnote{$h$ is quasiconcave if $h(tx+(1-t)y)\ge \min\{h(x),h(y)\}$ for all $t\in[0,1]$. Note that: (i) A concave function is quasiconcave; (ii) The indicator function of a convex set is quasiconcave (but {\it not} concave).}, and \emph{one} of whom symmetric, ${\Expt\left[ f(\bX)h(\bX) \right] \ge \Expt\left[f(\bX)\right]\Expt\left[h(\bX)\right]}$. Consequently, for $g$ convex,
${\Expt\left[ g(\bX)h(\bX) \right] \le \Expt\left[ g(\bX) \right]\Expt \left[h(\bX) \right]}$. 
The Lemma follows by 
taking $h(\bX)=\Ind_{\bX\in\m{B}(\0,R)}$.

\subsection{Proof of Lemma~\ref{lem:epsPick-bound}}
\label{sec:proof-lem:epsPick-bound}

	First, suppose that $n\le k^{M^2-10}$. Then by Lemma~\ref{lem:good-whp} and Markov's inequality, with probability $1-O(nk^{-M^2})=1-O(k^{-10})$, it holds that $|\Kbad|=0$, and consequently (Eq. (\ref{eq:epsPick})), $\epsPick=0$. Next, suppose that $n\ge k^{M^2-10}$. By Chernoff's inequality (Lemma~\ref{lem:chernoff}), Lemma~\ref{lem:good-whp} and Lemma~\ref{lem:Nball} it hold with probability $1-O(k^{-10})$ that
{
\begin{align*}
	\epsPick 
	&\lesssim \frac{k}{n\PBall} |\Kbad| 
	\lesssim \frac{k}{n\PBall} \left( nk^{-M^2} \vee \log k \right) 
	\le \frac{k^{-M^2+1}}{\PBall} + \frac{k\log k}{n\PBall} 
	\le
	\frac{2k^{-M^2+12}}{\PBall} \,.
\end{align*}
}
\section{Auxiliary technical lemmas}
\label{sec:aux}

The following are tail bounds for some norms of a Gaussian random vector:
\begin{lemma}
	\label{lem:gaussian-tail-bounds}
	Let $\bZ\sim \m{N}(\0,\bI_k)$. Then
	\begin{enumerate}[label=(\roman*)]
		\item $\ell_2$:
		\begin{equation*}
			\begin{split}
				\Pr\left( \|\bZ\|_2^2 \ge k + 2\sqrt{kx} + 2x \right) \le e^{-x} \,, \quad
				\Pr\left( \|\bZ\|_2^2 \le k - 2\sqrt{kx}  \right) \le e^{-x} \,.
			\end{split}
		\end{equation*}
	\label{item:z2}
		\item $\ell_\infty$:
		\[
		\Pr\left( \|\bZ\|_{\infty} \ge \sqrt{2\log k} + x \right) \le 2e^{-\frac12 x^2} \,.
		\]
		\label{item:zInf}
	\end{enumerate}
\end{lemma}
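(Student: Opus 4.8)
The plan is to handle the two items separately, both via elementary Chernoff-type estimates; neither requires anything beyond standard facts about $\chi^2$ distributions and the Gaussian tail, so I do not expect a genuine obstacle here (at most, some routine inequality-chasing in part~(i), which can be bypassed entirely by citing Laurent--Massart).

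For item~\ref{item:z2}, I would first observe that $\|\bZ\|_2^2 = \sum_{i=1}^k Z_i^2$ is distributed as $\chi^2_k$, and recall the moment generating function $\Expt\,e^{\lambda Z_i^2} = (1-2\lambda)^{-1/2}$, valid for $\lambda<\tfrac12$, hence $\Expt\,e^{\lambda\|\bZ\|_2^2} = (1-2\lambda)^{-k/2}$. For the upper tail, apply Markov's inequality to $e^{\lambda\|\bZ\|_2^2}$ with $\lambda\in(0,\tfrac12)$: $\Pr(\|\bZ\|_2^2\ge t)\le (1-2\lambda)^{-k/2}e^{-\lambda t}$. Taking logarithms, optimizing over $\lambda$, and using the elementary bound $-\tfrac12\log(1-2\lambda)\le \lambda + \tfrac{\lambda^2}{1-2\lambda}$ leads, after setting $t = k + 2\sqrt{kx} + 2x$, to the claimed $e^{-x}$. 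The lower-tail bound is obtained the same way with $\lambda<0$: $\Pr(\|\bZ\|_2^2\le t)\le(1-2\lambda)^{-k/2}e^{-\lambda t}$, optimize, and use $-\tfrac12\log(1-2\lambda)\ge -\lambda+\lambda^2$ for $\lambda<0$ to get $\Pr(\|\bZ\|_2^2\le k - 2\sqrt{kx})\le e^{-x}$. Alternatively, I would simply cite the Laurent--Massart concentration inequalities for $\chi^2$, which give exactly these two bounds.

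For item~\ref{item:zInf}, a union bound suffices. Write $\|\bZ\|_\infty = \max_{1\le i\le k}|Z_i|$ with each $Z_i\sim\m{N}(0,1)$, and use the standard Gaussian tail $\Pr(|Z_i|\ge t)\le 2e^{-t^2/2}$ to get $\Pr(\|\bZ\|_\infty\ge t)\le 2k\,e^{-t^2/2}$. Setting $t=\sqrt{2\log k}+x$ for $x\ge 0$ and expanding, $t^2/2 = \log k + x\sqrt{2\log k} + x^2/2 \ge \log k + x^2/2$, so $2k\,e^{-t^2/2}\le 2k\cdot k^{-1}e^{-x^2/2}=2e^{-x^2/2}$, which is the claim. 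The only point to note is that $x\ge 0$ is implicitly assumed (for $x<0$ the bound is either vacuous or trivially subsumed), so no further care is needed.
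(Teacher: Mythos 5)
Your proposal is correct in substance and, for item (i), essentially coincides with the paper, which simply cites the Laurent--Massart inequality (your Chernoff derivation is exactly their proof). One small slip in your sketch of the lower tail: you need an \emph{upper} bound on the log-moment-generating function $-\tfrac12\log(1-2\lambda)$ for $\lambda<0$, and the correct elementary inequality is $-\tfrac12\log(1-2\lambda)\le \lambda+\lambda^2$ (equivalently $\log(1+u)\ge u-u^2/2$ for $u\ge 0$); the inequality you wrote, with "$\ge -\lambda+\lambda^2$", points the wrong way and is in fact false for $\lambda<0$. This is cosmetic and does not affect the argument, and your fallback of citing Laurent--Massart is exactly what the paper does. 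For item (ii) you take a genuinely different route: the paper invokes the Borell--TIS / Gaussian Lipschitz concentration inequality together with $\Expt[\max_i Z_i]\le\sqrt{2\log k}$, whereas you use a plain union bound $\Pr(\|\bZ\|_\infty\ge t)\le 2k\,e^{-t^2/2}$ and the expansion $t^2/2\ge\log k+x^2/2$ for $t=\sqrt{2\log k}+x$, $x\ge 0$. Your argument is more elementary and self-contained and yields the identical constant $2e^{-x^2/2}$; the paper's route via Lipschitz concentration is more robust (it would survive replacing the max of coordinates by any $1$-Lipschitz functional) but is overkill for this statement. Both are valid.
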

\begin{proof}
	\ref{item:z2} is the well-known inequality of Laurent and Massart \cite[Lemma 1]{laurent2000adaptive}.
	\ref{item:zInf} is a special case of the Borell-TIS inequality; alternatively, it follows from the Gaussian Lipschitz concentration inequality, e.g. \cite[Theorem 5.6]{boucheron2013concentration}, with the elementary bound ${\Expt[\max_{1\le i \le k}Z_i] \le \sqrt{2\log k}}$.
\end{proof}

The following is an immediate corollary:
\begin{lemma}[$\ell_\infty$ bound for a uniform vector in $\SphereK$]
	\label{lem:unif-inf}
	Suppose that $\bu\sim \Unif(\SphereK)$. There are absolute constants $C,c$ such that 
	\[
	\Pr\left( \|\bu\|_\infty \ge c\sqrt{\frac{\log k}{k}} \right)   \le Ck^{-10} \,.
	\]
\end{lemma}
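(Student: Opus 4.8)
The plan is to use the classical representation of a uniform point on the sphere as a normalized Gaussian vector and then feed it into the tail bounds of Lemma~\ref{lem:gaussian-tail-bounds}. Write $\bu \overset{d}{=} \bZ/\|\bZ\|_2$ with $\bZ\sim \m{N}(\0,\Id_k)$, so that $\|\bu\|_\infty = \|\bZ\|_\infty/\|\bZ\|_2$. It therefore suffices to show that, with probability $1-O(k^{-10})$, the numerator is $O(\sqrt{\log k})$ while the denominator is $\Omega(\sqrt{k})$; the ratio of these bounds is exactly of order $\sqrt{\log k/k}$.

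For the numerator, apply Lemma~\ref{lem:gaussian-tail-bounds}\ref{item:zInf} with $x=\sqrt{22\log k}$: then $\|\bZ\|_\infty \le \sqrt{2\log k}+\sqrt{22\log k}\le 6\sqrt{\log k}$ outside an event of probability $\le 2e^{-11\log k}=2k^{-11}$. For the denominator, apply the lower-tail bound in Lemma~\ref{lem:gaussian-tail-bounds}\ref{item:z2} with $x=11\log k$: then $\|\bZ\|_2^2 \ge k - 2\sqrt{11k\log k}$ outside an event of probability $\le k^{-11}$, and for all $k$ at least some absolute threshold $k_0$ the right-hand side is $\ge k/2$, so $\|\bZ\|_2\ge \sqrt{k/2}$.

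Intersecting the two good events — which together have probability $\ge 1-3k^{-11}$ — yields $\|\bu\|_\infty \le 6\sqrt{\log k}/\sqrt{k/2} = 6\sqrt2\,\sqrt{\log k/k}$, which is the claim with $c=6\sqrt2$ and $C=3$ for $k\ge k_0$. For the finitely many $k<k_0$ one simply enlarges $C$ (and, if needed, $c$) using the trivial bound $\|\bu\|_\infty\le 1$, so that the stated inequality holds vacuously. There is no genuine obstacle here; the only point needing any care is choosing the deviation parameters in Lemma~\ref{lem:gaussian-tail-bounds} to scale like $\log k$ so that the union bound still leaves $O(k^{-10})$, together with the routine handling of small $k$ by constants.
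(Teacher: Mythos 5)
Your proof is correct and follows essentially the same route as the paper: write $\bu \overset{d}{=} \bZ/\|\bZ\|_2$ for Gaussian $\bZ$, lower-bound $\|\bZ\|_2$ and upper-bound $\|\bZ\|_\infty$ via Lemma~\ref{lem:gaussian-tail-bounds}, and union-bound. (A trivial nit: $\sqrt{2}+\sqrt{22}\approx 6.1>6$, so your numerator bound should read, say, $7\sqrt{\log k}$; this only changes the absolute constant $c$ and does not affect the argument.)
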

\begin{proof}
	Let $\bZ\sim \m{N}(0,\bI_k)$, so that $\bu \overset{d}{=} \bZ/\|\bZ\|$. {Choosing $c$ large enough and using Lemma~\ref{lem:gaussian-tail-bounds},
	\[
	\Pr\left( \|\bu\|_\infty \ge c\sqrt{\frac{\log k}{k}} \right) \le \Pr\left(\|\bZ\|\le \frac12 \sqrt{k}\right) + \Pr\left( \|\bZ\|_\infty \ge \frac12 c\sqrt{\log k}\right) \le e^{-\Omega(k)} + O(k^{-10})= O(k^{-10}) \,.
	\]
}
\end{proof}

Next is Chernoff's inequality for Bernoulli random variables \cite[Theorem 2.3.1, Exercise 2.3.2]{vershynin2018high}:
\begin{lemma}[Chernoff's inequality]\label{lem:chernoff}
	Let $X_1,\ldots,X_n$ be independent Bernoulli random variables. Set ${S_n=\sum_{i=1}^n X_i}$ and ${\mu=\Expt\left[S_n\right]}$. Then for all $\alpha>1$,
	\begin{equation*}
		\begin{split}
			\Pr\left( S_n\ge \alpha \mu \right) \le e^{-\mu} (\alpha/e)^{-\alpha\mu},\quad \Pr\left( S_n\le \mu/\alpha \right) \le  (e\alpha)^{\mu/\alpha} e^{-\mu } \,.
		\end{split}
	\end{equation*}
	In particular, there are some absolute constants $c,C$ such that
	\[
	\Pr\left( S_n\ge C \mu \right)\le e^{-c\mu},\quad \Pr\left( S_n\le \mu/C \right) \le e^{-c\mu} \,.
	\]
\end{lemma}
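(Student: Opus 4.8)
The plan is to apply the standard exponential-moment (Chernoff) method to each tail separately. For the upper tail, I would fix a parameter $t>0$ and use Markov's inequality on $e^{tS_n}$, so that $\Pr(S_n\ge \alpha\mu)\le e^{-t\alpha\mu}\,\Expt[e^{tS_n}]$. By independence $\Expt[e^{tS_n}]=\prod_{i=1}^n\Expt[e^{tX_i}]$, and writing $p_i=\Pr(X_i=1)$ one has $\Expt[e^{tX_i}]=1+p_i(e^t-1)\le \exp\!\big(p_i(e^t-1)\big)$ by the elementary bound $1+x\le e^x$; hence $\Expt[e^{tS_n}]\le \exp\!\big(\mu(e^t-1)\big)$ with $\mu=\sum_i p_i$. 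This gives $\Pr(S_n\ge\alpha\mu)\le \exp\!\big(\mu(e^t-1-\alpha t)\big)$, and I would then minimize the exponent over $t>0$: the optimum is $t=\ln\alpha$ (admissible precisely because $\alpha>1$), and substituting yields $\exp\!\big(\mu(\alpha-1-\alpha\ln\alpha)\big)=e^{-\mu}(\alpha/e)^{-\alpha\mu}$, the claimed bound.

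For the lower tail the argument is mirror-image: fix $t>0$, apply Markov to $e^{-tS_n}$ to get $\Pr(S_n\le\mu/\alpha)\le e^{t\mu/\alpha}\Expt[e^{-tS_n}]$, bound $\Expt[e^{-tX_i}]=1+p_i(e^{-t}-1)\le\exp\!\big(p_i(e^{-t}-1)\big)$, so that $\Pr(S_n\le\mu/\alpha)\le\exp\!\big(\mu(e^{-t}-1+t/\alpha)\big)$; optimizing over $t>0$ again gives $t=\ln\alpha>0$ and the bound $\exp\!\big(\mu(1/\alpha-1+(\ln\alpha)/\alpha)\big)=e^{-\mu}(e\alpha)^{\mu/\alpha}$.

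Finally, the ``in particular'' clause follows by specializing $\alpha$ to an absolute constant. Taking $\alpha=e^2$ in the upper bound gives $\Pr(S_n\ge e^2\mu)\le e^{-\mu}e^{-e^2\mu}=e^{-(1+e^2)\mu}$, and in the lower bound gives exponent $-\mu\big(1-\tfrac{1+\ln\alpha}{\alpha}\big)$, which at $\alpha=e^2$ is at most $-\mu/2$ since $(1+2)/e^2<1/2$; so one may take $C=e^2$ and $c=1/2$. There is no real obstacle here — this is textbook material — the only points needing a line of care are that the optimizing $t=\ln\alpha$ lies in the allowed range $t>0$ (guaranteed by $\alpha>1$) and that $1+x\le e^x$ is invoked in the correct direction in each of the two estimates.
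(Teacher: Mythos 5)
Your proof is correct and is the standard exponential-moment derivation; the paper itself gives no proof, simply citing \cite[Theorem 2.3.1, Exercise 2.3.2]{vershynin2018high}, where exactly this argument appears. The optimization at $t=\ln\alpha$, the identification of the resulting exponents with the stated bounds, and the choice $C=e^2$, $c=1/2$ in the final clause all check out.
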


We recall some properties of the sub-Gaussian and sub-exponential norms. 
The following is taken from \cite[Chapter 2]{vershynin2018high}:

\begin{definition}[Orlicz norm]\label{def:orlicz-norm}
	Let $\psi:[0,\infty)\to[0,\infty)$ be convex, strictly increasing such that
	\[
	\psi(0)=0,\quad \psi(x)\to\infty \,\textrm{ as }\,x\to \infty \,.
	\]
	For a random variable $X$, define its {\it $\psi$-Orlicz norm} by
	\begin{equation}
		\label{eq:orlicz-def}
		\|X\|_{\psi} = \inf \left\{ t\,:\,\Expt \psi\left( |X|/t \right) \le 1 \right\}\,.
	\end{equation}
	For a random vector $\bX$, its $\psi$-Orlicz norm is $\|\bX\|_{\psi}=\sup_{\bv\in\SphereK}\|\langle \bX,\bv\rangle \|_{\psi}$.
\end{definition}

It is not hard to show that $\|\cdot \|_{\psi}$ is indeed a norm. The choices 
\[
\psi_2(x) = e^{x^2}-1,\quad \psi_1(x)=e^x-1
\]
correspond to the sub-Gaussian and sub-exponential norms respectively. $X$ is sub-Gaussian (resp. sub-exponential) in the ``usual'' sense if and only if $\|X\|_{\psi_2}<\infty$ (resp. $\|X\|_{\psi_1}<\infty$); see \cite[Chapter 2]{vershynin2018high} for more background. We briefly mention some properties of these norms that are used in the paper:
\begin{lemma}\label{lem:orlicz-props}
	The following holds:
	\begin{enumerate}
		\item \label{item:orlicz1} $\|X^2\|_{\psi_1}=\|X\|_{\psi_2}^2$.
		\item \label{item:orlicz2} $\|XY\|_{\psi_1}\le \|X\|_{\psi_2}\|Y\|_{\psi_2}$ ($X,Y$ {\bf do not} need to be independent).
		\item \label{item:orlicz3} {\it Centralization lemma}: $\|X-\Expt[X]\|_{\psi_i} \le \|X\|_{\psi_i}$ for $i=1,2$. 
		\item \label{item:orlicz4} For independent $X_1,\ldots,X_n$: $\left\| \sum_{i=1}^n X_i \right\|_{\psi_2}^2 \le C\sum_{i=1}^n \|X_i\|_{\psi_2}^2$, for some $C>0$ universal.
		\item \label{item:orlicz5} {\it Hoeffding's lemma}: for a bounded random variable, $\|\bX\|_{\psi_2}\le C\|\bX\|_{\infty}$. 
	\end{enumerate}
\end{lemma}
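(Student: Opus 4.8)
The plan is to verify the five items one at a time, directly from the definition~(\ref{eq:orlicz-def}) of the Orlicz norm; all of them are standard facts about the sub-Gaussian and sub-exponential norms (cf.~\cite[Chapter 2]{vershynin2018high}), so the work is entirely elementary once the definition is unfolded.

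For item~\ref{item:orlicz1} I would observe that $\Expt\psi_1(X^2/s)=\Expt e^{X^2/s}-1$ while $\Expt\psi_2(|X|/t)=\Expt e^{X^2/t^2}-1$, so the change of variables $s=t^2$ identifies the two constraint sets appearing in~(\ref{eq:orlicz-def}); taking infima (and using that $t\mapsto t^2$ is increasing on $(0,\infty)$) gives $\|X^2\|_{\psi_1}=\|X\|_{\psi_2}^2$. For item~\ref{item:orlicz2} I would set $a=\|X\|_{\psi_2}$, $b=\|Y\|_{\psi_2}$, use the arithmetic--geometric-mean bound $|XY|/(ab)\le\tfrac12(X^2/a^2+Y^2/b^2)$ and then convexity of $u\mapsto e^u$ to get $e^{|XY|/(ab)}\le\tfrac12\big(e^{X^2/a^2}+e^{Y^2/b^2}\big)$; taking expectations and using $\Expt e^{X^2/a^2}\le2$, $\Expt e^{Y^2/b^2}\le2$ yields $\Expt e^{|XY|/(ab)}\le2$, hence $\|XY\|_{\psi_1}\le ab$, with independence nowhere used. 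For item~\ref{item:orlicz3} I would combine the triangle inequality, $\|X-\Expt X\|_{\psi_i}\le\|X\|_{\psi_i}+|\Expt X|\,\|1\|_{\psi_i}$, with the bound $|\Expt X|\le\Expt|X|\lesssim\|X\|_{\psi_i}$, which follows from Jensen's inequality applied to the convex function $\psi_i$ together with $\Expt\psi_i(|X|/\|X\|_{\psi_i})\le1$; the resulting numerical factors ($\|1\|_{\psi_i}$ and $\psi_i^{-1}(1)$ are universal constants) are then absorbed.

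For item~\ref{item:orlicz5} I would simply note that $|X|\le\|X\|_\infty$ almost surely gives $\Expt e^{X^2/t^2}\le e^{\|X\|_\infty^2/t^2}$, which is $\le2$ as soon as $t^2=\|X\|_\infty^2/\log2$, so $\|X\|_{\psi_2}\le\|X\|_\infty/\sqrt{\log2}$. Item~\ref{item:orlicz4} is the one that takes a little more machinery: one invokes the standard two-sided equivalence (with universal constants) between a finite $\psi_2$-norm and sub-Gaussian growth of the moment generating function, so that for mean-zero $X_i$ one has $\Expt e^{\lambda X_i}\le e^{C\lambda^2\|X_i\|_{\psi_2}^2}$ for every $\lambda$; by independence $\Expt e^{\lambda\sum_i X_i}=\prod_i\Expt e^{\lambda X_i}\le e^{C\lambda^2\sum_i\|X_i\|_{\psi_2}^2}$, and translating this MGF bound back into an Orlicz-norm bound gives $\|\sum_i X_i\|_{\psi_2}^2\lesssim\sum_i\|X_i\|_{\psi_2}^2$. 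The one place any care is needed --- and the only part that is not purely cosmetic --- is this last item: the passage between the $\psi_2$-norm and MGF control carries universal constants that must be tracked in both directions, and the inequality is to be read for centered summands (equivalently, invoked after centering, which is how it enters the covariance-estimation bounds), since otherwise the means could accumulate and the claimed inequality would fail. All the remaining items are immediate from the definition.
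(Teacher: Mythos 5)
Your proof is correct and follows the standard arguments, which is exactly what the paper does: Lemma~\ref{lem:orlicz-props} is given without proof as a citation of \cite[Chapter 2]{vershynin2018high}, and your derivations of Items~\ref{item:orlicz1}, \ref{item:orlicz2} and \ref{item:orlicz5} directly from Definition~\ref{def:orlicz-norm}, and of Items~\ref{item:orlicz3} and \ref{item:orlicz4} via Jensen's inequality and MGF bounds respectively, are the textbook ones. Your two caveats are both well taken and worth recording: Item~\ref{item:orlicz4} does require the $X_i$ to be mean-zero (as in \cite[Proposition 2.6.1]{vershynin2018high}; otherwise the means accumulate and the claimed bound fails, e.g.\ for deterministic $X_i=1$), and Item~\ref{item:orlicz3} holds with a universal constant $C$ rather than constant $1$ as literally written, which is all the paper ever uses since every application is up to $\lesssim$.
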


The following is Bernstein's inequality for sums of independent sub-exponential random variables \cite[Theorem 2.8.1]{vershynin2018high}:
\begin{lemma}[Bernstein's inequality]\label{lem:bernstein}
	Let $X_1,\ldots,X_n$ be independent and sub-exponential. Set ${S_n=\sum_{i=1}^n X_i}$. Then for all $t\ge 0$,
	\[
	\Pr\left( \left| S_n-\Expt[S_n] \right| \ge t \right) \le 2\exp\left[ -c\min\left( \frac{t^2}{\sum_{i=1}^n \|X_i\|_{\psi_1}^2}, \frac{t}{\max_{1\le i\le n}\|X_i\|_{\psi_1}} \right) \right]\,,
	\]
	where $c>0$ is an absolute constant.
\end{lemma}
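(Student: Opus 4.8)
The statement is the classical Bernstein inequality for sums of independent sub-exponential variables, and the plan is the standard Cram\'er--Chernoff (moment generating function) argument. First I would reduce to the centered, one-sided case: replacing each $X_i$ by $X_i-\Expt[X_i]$ does not increase the sub-exponential norms by the centralization property (Lemma~\ref{lem:orlicz-props}, Item~\ref{item:orlicz3}), so I may assume $\Expt[X_i]=0$; and it suffices to bound $\Pr(S_n\ge t)$, since applying the same bound to $-S_n=\sum_i(-X_i)$ and taking a union bound recovers the two-sided tail and produces the factor $2$. Write $K_i:=\|X_i\|_{\psi_1}$, $\sigma^2:=\sum_{i=1}^n K_i^2$, $K_*:=\max_i K_i$.

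The one substantive ingredient is a single-variable MGF bound: there are universal $c_0,C_0>0$ such that any mean-zero $X$ with $\|X\|_{\psi_1}=K$ satisfies
\[
\Expt\big[e^{\lambda X}\big]\le e^{C_0\lambda^2K^2}\qquad\text{for all }|\lambda|\le c_0/K .
\]
To get this I would first translate the $\psi_1$-norm into polynomial moment growth: unwinding Definition~\ref{def:orlicz-norm} with $\psi_1(x)=e^x-1$ (expand the exponential term by term) gives $\Expt[|X|^p]^{1/p}\le C_1 K p$ for every $p\ge 1$. Then expand $e^{\lambda X}$, drop the linear term using $\Expt[X]=0$, bound $\Expt[X^p]$ as above, use $p!\ge(p/e)^p$, and sum the geometric series $\sum_{p\ge2}(e C_1|\lambda|K)^p$; this is $\le 1+2(eC_1|\lambda|K)^2\le e^{C_0\lambda^2K^2}$ once $eC_1|\lambda|K\le \tfrac12$, i.e.\ $|\lambda|\le c_0/K$.

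The remainder is bookkeeping. For $0<\lambda\le c_0/K_*$, independence and the MGF bound give
\[
\Pr(S_n\ge t)\le e^{-\lambda t}\prod_{i=1}^n\Expt\big[e^{\lambda X_i}\big]\le \exp\!\big(-\lambda t+C_0\lambda^2\sigma^2\big).
\]
I would then take $\lambda=\min\{t/(2C_0\sigma^2),\ c_0/K_*\}$ and consider two cases. If $\lambda=t/(2C_0\sigma^2)$, the exponent is $\le -t^2/(4C_0\sigma^2)$. Otherwise $\lambda=c_0/K_*\le t/(2C_0\sigma^2)$, so $C_0\lambda^2\sigma^2\le\tfrac12\lambda t$ and the exponent is $\le-\tfrac12\lambda t=-c_0 t/(2K_*)$. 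Hence $\Pr(S_n\ge t)\le \exp\!\big(-c\min\{t^2/\sigma^2,\,t/K_*\}\big)$ for a universal $c>0$, and the symmetric argument for $-S_n$ finishes the proof.

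The only step with any content is the single-variable MGF bound, which rests on the equivalence between the Orlicz $\psi_1$-norm and polynomial moment growth; all the rest is the routine Chernoff optimization. (This is precisely the route taken in \cite[Theorem~2.8.1]{vershynin2018high}, from which the paper quotes the lemma.)
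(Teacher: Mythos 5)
Your proof is correct and is precisely the standard Chernoff/MGF argument of \cite[Theorem 2.8.1]{vershynin2018high}, which is exactly the source the paper cites for this lemma without reproving it. (The only cosmetic point: the centralization step costs a universal constant factor on the $\psi_1$-norm in most formulations, but this is harmlessly absorbed into $c$.)
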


Lastly, we cite a concentration inequality for sample covariance matrices with sub-Gaussian measurements \cite[Theorem 4.6.1]{vershynin2018high}:
\begin{lemma}\label{lem:vershynin}
	Let $\bx_1,\ldots,\bx_n\in \RR^k$ be independent, centered and sub-Gaussian. Denote $K=\max_{1\le i\le n}\|\bx_i\|_{\psi_2}$.
	
	Let $\hbSigma=\frac1n \sum_{i=1}^n\bx_i\bx_i^\T$ be the sample covariance. There is $C>0$ such that with probability at least $1-2e^{-t^2}$, 
	\[
	\left\| \hbSigma - \Expt\left[\hbSigma\right] \right\| \le K^2\max\{\delta,\delta^2\},\quad \textrm{where } \delta=C\left( \sqrt{\frac{k}{n}} + \frac{t}{\sqrt{n}} \right)\,.
	\]
	(Note that in \cite[Theorem 4.6.1]{vershynin2018high}, the result is stated for isotropic vectors, meaning $\Cov(\bx_i)=\bI$. However, the proof goes through, verbatim, also without this assumption.)
\end{lemma}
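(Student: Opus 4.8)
This is a standard sample-covariance concentration bound, and the plan is to reproduce the net-plus-Bernstein argument of \cite[Theorem 4.6.1]{vershynin2018high}, checking along the way that the isotropy hypothesis stated there is never actually used. First I would reduce the operator norm to a maximum over a finite net: since $\bA := \hbSigma - \Expt[\hbSigma]$ is symmetric, one has $\|\bA\| \le 2\max_{\bv\in\m{N}} |\langle \bA\bv,\bv\rangle|$ for any $1/4$-net $\m{N}$ of $\SphereK$, and by \cite[Corollary 4.2.13]{vershynin2018high} such a net may be taken of size $|\m{N}|\le 9^k$. Thus it suffices to control $\langle \bA\bv,\bv\rangle$ for each fixed $\bv$ in the net and then union bound.

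Next, for a fixed $\bv\in\m{N}$, write $\langle \bA\bv,\bv\rangle = \frac1n\sum_{i=1}^n\bigl(\langle\bx_i,\bv\rangle^2 - \Expt\langle\bx_i,\bv\rangle^2\bigr)$. Since $\|\langle\bx_i,\bv\rangle\|_{\psi_2}\le K$, Lemma~\ref{lem:orlicz-props} (Items~\ref{item:orlicz1} and~\ref{item:orlicz3}) shows the centered summands are sub-exponential with $\psi_1$-norm $\lesssim K^2$, so Bernstein's inequality (Lemma~\ref{lem:bernstein}) gives, for every $u\ge 0$,
\begin{equation*}
\Pr\left( \bigl|\langle \bA\bv,\bv\rangle\bigr| \ge K^2 u \right) \le 2\exp\bigl[-cn\min(u^2,u)\bigr]\,.
\end{equation*}
A union bound over the $9^k$ net points, together with the choice $u=\delta/2$ (in the regime $\delta\le 1$) or $u=\delta^2/2$ (in the regime $\delta\ge 1$), where $\delta = C(\sqrt{k/n}+t/\sqrt n)$ with $C$ large enough, makes the exponent dominate the factor $\log(9^k)=k\log 9$ and leaves failure probability $\le 2e^{-t^2}$; undoing the net reduction yields $\|\bA\|\le 2K^2 u \le K^2\max\{\delta,\delta^2\}$, which is the claim.

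I do not expect a genuine obstacle here; the only points requiring any care are (i) splitting Bernstein's bound into the quadratic ($\delta\le 1$) and linear ($\delta\ge 1$) regimes, which is exactly why the statement carries $\max\{\delta,\delta^2\}$ rather than $\delta$, and (ii) justifying the parenthetical remark that $\Cov(\bx_i)=\bI$ is inessential — indeed, every step above refers only to the matrix $\Expt[\hbSigma]$ and never to its value, so the proof of \cite[Theorem 4.6.1]{vershynin2018high} goes through verbatim with $\Expt[\hbSigma]$ in place of $\bI$.
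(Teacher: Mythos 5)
Your argument is correct and is exactly the standard net-plus-Bernstein proof of \cite[Theorem 4.6.1]{vershynin2018high} that the paper cites; the paper offers no independent proof of this lemma, only the citation and the parenthetical remark about isotropy, which your last observation (that the argument only ever references $\Expt[\hbSigma]$ as a centering term, never its value) correctly justifies.
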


%
%
\end{document}